\newtheorem{Thm}{Theorem}[section]
\newtheorem{Prop}[Thm]{Proposition}
\newtheorem{Lem}[Thm]{Lemma}
\newtheorem{example}{Example}
\newtheorem{Rem}{Remark}
\newcommand{\tr}{{\rm Tr}}
\newcommand{\F}{{\mathbb F}}
\newcommand{\Rm}[1]{\uppercase\expandafter{\romannumeral #1}}
\newcommand{\com}[1]{\iffalse #1 \fi}
\newcommand{\figcaption}{\def\@captype{figure}\caption}
\newcommand{\tabcaption}{\def\@captype{table}\caption}
\begin{document}

    \title{Balanced Boolean functions with few-valued Walsh spectra parameterized by $P(x^2+x)$}
    \author{
    {Qiancheng Zhang, Kangquan Li, Longjiang Qu}
    \thanks{Qiancheng Zhang, Kangquan Li, and Longjiang Qu are with the College of Science, National University of Defense Technology, Changsha, 410073, China. E-mail:  zhangqiancheng20@nudt.edu.cn, likangquan11@nudt.edu.cn, ljqu\_happy@hotmail.com. This work is supported by the National Key Research and Development Program of China under Grant 2024YFA1013000, and the National Natural Science Foundation of China (NSFC) under Grant 62202476. {\emph{(Corresponding author: Kangquan Li)}}}
    }
    
    \maketitle{}

    \begin{abstract} 
    Boolean functions with few-valued spectra have wide applications in cryptography, coding theory, sequence designs, etc. In this paper, we further study the parametric construction approach to obtain balanced Boolean functions using $2$-to-$1$ mappings of the form  $P(x^2+x)$, where $P$ denotes carefully selected permutation polynomials. The key contributions of this work are twofold: (1) We establish a new family of four-valued spectrum Boolean functions. This family includes Boolean functions with good cryptographic properties, e.g., the same nonlinearity as semi-bent functions, the maximal algebraic degree, and the optimal algebraic immunity for dimensions $n \leq 14$. (2) We derive seven distinct classes of plateaued functions, including four infinite families of semi-bent functions and a class of near-bent functions.
    \end{abstract}

    \begin{IEEEkeywords}
        $2$-to-$1$ mapping, permutation, parameterization, four-valued spectrum function, plateaued function
    \end{IEEEkeywords}

    \section{Introduction}
    The Walsh transform of Boolean functions is a fundamental tool to characterize important information about cryptography. Recently, a significant demand for Boolean functions with few-valued Walsh spectra has emerged, hereinafter called few-valued spectrum functions, see \cite{li2020class,xu2017several,li2013walsh,luo2008weight,hodvzic2020characterization,su2022constructions,maitra2002cryptographically}. This arises because a few-valued Walsh spectrum usually admits various desirable properties to resist fast correlation attacks \cite{meier1988fast} and linear cryptanalysis \cite{matsui1993linear}. Moreover, few-valued spectrum functions are also closely related to cyclic codes \cite{luo2008cyclic,li2022class} and cross-correlations of $m$-sequences \cite{helleseth2005new,dobbertin2006niho,canteaut2000binary,luo2016binary,niho1972multi}. 

    All two-valued spectrum Boolean functions have been comprehensively determined as affine functions, bent functions, and their modification at zero by Tu et al. \cite{tu2011boolean}. Bent functions, firstly introduced by Rothaus in \cite{rothaus1976bent}, achieve the maximum nonlinearity to resist linear attacks. The inherent unbalancedness, however, limits their direct application in cryptography. New classes of Boolean functions that allow both good nonlinearity and balancedness are needed. Later, Chee et al. \cite{chee1995semi} proposed the semi-bent functions as a natural extension of bent functions, Zheng and Zhang further generalized them as the plateaued functions in \cite{zheng1999plateaued}.

    The so-called $r$-plateaued functions are three-valued spectrum functions whose Walsh spectra take values in $\{0,\pm 2^{\frac{n+r}{2}}\}$ for some fixed $0< r \leq n$. The subclass of $1$-plateaued is the near-bent function, and $2$-plateaued is the semi-bent function. A robust set of properties is exhibited in plateaued functions, such as high nonlinearity and good autocorrelation characteristics, making them ideal in cryptography \cite{mesnager2014semi}. Ongoing work has been done on their characterization and design over the past twenty years \cite{hodvzic2019designing,hodvzic2020general,carlet2003plateaued,mesnager2014semi,carlet2015boolean,hodvzic2019generic,cusick2017highly,logachev2010recursive,cao2016further}. In \cite{carlet2003plateaued}, Carlet investigated a characterization of plateaued functions by extending the notion of covering sequence. More results were given in \cite{carlet2015boolean} by their derivatives, autocorrelation functions, and the power moments of the Walsh transform, while this method still deserves further work. Alternative primary constructions of plateaued functions are mainly achieved by specifying algebraic normal form \cite{carlet2003plateaued,cusick2017highly,logachev2010recursive}, or the reversed design to specify values in the spectral domain \cite{hodvzic2019designing}. Despite yielding significant positive outcomes, previous methods seem to lack simplicity and require a substantial amount of detail. 

    In contrast to plateaued functions, the known four-valued spectrum functions are relatively scarce, and existing approaches mainly derive two categories. One features a spectral set of $\{0,\pm a,b\}$, and the other takes values in $\{\pm a,\pm b\}$, where the absolute values of nonzero integers $a$ and $b$ differ. The first category comprises power functions in trace form with Niho exponents \cite{helleseth2005new,xie2009class,dobbertin1998one,li2013walsh,niho1972multi,dobbertin2006niho}, while the second category is derived from bent functions either by complementing values at two specific points \cite{sun2015boolean} or via the indirect sum approach \cite{sun2017several}.  

    Very recently, motivated by \cite{carlet2024Parameterization}, we \cite{qu2025parametric} delved into the parametric construction approach of balanced Boolean functions, where their supports coincide with the image sets of $2$-to-$1$ mappings over $\F_{2^n}$. In particular, the relationship between $2$-to-$1$ mappings, especially ones of the form $P(x^2+x)$, and the Walsh transform of the corresponding balanced Boolean functions was established. In this paper, we further study this approach to construct balanced Boolean functions with few-valued spectra. It should be noted that various sparse permutation polynomials (see e.g. \cite{li2019survey}) constructed by researchers over the past decade can all be effectively employed in our proposed construction framework.
    This approach enables us to start with a relatively simple permutation and get a complicated Boolean function distinct from known classes. In addition, the cryptographic criteria of the new function can be linked to properties of the permutation. This allows the parallel utilization of certain techniques for studying permutations to calculate those criteria, thereby providing great convenience. The main contributions of this paper include: (\Rm{1}) a novel class of four-valued spectrum functions from the permutation of the form $cx+\tr_m^{2m}(x^{\frac{1}{5}(2^m-1)+1})$, which includes Boolean functions with good cryptographic properties, e.g., the same high nonlinearity as semi-bent functions, the upper bound of algebraic degree, and the optimal algebraic immunity for dimensions $n \leq 14$; (\Rm{2}) a class of plateaued functions from permutations of the form $x^{3 \cdot 2^m }+ax^{2\cdot 2^m+1}+bx^{2^m+2}+x^3$, including a subclass of semi-bent functions; (\Rm{3}) three classes of plateaued functions from the so-called Dembowski-Ostrom (DO) permutations, including a subclass of near-bent functions; and (\Rm{4}) three classes of plateaued functions from permutations of the form $cx+\tr_{m}^{km}(x^s)$ where the binary weight of $s$ equals $2$, yielding three semi-bent subclasses.

    The outline of this paper is as follows. Section \ref{preliminaries} establishes the relationship between $2$-to-$1$ mappings and parameterized Boolean functions through the Walsh transform. In Section \ref{four-valued}, we present a new class of four-valued spectrum functions, as well as its spectral value distribution. In Section \ref{plateaued functions}, several quadratic permutations with sparse terms are employed to construct plateaued functions. Finally, Section \ref{conclusion} concludes this paper.

    \section{Preliminaries}\label{preliminaries}
    The notation used in this paper is summarized below.
    \begin{itemize}
        \item $\F_{2^n}$: the finite field with $2^n$ elements;
        \item $\F_{2^n}^*$: the multiplicative subgroup $\F_{2^n} \backslash \{0\}$;
        \item $\alpha$: a primitive element of $\F_{2^n}$;
        \item $\F_{2^n}[x]$: the polynomial ring over $\F_{2^n}$;
        \item $\#E$: the cardinality of the set $E$;
        \item $E \backslash D$: the set including elements that belong to $E$ but not $D$;
        \item $\bar{x}$: the $2^m$-th power of the element $x \in \F_{2^n}$ with $n=2m$;
        \item $U_m$: the unit circle $\{z \in \F_{2^n} \mid z\bar{z}=z^{2^m+1}=1\}$ with $n=2m$;
        \item $f,\mathcal{NL}(f),\deg(f)$: a Boolean function, along with its nonlinearity, algebraic degree;  
        \item $\tr_m^n(\cdot)$: the trace function from $\F_{2^n}$ to $\F_{2^m}$ with $m|n$ defined as 
        \begin{eqnarray*}
            \tr_m^n(x)~:~\F_{2^n} & \to & \F_{2^m},\\
            x & \mapsto & x+x^{2^m}+\cdots+x^{2^{\left(\frac{n}{m}-1\right)m}}.
        \end{eqnarray*}
        When $m=1$, it becomes the absolute trace function and is denoted by $\tr_n(\cdot)$ for simplicity. 
    \end{itemize}
    
    \subsection{Walsh transform and Weil sums}
    We refer to the mappings $F$ from $\F_{2^n}$ to $\F_{2^m}$ as $(n,m)$-functions, with Boolean functions $f$ specifically denoting those from $\F_{2^n}$ to $\F_2$. 
    Boolean functions possess many representations, a classical one considers them as a univariate polynomial of the form
    \begin{eqnarray}\label{unibool}
        f(x)=\sum_{i=0}^{2^n-1}{a_ix^i},
    \end{eqnarray}
    where $a_0,a_{2^n-1} \in \F_{2}$, $a_i \in \F_{2^n}$ and $a_{2i \pmod{2^n-1}} = a_i^2$ holds with $1 \leq i <2^n-2$. 
    
    For a Boolean function $f$ over $\F_{2^n}$, its Walsh transform at $\gamma \in \F_{2^n}$ is defined as 
    \[
        W_f(\gamma)=\sum_{x \in \F_{2^n}}{(-1)^{f(x)+\tr_n(\gamma x)}},
    \]
    and the multiset $\{*~W_f(\gamma)~|~\gamma \in \F_{2^n}~*\}$ is called the Walsh spectrum of $f$. Suppose that $f$ is a $t$-valued spectrum function and define 
    $M_i=\# \{\gamma \in \F_{2^n} \mid W_f(\gamma)=v_i\}$ for $0 < i \leq t$. Then we have three pivotal constraints on the Walsh spectrum \cite{li2013walsh}
    \begin{equation}\label{wr}
        \begin{cases*}
            \sum_{i=1}^{t}{M_i}=2^n,\\
            \sum_{i=1}^{t}{M_i v_i}=2^n(-1)^{f(0)},\\
            \sum_{i=1}^{t}{M_i v_i^2}=2^{2n}.
        \end{cases*}
    \end{equation}
 
    For a function $F$ from $\F_{2^n}$ to itself, its Walsh transform at $(a,b) \in \F_{2^n} \times \F_{2^n}$ is given by 
    \[
        W_F(a,b)=\sum_{x \in \F_{2^n}}{(-1)^{\tr_n(b F(x) + a x)}}.
    \]
    A Weil sum over $\F_{2^n}$ is an exponential sum of the form \cite{lidl1997finite}
    \[
        S(F,n)=\sum_{x \in \F_{2^n}}{\chi_n(F(x))}, 
    \]
    where $\chi_n(x)=(-1)^{\tr_n(x)}$ denotes the canonical additive character on $\F_{2^n}$. Then it is obvious that 
    \[
        W_F(a,b)=S(bF(x)+ax,n).
    \]
    
    In terms of a general function, explicitly calculating its Weil sum is often challenging, except for some special cases. Let $t$ be a positive integer. The function $F$ is called a $t$-to-$1$ mapping of $\F_{2^n}$ if it has either $t$ preimages or none for any element in $\F_{2^n}$. We refer to $F$ as a permutation (resp. $2$-to-$1$ mapping) when $t=1$ (resp. $t=2$).
    \begin{Lem}\label{org}\cite{lidl1997finite}
        Let $n$ be a positive integer and $P \in \F_{2^n}[x]$ be a permutation. Then for any $a \in \F_{2^n}$, the following formula holds
        \[
            S(aP(x),n)=
            \begin{cases*}
                0,  \text{~~if~} a \ne 0, \\
                2^{n},  \text{~if~} a=0.
            \end{cases*}
        \]
    \end{Lem}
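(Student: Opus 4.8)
The plan is to evaluate the Weil sum $S(aP(x),n)=\sum_{x\in\F_{2^n}}\chi_n(aP(x))$ by exploiting the bijectivity of $P$. The case $a=0$ is immediate: then $aP(x)=0$ for every $x$, so $\chi_n(0)=1$ and the sum collapses to $\sum_{x\in\F_{2^n}}1=2^n$. The substance is therefore the case $a\neq 0$.

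First I would perform the change of variable $y=P(x)$. Because $P$ is a permutation of $\F_{2^n}$, this map is a bijection, so as $x$ ranges over $\F_{2^n}$ so does $y=P(x)$, each value attained exactly once. Reindexing the sum gives
\[
    S(aP(x),n)=\sum_{x\in\F_{2^n}}\chi_n(aP(x))=\sum_{y\in\F_{2^n}}\chi_n(ay).
\]
This reduces the claim to the standard fact that the additive character $\chi_n$ sums to zero over all of $\F_{2^n}$ when twisted by a nonzero constant.

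Next I would establish that $\sum_{y\in\F_{2^n}}\chi_n(ay)=0$ for $a\neq 0$. Since $a\neq 0$, the map $y\mapsto ay$ is itself a bijection of $\F_{2^n}$, so one may further substitute $z=ay$ and write the sum as $\sum_{z\in\F_{2^n}}\chi_n(z)=\sum_{z\in\F_{2^n}}(-1)^{\tr_n(z)}$. The trace function $\tr_n$ from $\F_{2^n}$ to $\F_2$ is $\F_2$-linear and surjective (it is not the zero map), so its kernel has index $2$; consequently exactly half of the elements of $\F_{2^n}$ satisfy $\tr_n(z)=0$ and the other half satisfy $\tr_n(z)=1$. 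The positive and negative contributions therefore cancel in equal numbers, $2^{n-1}\cdot(+1)+2^{n-1}\cdot(-1)=0$, which yields $S(aP(x),n)=0$.

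The argument is essentially a two-step reduction with no genuine obstacle; if anything, the only point requiring care is the justification that $\tr_n$ takes each value in $\F_2$ on exactly $2^{n-1}$ elements, which rests on the surjectivity and $\F_2$-linearity of the trace. This is a classical property of the trace map over finite fields and may simply be cited from \cite{lidl1997finite}.
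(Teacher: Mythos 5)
Your proposal is correct. The paper itself gives no proof of this lemma---it is quoted directly from \cite{lidl1997finite}---and your argument (reindexing the sum via the bijection $y=P(x)$, then $z=ay$, and concluding by the balancedness of $\chi_n$, i.e.\ the surjectivity and $\F_2$-linearity of $\tr_n$) is exactly the standard proof found in that reference, with no gaps.
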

    \begin{Lem}\cite{carlitz1979explicit}\label{Three_sum}
        Let $n$ be an odd integer. For any $a \in\F_{2^n}$, the following formula holds 
        \[
            S(x^3+ax,n)=
            \begin{cases}
                \left(\frac{2}{n}\right) \chi_n{\left(\theta^3+\theta\right)} 2^{\frac{n+1}{2}}, & ~\text{if}~ x^{4}+x=(a+1)^2~\text{has a solution $\theta \in \F_{2^n}$},\\
                0,&~\text{otherwise},
            \end{cases}
        \]
        where $\left(\frac{2}{n}\right)$ is the Legendre symbol.
    \end{Lem}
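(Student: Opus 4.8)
The plan is to reduce everything to the single base value $S(x^3+x,n)$ and to evaluate that value separately. I would proceed in three stages.

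First, I would obtain the magnitude and the vanishing criterion by squaring. Writing $y=x+u$ and using $(x+u)^3=x^3+ux^2+u^2x+u^3$ in characteristic $2$, the linear terms collapse and one gets
\[
S(x^3+ax,n)^2=\sum_{u\in\F_{2^n}}\chi_n(u^3+au)\sum_{x\in\F_{2^n}}\chi_n(ux^2+u^2x).
\]
Since $x\mapsto ux^2+u^2x$ is $\F_2$-linear, the inner sum is $2^n$ when $\tr_n\big((u^{2^{n-1}}+u^2)x\big)\equiv 0$ and $0$ otherwise; the first case holds iff $u^4+u=0$, i.e.\ $u\in\F_4\cap\F_{2^n}$, which for odd $n$ forces $u\in\F_2$. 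Hence $S^2=2^n(1+\chi_n(a+1))$, so $S=0$ exactly when $\tr_n(a+1)=1$ and $|S|=2^{(n+1)/2}$ otherwise. Because $x\mapsto x^4+x$ is $\F_2$-linear with image $\ker\tr_n$, the equation $x^4+x=(a+1)^2$ is solvable iff $\tr_n((a+1)^2)=\tr_n(a+1)=0$; this matches the two branches and settles the ``otherwise'' case $S=0$.

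Second, assuming $\theta$ solves $x^4+x=(a+1)^2$, I would substitute $x\to x+\theta$ in $S(x^3+x,n)$. Expanding $(x+\theta)^3+(x+\theta)$ and replacing $\tr_n(\theta x^2)$ by $\tr_n(\theta^{2^{n-1}}x)$ absorbs the spurious quadratic term into the linear one, giving
\[
S(x^3+x,n)=\chi_n(\theta^3+\theta)\,S\big(x^3+(\theta^{2^{n-1}}+\theta^2+1)x,\,n\big).
\]
The defining relation yields $(\theta^{2^{n-1}}+\theta^2+1)^2=\theta^4+\theta=(a+1)^2$, and uniqueness of square roots gives $\theta^{2^{n-1}}+\theta^2+1=a$. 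Since $\chi_n(\theta^3+\theta)=\pm1$, this rearranges to $S(x^3+ax,n)=\chi_n(\theta^3+\theta)\,S(x^3+x,n)$, reducing the entire family to the base case $a=1$.

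The remaining step, which I expect to be the genuine obstacle, is the base evaluation $S(x^3+x,n)=\left(\frac2n\right)2^{(n+1)/2}$. The magnitude is already in hand (for $a=1$ we have $(a+1)^2=0$, solved by $\theta=0$, so $S(x^3+x,n)\neq0$); only the sign $\left(\frac2n\right)$ must be pinned down, and this genuinely requires Gauss-sum input rather than the elementary manipulations above. I would establish it either through the Davenport--Hasse lifting of the quadratic Gauss sum (equivalently Stickelberger's congruence), whose sign is exactly $(-1)^{(n^2-1)/8}=\left(\frac2n\right)$, or by identifying $2^n+S(x^3+x,n)$ with the number of affine points of the supersingular curve $y^2+y=x^3+x$ over $\F_{2^n}$ and tracking the Frobenius eigenvalues up from the base field $\F_2$. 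A small check in $\F_{2^3}$, where $n\equiv 3\pmod 8$ and $\left(\frac23\right)=-1$, gives $S(x^3+x,3)=-4=-2^2$, consistent with the claimed sign.
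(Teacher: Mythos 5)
Your strategy is sound, and it is genuinely different from the paper's treatment: the paper gives no proof of this lemma at all, citing it from Carlitz's 1979 paper, where the evaluation is obtained by direct Gauss-sum/quadratic-form manipulations. Your decomposition --- squaring to get the magnitude and the vanishing criterion, an affine shift to reduce every $a$ to the base case $a=1$, and an external input for the sign --- is a clean alternative. Stage 1 is complete and correct: $S(x^3+ax,n)^2=2^n\left(1+\chi_n(a+1)\right)$, and solvability of $x^4+x=(a+1)^2$ is equivalent to $\tr_n(a+1)=0$, since for odd $n$ the $\F_2$-linear map $x\mapsto x^4+x$ has kernel $\F_2$ and image contained in, hence equal to, $\ker\tr_n$.

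Two points need repair. First, Stage 2 contains an inconsistency: in fact $(\theta^{2^{n-1}}+\theta^2+1)^2=\theta+\theta^4+1=(a+1)^2+1=a^2$, not $\theta^4+\theta=(a+1)^2$ as you wrote; if your displayed identity were right, uniqueness of square roots would force $\theta^{2^{n-1}}+\theta^2+1=a+1$ and the reduction to the base case would fail. The corrected computation yields exactly the conclusion you then use, $\theta^{2^{n-1}}+\theta^2+1=a$, so this is a repairable slip rather than a wrong idea, but as written the chain of equalities contradicts the conclusion drawn from it. Second, Stage 3 is a plan rather than a proof, and since it is the only place where genuine arithmetic input enters, it must be carried out; fortunately both routes you name do work. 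Following your curve suggestion: $2^n+S(x^3+x,n)$ is the affine point count of $y^2+y=x^3+x$ over $\F_{2^n}$, the curve has $5$ rational points over $\F_2$, so its Frobenius eigenvalues are $-1\pm i$, and for odd $n$,
\[
S(x^3+x,n)=-\left((-1+i)^n+(-1-i)^n\right)=-2^{\frac{n+2}{2}}\cos\left(\tfrac{3\pi n}{4}\right)=(-1)^{\frac{n^2-1}{8}}\,2^{\frac{n+1}{2}}=\left(\tfrac{2}{n}\right)2^{\frac{n+1}{2}},
\]
consistent with your $n=3$ check. With the Stage 2 computation corrected and Stage 3 written out along these lines, your proof is complete.
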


    \subsection{Equivalent relations and cryptographic criteria of Boolean functions}
    There are three main equivalent relationships among functions in finite fields. Two $(n,m)$-functions $F$ and $F^{\prime}$ are called \cite{carlet2020boolean}
    \begin{itemize}
        \item affine equivalent: if $F^{\prime}=A_1 \circ F \circ A_2$, where $A_1$ is an affine automorphism of $\F_{2^m}$ and $A_2$ is an affine automorphism of $\F_{2^n}$.
        \item extended affine equivalent (EA-equivalent for short): if $F^{\prime}=A_1 \circ F \circ A_2 + A$, where $A_1$ is an affine automorphism of $\F_{2^m}$, $A_2$ is an affine automorphism of $\F_{2^n}$, and $A$ is an affine $(n, m)$-function.
        \item Carlet-Charpin-Zinoviev equivalent (CCZ-equivalent for short): if the graphs of $F$ and $F^{\prime}$ are affine equivalent, that is $A(\mathcal{G}(F))=\mathcal{G}(F^{\prime})$ for some affine automorphism $A$ on $\F_{2^n} \times \F_{2^m}$, where $\mathcal{G}(F)=\{ (x, F(x)) \mid x \in \F_{2^n} \}$ and $\mathcal{G}(F^{\prime})=\{ (x, F^{\prime}(x)) \mid x \in \F_{2^n} \}$.
    \end{itemize}
    
    Affine equivalence is a strict particular case of EA equivalence, which in turn constitutes a specialized form of CCZ equivalence. It should be noted that CCZ-equivalence and EA-equivalence are consistent for Boolean functions. 
    
    The main cryptographic criteria of Boolean functions defined on $\F_{2^n}$ are as follows:
    \begin{itemize}
        \item \textit{Balancedness:} A Boolean function $f$ is called balanced if and only if $W_f(0)=0$.
        \item \textit{Nonlinearity:} The nonlinearity of a Boolean function $f$ is defined as
        \[
            \mathcal{NL}(f)=2^{n-1}-\frac{1}{2}\max_{a \in \F_{2^n}}|W_f(a)|.
        \]
        \item \textit{Algebraic degree:} For a Boolean function in the form as (\ref{unibool}), the algebraic degree of $f$ is given by
        \[
            \deg(f)=\max{\left\{ w_2(i) \mid  a_i \ne 0 \right\}},
        \]
        where $w_2(i)$ denotes the binary weight of $i$. When $\deg(f)$ is not greater than $1$ or $2$, $f$ is called an affine or quadratic function, respectively.
        \item \textit{Algebraic immunity:} For a Boolean function $f$,
        the algebraic immunity $\mathrm{AI}(f)$ is defined as the minimum algebraic degree of a nonzero annihilator $h$, where $h$ is a Boolean function such that $fh=0$ or $(f+1)h=0$.
    \end{itemize}

    In the equivalence sense, the algebraic degree and nonlinearity are EA invariants, while the balancedness and algebraic immunity are affine invariants. To withstand known attacks, Boolean functions employed in stream ciphers should satisfy the main cryptographic properties, including balancedness, high nonlinearity, and high algebraic degree.
    
    \subsection{Boolean functions from $2$-to-$1$ mappings}
    For any Boolean function $f$ over $\F_{2^n}$,   its support is defined as the set $E=\{ x \in \F_{2^n} \mid f(x)=1 \}$. Conversely, an arbitrary $f$ can be uniquely represented by the support set $E$ as 
    \[
        f(x)=
        \begin{cases*}
            1,  \text{~if~} x \in E, \\
            0,  \text{~if~} x \notin E.
        \end{cases*}
    \]
    Based on this representation, Carlet chose the image set of certain injective $(n,m)$-functions as the support set and derived excellent Boolean functions in \cite{carlet2024Parameterization}. Another natural idea \cite{qu2025parametric} is to employ a $2$-to-$1$ mapping instead, which ensures that the support set comprises half of the elements in the domain, leading to a balanced Boolean function. 
    
    Let $F$ be a $2$-to-$1$ mapping from $\F_{2^n}$ to itself. Given the $2$-to-$1$ property of $F$, one can derive a {balanced Boolean function $f$}, whose support equals the image set of $F$, i.e.,
    \[
        f(x)=
        \begin{cases*}
            1,  \text{~if~} x \in \text{Im}(F), \\
            0,  \text{~if~} x \notin \text{Im}(F).
        \end{cases*}
    \]
    
    The above method is called the parametric construction approach, and we say that $f$ is parameterized by $F$. 
    {Clearly, the induced mapping of $x^2+x$ is $2$-to-$1$ over $\F_{2^n}$, and the composition $P(x^2+x)$ preserves this property for any permutation $P(x) \in \F_{2^n}[x]$.
In this paper, we further study the parametric construction approach using $2$-to-$1$ mappings of the form $P(x^2+x)$, where the corresponding balanced Boolean function is denoted by $f_P$.
 In \cite{qu2025parametric}, we established a relationship between $P(x^2+x)$ and the corresponding $f_{P}$ as follows. 
    
    \begin{Lem}\label{relation}\cite{qu2025parametric}
        Let $P$ be a permutation over $\F_{2^n}$. Then for any $a \in \F_{2^n}^{*}$,
        \[
            W_{f_P}(a) = -S(aP(x)+x,n).
        \]
    \end{Lem}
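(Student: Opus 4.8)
The plan is to convert the set-theoretic definition of $f_P$ into a single trace condition, after which the Walsh sum collapses into the desired Weil sum by one change of variable. The crux is the following description of the support. Since composition with $P$ satisfies $\im\!\left(P(x^2+x)\right)=P\!\left(\im(x^2+x)\right)$, and the Artin--Schreier map $x\mapsto x^2+x$ is $2$-to-$1$ with image equal to the trace-zero hyperplane $\{z\in\F_{2^n}\mid \tr_n(z)=0\}$, we obtain $\im\!\left(P(x^2+x)\right)=P\!\left(\{z\mid \tr_n(z)=0\}\right)$. Because $P$ is a permutation, $x$ lies in this image set if and only if $P^{-1}(x)$ lies in the hyperplane, i.e. $\tr_n\!\left(P^{-1}(x)\right)=0$.

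First I would record this as the representation $f_P(x)=1+\tr_n\!\left(P^{-1}(x)\right)$ over $\F_2$: when $\tr_n\!\left(P^{-1}(x)\right)=0$ the right-hand side equals $1$ and $x$ is in the support, while when $\tr_n\!\left(P^{-1}(x)\right)=1$ it equals $0$. Hence $(-1)^{f_P(x)}=-(-1)^{\tr_n(P^{-1}(x))}$. Substituting into the definition of the Walsh transform and applying the change of variable $u=P^{-1}(x)$ (equivalently $x=P(u)$), which permutes $\F_{2^n}$ since $P$ is bijective, gives
\begin{align*}
    W_{f_P}(a) &= \sum_{x \in \F_{2^n}} (-1)^{f_P(x) + \tr_n(ax)} \\
    &= -\sum_{x \in \F_{2^n}} (-1)^{\tr_n(P^{-1}(x)) + \tr_n(ax)} \\
    &= -\sum_{u \in \F_{2^n}} (-1)^{\tr_n(u) + \tr_n(aP(u))} \\
    &= -\sum_{u \in \F_{2^n}} (-1)^{\tr_n(aP(u) + u)} = -S(aP(x)+x,n).
\end{align*}
This identity in fact holds for every $a$, and in particular for $a\in\F_{2^n}^*$ as claimed.

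The main obstacle is purely the bookkeeping at the very first step: one must carefully justify that $\im\!\left(P(x^2+x)\right)$ equals the $P$-image of the trace-zero hyperplane and that membership in this image is governed by $\tr_n\!\left(P^{-1}(x)\right)$. Once the single-trace formula for $f_P$ is secured, the remainder is a one-line substitution. As a fallback that avoids writing $P^{-1}$ explicitly, one can split $W_{f_P}(a)$ over the support $E=\im\!\left(P(x^2+x)\right)$ and its complement and use character orthogonality $\sum_{x}(-1)^{\tr_n(ax)}=0$ for $a\neq 0$ to get $W_{f_P}(a)=-2\sum_{x\in E}(-1)^{\tr_n(ax)}$. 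The $2$-to-$1$ property then rewrites this as $-\sum_{z:\,\tr_n(z)=0}(-1)^{\tr_n(aP(z))}$, and invoking Lemma \ref{org} for the permutation $P$ (so that the trace-zero and trace-one partial sums of $(-1)^{\tr_n(aP(z))}$ are negatives of each other) identifies it with $-S(aP(x)+x,n)$.
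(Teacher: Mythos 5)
Your primary argument is correct, and note that this paper does not actually prove Lemma \ref{relation} --- it is quoted from \cite{qu2025parametric} --- so the natural comparison is with the support-splitting argument you sketch as a fallback, which is essentially the argument used there. Your main route is genuinely different and arguably cleaner: identifying $\im(x^2+x)$ with the trace-zero hyperplane gives the closed form $f_P(x)=1+\tr_n\left(P^{-1}(x)\right)$, after which the substitution $x=P(u)$ finishes the computation in one line. As you observe, this even proves the identity at $a=0$ (both sides vanish, by balancedness on the left and by $S(x,n)=0$ on the right), whereas the orthogonality/$2$-to-$1$ route genuinely needs $a\neq 0$ in two places (character orthogonality and Lemma \ref{org}); this explains the hypothesis $a\in\F_{2^n}^{*}$ in the statement, which your proof shows is not essential. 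One small slip in your fallback sketch: since $P$ maps $\{z\mid \tr_n(z)=0\}$ bijectively onto $E$, the quantity $-2\sum_{x\in E}(-1)^{\tr_n(ax)}$ equals $-2\sum_{\tr_n(z)=0}(-1)^{\tr_n(aP(z))}$, not $-\sum_{\tr_n(z)=0}(-1)^{\tr_n(aP(z))}$; the factor $2$ is then consumed by the identity $S(aP(x)+x,n)=2\sum_{\tr_n(z)=0}(-1)^{\tr_n(aP(z))}$ that Lemma \ref{org} yields, so the conclusion is unaffected once this bookkeeping is fixed.
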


    \section{Constructions of four-valued spectrum Boolean functions}\label{four-valued}
    This section introduces a class of four-valued spectrum Boolean functions, whose Walsh spectra imply the same high nonlinearity as semi-bent functions. The parameterization is selected as $P(x^2+x)$ with trinomial permutation $P(x)=cx+\tr_{m}^{2m}(x^{\frac{1}{5}(2^m-1)+1})$ over $\F_{2^{2m}}$. The permutation, along with other classes of permutations of the form $cx+\tr_{m}^{km}(x^s)$, was first studied by Li et al. in \cite{li2018permutation}.  

    \subsection{Techniques in Niho exponents}
    Niho exponents were originally introduced into the study of cross-correlation in \cite{niho1972multi} and yielded valuable results. In the decades that followed, this work also inspired the construction of bent functions, linear codes, and permutation polynomials \cite{li2019survey}. 

    Given positive integers $n,m$ with $n=2m$, a positive integer $s$ is termed a Niho exponent over $\F_{2^n}$ if it satisfies $s \equiv 2^j \pmod{2^m-1}$ for some $0 \leq j < n$. The case $j=0$ (yielding $s \equiv 1 \pmod{2^m-1}$) is called a normalized Niho exponent, whereas removing the restriction on residues being powers of 2 leads to the generalized Niho exponents. 
    
    Recall the notation $U_m = \{z \in \F_{2^n}~|~z \bar{z}=z^{2^m+1} = 1 \}$, which represents the unit circle within $\F_{2^{2m}}$. The following two well-known lemmas establish two well-known bijections concerning $U_m$ and serve as powerful tools for addressing Niho (classical or generalized) exponents. 
    \begin{Lem}\label{bi1}\cite{dobbertin2006niho}
        Let $n=2m$ and $U_m$ be the unit circle within $\F_{2^{n}}$. Then for any $x \in \F_{2^n}^{*}$, there exists a unique pair $(y,z) \in \F_{2^m}^{*} \times U_m$ such that $x=yz$. 
    \end{Lem}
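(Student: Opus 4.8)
The plan is to exploit the multiplicative structure of $\F_{2^n}^{*}$ and show that it is the internal direct product of its two subgroups $\F_{2^m}^{*}$ and $U_m$. I would obtain existence of the factorization explicitly through the norm map $N(x)=x\bar{x}=x^{2^m+1}$. First I would check that $N$ maps $\F_{2^n}^{*}$ into $\F_{2^m}^{*}$: since $N(x)^{2^m}=x^{(2^m+1)2^m}=x^{2^{2m}+2^m}=x^{2^m+1}=N(x)$ (using $x^{2^n}=x$), the value $N(x)$ is fixed by the Frobenius $t\mapsto t^{2^m}$ and is nonzero, hence lies in $\F_{2^m}^{*}$. Because squaring is an automorphism of $\F_{2^m}$, the element $N(x)$ admits a unique square root $y:=N(x)^{2^{m-1}}\in\F_{2^m}^{*}$ with $y^2=N(x)$. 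Setting $z:=x/y$, one computes $z^{2^m+1}=x^{2^m+1}/y^{2^m+1}=N(x)/y^2=1$, so $z\in U_m$ and $x=yz$ delivers the desired decomposition.

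For uniqueness I would argue that $\F_{2^m}^{*}\cap U_m=\{1\}$. If $y$ lies in both subgroups, then $y^{2^m}=y$ (being in $\F_{2^m}$) while $y^{2^m+1}=1$ (being in $U_m$); combining these gives $y^2=1$, and since the field has characteristic $2$ this forces $y=1$. Consequently, if $yz=y'z'$ with $y,y'\in\F_{2^m}^{*}$ and $z,z'\in U_m$, then $y(y')^{-1}=z'z^{-1}$ lies in the trivial intersection, whence $y=y'$ and $z=z'$.

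The argument is elementary, so the only points deserving care — the ``obstacle,'' such as it is — are verifying that the two subgroups meet trivially and that the norm value genuinely lands in $\F_{2^m}^{*}$; both reduce to short Frobenius and characteristic-$2$ computations. As a consistency check one may note that $|\F_{2^m}^{*}|\cdot|U_m|=(2^m-1)(2^m+1)=2^{2m}-1=|\F_{2^n}^{*}|$, which is compatible with the coprimality $\gcd(2^m-1,2^m+1)=\gcd(2^m-1,2)=1$; this counting confirms that the product $\F_{2^m}^{*}\cdot U_m$ exhausts $\F_{2^n}^{*}$ and corroborates the explicit factorization constructed above.
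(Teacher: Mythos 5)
Your proof is correct. Note that the paper itself offers no proof of this lemma---it is quoted from the literature (Dobbertin et al.)---so there is no in-paper argument to compare against; your write-up is a genuine proof of the cited fact. Every step checks out: $N(x)=x^{2^m+1}$ lands in $\F_{2^m}^{*}$ because it is fixed by $t\mapsto t^{2^m}$; the square root $y=N(x)^{2^{m-1}}$ is the unique one in $\F_{2^m}^{*}$; the quotient $z=x/y$ satisfies $z^{2^m+1}=N(x)/y^{2^m+1}=N(x)/y^{2}=1$; and uniqueness follows from $\F_{2^m}^{*}\cap U_m=\{1\}$, which you correctly reduce to $y^2=1$ in characteristic $2$. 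For comparison, the argument usually given in the literature is more abstract: $\F_{2^n}^{*}$ is cyclic of order $2^n-1=(2^m-1)(2^m+1)$ with coprime factors, and $\F_{2^m}^{*}$ and $U_m$ are exactly its subgroups of orders $2^m-1$ and $2^m+1$, so $\F_{2^n}^{*}$ is their internal direct product and the unique factorization is immediate. Your norm-map construction buys explicitness---it produces $y$ and $z$ by closed formulas rather than by a counting argument---at the cost of a few Frobenius computations. In fact, the ``consistency check'' you append (orders multiply to $2^n-1$, intersection trivial) is itself a complete existence-and-uniqueness proof, so your write-up effectively contains both routes; either one alone would suffice.
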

    \begin{Lem}\label{bi2}\cite{lahtonen2002odd}
        Let $n=2m$ and $U_m$ be the unit circle within $\F_{2^{n}}$. For any $\omega \in \F_{2^n} \backslash \F_{2^m}$, then 
        \[
            U_m \backslash \{1\} = \left\{ \frac{x+\omega}{x+\bar{\omega}} \mid x \in \F_{2^m} \right\}.
        \]
    \end{Lem}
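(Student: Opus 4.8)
The plan is to exhibit the map $g:\F_{2^m}\to\F_{2^n}$, $g(x)=\frac{x+\omega}{x+\bar{\omega}}$, as a well-defined bijection onto $U_m\backslash\{1\}$. I would organize the argument into four short steps: well-definedness, containment of the image in $U_m\backslash\{1\}$, injectivity, and a cardinality count (an explicit inverse being available as an alternative to the last step).

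First, I would check that the denominator never vanishes on $\F_{2^m}$. Since the $2^m$-power map fixes exactly the subfield $\F_{2^m}$, the hypothesis $\omega\notin\F_{2^m}$ yields $\bar{\omega}\notin\F_{2^m}$, so $x+\bar{\omega}\neq 0$ for every $x\in\F_{2^m}$ and $g$ is well defined.

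Next, I would verify $g(x)\in U_m$. The key point is that for $x\in\F_{2^m}$ one has $\bar{x}=x$ and $\bar{\bar{\omega}}=\omega$, so $\overline{g(x)}=\frac{x+\bar{\omega}}{x+\omega}$ and hence $g(x)^{2^m+1}=g(x)\,\overline{g(x)}=1$. Moreover, $g(x)=1$ would force $\omega=\bar{\omega}$, i.e. $\omega\in\F_{2^m}$, contradicting the hypothesis; thus the image lands in $U_m\backslash\{1\}$. For injectivity, I would clear denominators in $g(x_1)=g(x_2)$ and simplify in characteristic $2$, where the mixed terms collapse to $(x_1+x_2)(\omega+\bar{\omega})=0$; since $\omega+\bar{\omega}=\tr_m^n(\omega)\neq 0$ (again because $\omega\notin\F_{2^m}$), we obtain $x_1=x_2$.

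Finally, as $\#\F_{2^m}=2^m=\#(U_m\backslash\{1\})$, an injection between two finite sets of equal size is automatically surjective, which gives the asserted set equality. The only genuine obstacle here is the Frobenius bookkeeping attached to the map $x\mapsto\bar{x}$, in particular keeping track of the relation $\bar{z}=z^{2^m}=z^{-1}$ that holds precisely on $U_m$. I would lean on exactly this relation should I prove surjectivity directly instead: solving $z=g(x)$ gives $x=\frac{\omega+z\bar{\omega}}{z+1}$ for $z\neq 1$, and $\bar{z}=z^{-1}$ then shows $\bar{x}=x$, so $x\in\F_{2^m}$ and $z=g(x)$ indeed lies in the image.
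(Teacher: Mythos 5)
The paper does not prove this lemma at all; it is imported verbatim from the cited reference (Lahtonen), so there is no internal proof to compare against. Your argument is correct and self-contained: well-definedness and the containment of the image in $U_m\backslash\{1\}$ both reduce, as you show, to $\omega\neq\bar{\omega}$; the injectivity computation in characteristic $2$ correctly collapses to $(x_1+x_2)(\omega+\bar{\omega})=0$; and the counting step closes the argument, since $U_m$ is the set of roots of $z^{2^m+1}=1$ in the cyclic group $\F_{2^n}^{*}$ of order $(2^m-1)(2^m+1)$, so $\#U_m=2^m+1$ and $\#\left(U_m\backslash\{1\}\right)=2^m=\#\F_{2^m}$ — you may want to state this cardinality fact explicitly rather than assume it silently, as it is the one external ingredient your counting route relies on. Your alternative direct surjectivity argument is also sound: for $z\in U_m\backslash\{1\}$ the candidate preimage $x=\frac{\omega+z\bar{\omega}}{z+1}$ satisfies $\bar{x}=x$ precisely because $\bar{z}=z^{-1}$ on $U_m$, and $x+\bar{\omega}=\frac{\omega+\bar{\omega}}{z+1}\neq 0$, so $g(x)=z$ is legitimate; this version has the advantage of avoiding the cardinality fact altogether, and it is the more natural argument to give if one wants the lemma to be fully elementary. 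Either variant is a valid replacement for the citation.
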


    Consider a polynomial of the form $x^r h(x^{2^m-1})$ where $r$ is a positive integer and $h(x)$ is a polynomial defined over $\F_{2^{2m}}$. It is observed that all the monomial exponents are of (generalized) Niho type within the same congruence class modulo $2^m-1$. This brings convenience to the calculation of the Walsh transform as established in Proposition \ref{nihopp}. 
    \begin{Prop}\label{nihopp}
        Let $n=2m$ and $G(x) = x^r h(x^{2^m-1})$, where  $r$ is a positive integer and   $h(x)$ is a polynomial over $\F_{2^n}$. Then the Walsh transform of $G$ at $(a,b) \in \F_{2^n}^{2}$ is given by
        \[
            W_G(a,b)=\sum_{z \in U_m}{S \left( \tr_m^n\left( bz^{r}h(\bar{z}^{2} )\right) y^{r} + \tr_m^n\left( az \right) y,m \right)}-2^m.
        \]
    \end{Prop}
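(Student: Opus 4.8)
The plan is to expand the Walsh transform directly from its definition and reduce the double sum over $\F_{2^n}$ to a single sum over the unit circle $U_m$ by means of the polar-type decomposition in Lemma \ref{bi1}. Writing $W_G(a,b)=\sum_{x\in\F_{2^n}}\chi_n(bG(x)+ax)$, I would first isolate the term $x=0$. Since $r\geq 1$ forces $G(0)=0$, this term contributes exactly $1$, and the remaining sum runs over $x\in\F_{2^n}^{*}$. For each such $x$ I would invoke Lemma \ref{bi1} to write $x=yz$ uniquely with $(y,z)\in\F_{2^m}^{*}\times U_m$, so that the sum over $x\neq 0$ becomes a double sum over the pair $(y,z)$.

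The key algebraic simplification exploits the structure of the exponent $x^{2^m-1}$. Because $y\in\F_{2^m}$ I have $\bar{x}=x^{2^m}=y\bar{z}$, whence $x^{2^m-1}=\bar{x}/x=\bar{z}/z$; the defining relation $z\bar{z}=1$ of $U_m$ then gives $x^{2^m-1}=\bar{z}^{2}$. Consequently $G(x)=y^{r}z^{r}h(\bar{z}^{2})$, and the full exponent equals $\tr_n\!\big(by^{r}z^{r}h(\bar{z}^{2})+ayz\big)$. I would then apply trace transitivity $\tr_n=\tr_m\circ\tr_m^{n}$ together with the $\F_{2^m}$-linearity of the relative trace $\tr_m^{n}$ to pull the $\F_{2^m}$-factors $y^{r}$ and $y$ outside, obtaining the exponent $\tr_m\!\big(\tr_m^{n}(bz^{r}h(\bar{z}^{2}))\,y^{r}+\tr_m^{n}(az)\,y\big)$, in which $z$ enters only through coefficients lying in $\F_{2^m}$.

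With this reduction the inner sum over $y\in\F_{2^m}^{*}$ is, up to the missing $y=0$ term, exactly the Weil sum $S\!\big(\tr_m^{n}(bz^{r}h(\bar{z}^{2}))\,y^{r}+\tr_m^{n}(az)\,y,\,m\big)$ over $\F_{2^m}$. Adding back the $y=0$ term costs $1$ for each $z$, so that summing over all $z\in U_m$ the total correction is $\#U_m=2^m+1$. Combining this with the $+1$ contributed by the $x=0$ term yields the constant $1-(2^m+1)=-2^m$, which is precisely the claimed formula. The only delicate points are the identity $x^{2^m-1}=\bar{z}^{2}$ and the careful bookkeeping of the excised $y=0$ and $x=0$ terms; once these are handled, the remainder is a routine application of transitivity and $\F_{2^m}$-linearity of the trace.
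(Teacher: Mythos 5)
Your proof is correct and follows essentially the same route as the paper: isolate the $x=0$ term, apply the polar decomposition $x=yz$ of Lemma \ref{bi1} with the identity $x^{2^m-1}=\bar{z}^{2}$, use trace transitivity and $\F_{2^m}$-linearity to reduce to $\chi_m$, and account for the missing $y=0$ terms to produce the constant $1-(2^m+1)=-2^m$. The only cosmetic difference is the order of operations (the paper restores the $y=0$ terms before rewriting the character via $\tr_n=\tr_m\circ\tr_m^n$), which does not affect the argument.
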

    \begin{proof}
        By the definition of Walsh transform, for any pair $(a,b) \in \F_{2^n}^{2}$, 
        \begin{eqnarray*}
            W_G(a,b)
            &=& \sum_{x \in \F_{2^n}}{\chi_n(b G(x)+ax)} \\
            &=& \sum_{x \in \F_{2^n}^{*}}{\chi_n(b G(x)+ax)}+1 \\
            &=& \sum_{x \in \F_{2^n}^{*}}{\chi_n(b x^rh(x^{2^m-1})+ax)}+1.
        \end{eqnarray*}
        Decomposing each $x \in \F_{2^n}^{*}$ into the product of $y$ and $z$ with $(y,z) \in \F_{2^m}^{*} \times U_m$, noting that $y^{2^m-1}=1$ and $z^{2^m-1}=z^{-2}=\bar{z}^2$, it follows that
        \begin{eqnarray*}
            W_G(a,b)
            &=&\sum_{y \in \F_{2^m}^{*},z \in U_m}{\chi_n\left( by^{r}z^{r}h(\bar{z}^{2})+ayz \right)}+1 \\
            &=&\sum_{y \in \F_{2^m}}\sum_{z \in U_m}{\chi_n\left( by^{r}z^{r}h(\bar{z}^{2})+ayz \right)}-2^m \\
            &=&\sum_{z \in U_m}\sum_{y \in \F_{2^m}}{\chi_m\left( \tr_m^n\left( bz^{r}h(\bar{z}^{2}) \right) y^{r}+\left( az+\bar{a}\bar{z} \right)y \right)}-2^m, 
        \end{eqnarray*}
        where the last equality is owing to $\tr_n(x)=\tr_m \left( \tr_m^n(x) \right)$.
    \end{proof}
    In the case of normalized Niho exponent, Proposition \ref{nihothm} further reduces the problem to the solution counts of a specific equation on the unit circle. Relevant techniques have been successfully applied to the constructions of permutation polynomials \cite{tu2018new}, bent functions \cite{leander2006bent}, four-valued spectrum and six-valued spectrum functions \cite{li2013walsh,dobbertin2006niho}.
    \begin{Prop}\cite{dobbertin2006niho}\label{nihothm}
        Let $n=2m$ and $G(x) = x h(x^{2^m-1})$, where  $h(x)$ is a polynomial over $\F_{2^n}$. Then the Walsh transform of $G$ at $(a,b) \in \F_{2^n}^{2}$ is given by
        \[
            W_G(a,b) = 2^m\left( \#\{ z\in U_m \mid \tr_m^n\left( bzh(\bar{z}^{2}) \right)+az+\bar{a}\bar{z}=0 \}-1 \right).
        \]
    \end{Prop}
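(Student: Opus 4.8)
The plan is to obtain the identity by specializing Proposition \ref{nihopp} to the normalized Niho exponent $r=1$, where the inner exponential sum degenerates to a \emph{linear} Weil sum that can be evaluated in closed form by Lemma \ref{org}. Concretely, putting $r=1$ in the formula of Proposition \ref{nihopp} gives
\[
    W_G(a,b) = \sum_{z \in U_m} S\!\left( \tr_m^n\!\bigl( bz\,h(\bar{z}^{2}) \bigr)\, y + \tr_m^n(az)\, y,\; m \right) - 2^m .
\]
Each inner term is a Weil sum over $\F_{2^m}$ in the variable $y$, and since $r=1$ its argument $\bigl[\tr_m^n( bz\,h(\bar{z}^{2})) + \tr_m^n(az)\bigr]\,y$ is linear in $y$. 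I would abbreviate its coefficient by $c_z := \tr_m^n( bz\,h(\bar{z}^{2})) + \tr_m^n(az) \in \F_{2^m}$, so that the inner sum is $S(c_z\, y, m)$.

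Next I would evaluate $S(c_z\, y, m)$ using Lemma \ref{org}. The identity map $y \mapsto y$ is a permutation of $\F_{2^m}$, so applying Lemma \ref{org} over $\F_{2^m}$ with scalar $c_z$ yields
\[
    S(c_z\, y, m) =
    \begin{cases*}
        2^m, & if $c_z = 0$, \\
        0,   & if $c_z \neq 0$.
    \end{cases*}
\]
Hence each $z \in U_m$ contributes $2^m$ to the outer sum precisely when $c_z = 0$, and nothing otherwise. It then remains to rewrite the condition $c_z = 0$ in the form displayed in the statement: since $\bar{a} = a^{2^m}$ and $\bar{z} = z^{2^m}$, one has $az + \bar{a}\bar{z} = az + (az)^{2^m} = \tr_m^n(az)$, so $c_z = \tr_m^n( bz\,h(\bar{z}^{2})) + az + \bar{a}\bar{z}$. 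Collecting the contributions gives
\[
    W_G(a,b) = 2^m \cdot \#\{\, z \in U_m \mid \tr_m^n( bz\,h(\bar{z}^{2})) + az + \bar{a}\bar{z} = 0 \,\} - 2^m = 2^m\!\left( \#\{\cdots\} - 1 \right),
\]
which is exactly the claimed formula.

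I do not expect any genuine difficulty here, as the result is essentially a specialization of Proposition \ref{nihopp}. The only points requiring care are bookkeeping: first, verifying that $az + \bar{a}\bar{z}$ (equivalently the whole coefficient $c_z$) indeed lies in $\F_{2^m}$, which is what makes the inner sum a bona fide Weil sum over $\F_{2^m}$ to which Lemma \ref{org} applies, and which also ensures the set-defining condition matches that in the statement; and second, recognizing the inner sum as the linear Weil sum of the identity permutation so that the $2^m$-versus-$0$ dichotomy is immediate. Both are routine once $r=1$ linearizes the $y$-exponent.
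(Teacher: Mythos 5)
Your proof is correct: specializing Proposition \ref{nihopp} to $r=1$, evaluating the resulting linear character sum $S(c_z y,m)$ via Lemma \ref{org} applied to the identity permutation of $\F_{2^m}$, and rewriting $\tr_m^n(az)=az+\bar{a}\bar{z}$ yields exactly the claimed formula. The paper does not prove this proposition itself (it cites \cite{dobbertin2006niho}), but your derivation is precisely the $r=1$ specialization that the paper's presentation of Proposition \ref{nihopp} sets up, so the approach is essentially the intended one.
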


    \subsection{Constructions from permutations of the form $cx+\tr_{m}^{2m}(x^{\frac{1}{5}(2^m-1)+1})$}
    This subsection is concerned with the four-valued spectrum functions mentioned above and their value distributions. To facilitate the analysis, we first restate Lemmas \ref{projeq} and \ref{projeq2} that characterize the solution counts of specific quadratic equations.  
    \begin{Lem}\label{projeq}\cite{dobbertin2006niho}
        Let $n=2m$ and $k<m$. For any fixed $u \in \F_{2^n}^{*}$, the quadratic polynomials $Q_1(x)=ux^{2^k+1}+x^{2^k}+x+\bar{u}$ and $Q_2(x)=x^{2^k+1}+ux^{2^k}+\bar{u}x+1$ have either $0$, $1$, $2$, or $2^{\gcd(k,m)}+1$ distinct roots in $U_m$.
    \end{Lem}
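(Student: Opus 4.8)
The plan is to count the roots on the unit circle by transporting the equation to the subfield $\F_{2^m}$ via the bijection of Lemma \ref{bi2}, and then invoke the classification of roots of a Bluher-type trinomial. I will treat $Q_1$ in detail; $Q_2$ is entirely analogous (it is self-conjugate in the same way, and $Q_2(1)=Q_1(1)=\tr_m^n(u)$), so the same bookkeeping applies.

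First I would isolate the point $x=1$, which is the only point of $U_m$ excluded by the parameterization of Lemma \ref{bi2}. A direct evaluation gives $Q_1(1)=u+1+1+\bar u=\tr_m^n(u)$, so $x=1$ is a root if and only if $\tr_m^n(u)=0$. For the remaining points, fix $\omega\in\F_{2^n}\setminus\F_{2^m}$ and substitute $x=\frac{t+\omega}{t+\bar\omega}$ with $t\in\F_{2^m}$, clearing denominators by multiplying through by $(t+\bar\omega)^{2^k+1}$, which never vanishes for $t\in\F_{2^m}$. Writing $R(t):=(t+\bar\omega)^{2^k+1}Q_1(x)$ and using $t^{2^k}\in\F_{2^m}$, one expands $R$ into a polynomial in $t$ of degree at most $2^k+1$ whose roots in $\F_{2^m}$ biject with the roots of $Q_1$ in $U_m\setminus\{1\}$.

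The key structural step is to show $R\in\F_{2^m}[t]$. On $U_m$ one has $\bar x=x^{-1}$, and a short computation gives the self-conjugacy $x^{2^k+1}\,\overline{Q_1(x)}=Q_1(x)$; combined with $x^{2^m}=\bar x=\frac{t+\bar\omega}{t+\omega}$ this yields $R(t)^{2^m}=R(t)$, so $R$ takes values in $\F_{2^m}$. Since $\deg R\le 2^k+1<2^m$, the unique polynomial agreeing with $R$ on all of $\F_{2^m}$ already has coefficients in $\F_{2^m}$, hence $R\in\F_{2^m}[t]$. A direct expansion then shows that the coefficient of the top term $t^{2^k+1}$ equals exactly $\tr_m^n(u)$, which produces a clean dichotomy. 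If $\tr_m^n(u)=0$, then $R$ collapses to an affine linearized polynomial $C_1t^{2^k}+C_2t+C_3$ over $\F_{2^m}$, whose number of roots is $0$, $1$, or $2^{\gcd(k,m)}$ by the standard kernel-dimension argument for $t\mapsto C_1t^{2^k}+C_2t$ (its kernel has size $1$ or $2^{\gcd(k,m)}$ according to whether $C_2/C_1$ is a $(2^k-1)$-th power); adding the root $x=1$, which is present precisely in this branch, gives a total in $\{1,2,2^{\gcd(k,m)}+1\}$. If $\tr_m^n(u)\ne 0$, then after dividing by $\tr_m^n(u)$ and applying the shift $t\mapsto s+\lambda$ with a suitable $\lambda\in\F_{2^m}$ to kill the $t^{2^k}$ term, $R$ becomes a trinomial $s^{2^k+1}+As+B$ over $\F_{2^m}$; here $x=1$ is not a root, so the count equals exactly the number of roots of this trinomial in $\F_{2^m}$.

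The hard part will be the last case: the number of roots of $s^{2^k+1}+As+B$ in $\F_{2^m}$. This is precisely the classical problem solved by Bluher, whose theorem states that this count is $0$, $1$, $2$, or $2^{\gcd(k,m)}+1$. Granting this input, both branches land in $\{0,1,2,2^{\gcd(k,m)}+1\}$; note in particular that the bare value $2^{\gcd(k,m)}$ arising in the linearized branch is always accompanied by the extra root $x=1$, which is exactly why it appears only in the form $2^{\gcd(k,m)}+1$. Applying the same argument verbatim to $Q_2$ yields the identical conclusion, completing the proof.
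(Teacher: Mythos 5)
The paper never proves this lemma---it is imported verbatim from \cite{dobbertin2006niho}---so there is no internal proof to compare against; your reconstruction is correct and follows the same route as that cited literature: parameterize $U_m\setminus\{1\}$ by $\F_{2^m}$ via Lemma \ref{bi2}, split off the root $x=1$ according to whether $\tr_m^n(u)=0$ (which is also exactly when the leading coefficient of the transported polynomial vanishes), and reduce to the kernel count for the linearized branch resp.\ Bluher's theorem for $s^{2^k+1}+As+B$ in the other branch. The only loose ends are the degenerate subcases (e.g.\ $A=0$ or $B=0$ in the trinomial, or $C_1=C_2=0$ in the linearized branch), each of which is easily checked to stay within $\{0,1,2,2^{\gcd(k,m)}+1\}$, so your argument stands as a faithful reconstruction of the standard proof.
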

    \begin{Lem}\label{projeq2}\cite{ding2023roots}
        Let $n,k$ be positive integers with $\gcd(k,n)=1$.
        Then for any pair $(a,b) \in \F_{2^n}^{*} \times \F_{2^n}$, the quadratic polynomial $Q(x)=x^{2^k+1}+bx+a$ has either $0$, $1$, or $3$ roots in $\F_{2^n}$, without considering multiplicity.  
    \end{Lem}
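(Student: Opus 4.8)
The plan is to reduce the root count to that of an affine (linearized) equation whose solution set is pinned down by the hypothesis $\gcd(k,n)=1$; this generalizes the elementary fact that, for $k=1$, $Q(x)=x^3+bx+a$ is a cubic with no repeated roots when $a\neq 0$, hence has $0$, $1$, or $3$ distinct roots. First I would record two consequences of $a\neq 0$. Since $Q(0)=a\neq 0$, the element $0$ is never a root, so every root is nonzero. Moreover $Q$ has no repeated root: its formal derivative is $Q'(x)=x^{2^k}+b$, and a repeated root $r$ would satisfy $r^{2^k}=b$, giving $Q(r)=r\cdot r^{2^k}+br+a=a\neq 0$, a contradiction. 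Thus counting distinct roots is the same as counting roots; if none exist we are in the case $0$, so from now on assume a root $x_0\neq 0$ exists.

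Next I would translate and normalize. Writing any other root as $x_0+y$ with $y\neq 0$ and expanding $Q(x_0+y)$ (using that $x_0$ is a root to cancel the $a$-term), one gets $Q(x_0+y)=y^{2^k+1}+x_0y^{2^k}+(x_0^{2^k}+b)y$. Dividing by $y\neq 0$ and substituting $y=x_0 s$ shows that the roots other than $x_0$ correspond bijectively to the solutions $s\in\F_{2^n}$ of
\[
    s^{2^k}+s^{2^k-1}+1=c,\qquad c:=b\,x_0^{-2^k}.
\]
Crucially $c\neq 1$, since $c=1$ is equivalent to $b=x_0^{2^k}$, i.e. to $x_0$ being a repeated root, which we have ruled out. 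Also $s=0$ makes the left side equal $1\neq c$, so every solution is nonzero, and the total number of roots of $Q$ equals $1$ plus the number of such $s$.

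The key step is to linearize this equation. Substituting $s=1/r$ and multiplying by $r^{2^k}$ converts it into the affine equation
\[
    (c+1)\,r^{2^k}+r+1=0.
\]
Because $c\neq 1$ the leading coefficient $c+1$ is nonzero, so $L(r):=(c+1)r^{2^k}+r$ is an $\F_2$-linearized polynomial, and the solution set of $L(r)=1$ is either empty or a coset of $\ker L$. Here the coprimality hypothesis enters: since $\gcd(2^k-1,2^n-1)=2^{\gcd(k,n)}-1=1$, the map $r\mapsto r^{2^k-1}$ permutes $\F_{2^n}^*$, so $(c+1)r^{2^k-1}=1$ has exactly one nonzero solution $r_0$, whence $\ker L=\{0,r_0\}$ has size $2$. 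Therefore $L(r)=1$ has $0$ or $2$ solutions, and since $L(0)=0\neq 1$ these are automatically nonzero. Pulling this back, the number of roots of $Q$ other than $x_0$ is $0$ or $2$, so $Q$ has $1$ or $3$ roots when a root exists, and $0$ otherwise.

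The main obstacle is spotting the reciprocal substitution $s=1/r$ that turns the genuinely nonlinear equation in $s$ into a linearized one in $r$; once linearized, the count is forced by elementary linear algebra over $\F_2$ together with the bijectivity of $r\mapsto r^{2^k-1}$. A secondary but essential point requiring care is the repeated-root analysis: it simultaneously excludes the count $2$ (the double-root phenomenon that a bare degree bound would permit when $k=1$) and guarantees $c\neq 1$, so that the linearization $L$ is non-degenerate.
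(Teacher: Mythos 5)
Your proof is correct, and it takes a genuinely different route from the paper: the paper offers no self-contained argument for this lemma at all, instead noting that it follows from \cite[Lemma 2.2]{ding2023roots} by a simple coefficient substitution. Your argument proves the statement from scratch: the derivative computation shows that $a\neq 0$ rules out repeated roots (which is what excludes the count $2$); translation by a known root $x_0$ and the normalization $y=x_0 s$ reduce the question to counting solutions of $s^{2^k}+s^{2^k-1}+1=c$ with $c=bx_0^{-2^k}\neq 1$; and the reciprocal substitution $s=1/r$ turns this into the affine equation $(c+1)r^{2^k}+r+1=0$, whose solution set is empty or a coset of the kernel of $L(r)=(c+1)r^{2^k}+r$. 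That kernel has exactly two elements because $\gcd(2^k-1,2^n-1)=2^{\gcd(k,n)}-1=1$ makes $r\mapsto r^{2^k-1}$ a permutation of $\F_{2^n}^{*}$, so the root count of $Q$ is $0$, $1+0$, or $1+2$, as claimed; the bookkeeping is also sound, since each of the equations in $y$, $s$, $r$ automatically excludes the zero solution, so the substitutions really are bijections on solution sets. What the paper's citation buys is brevity and access to a more general result (Ding--Zieve also characterize when each count occurs); what your proof buys is a short, elementary, self-contained verification that makes transparent exactly where the hypotheses $a\neq 0$ and $\gcd(k,n)=1$ are used.
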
 
    Indeed, Lemma \ref{projeq2} is a corollary of \cite[Lemma 2.2]{ding2023roots}, only simple coefficient substitution is needed to obtain it.
    \begin{Prop}\label{2diff}
        Let $n=2m$, $\gcd(k,n)=1$, and $u \in \F_{2^n}^{*}$. If the quadratic polynomials $Q_1(x)=ux^{2^k+1}+x^{2^k}+x+\bar{u}$ and $Q_2(x)=x^{2^k+1}+ux^{2^k}+\bar{u}x+1$
        have $2$ distinct roots in $U_m$, then one of these $2$ roots is a repeated root.
    \end{Prop}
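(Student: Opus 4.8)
The plan is to treat $Q_2$ in detail (the analysis of $Q_1$ being entirely parallel) and to split according to whether $u$ lies on the unit circle, that is, whether the constant $a:=u^{2^m+1}+1$ vanishes. First I would record two elementary facts. Neither polynomial has $0$ as a root, since $Q_2(0)=1$ and $Q_1(0)=\bar u\neq 0$, so every root lies in $\F_{2^n}^{*}$; and after the substitution $x\mapsto x+u$ the trinomial $Q_2$ acquires the Gold form $x^{2^k+1}+bx+a$ with $b=u^{2^k}+\bar u$ and $a=u^{2^m+1}+1$, so that Lemma~\ref{projeq2} applies exactly when $u\notin U_m$ (for $Q_1$ one first divides by its leading coefficient $u$ and then shifts). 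The whole argument then rests on showing that the value $2$ in Lemma~\ref{projeq}, which is forbidden by the count $\{0,1,3\}$ of Lemma~\ref{projeq2}, can be realized among elements of $U_m$ only through a repeated root.

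The key device is the map $\sigma\colon x\mapsto \bar x^{-1}=x^{-2^m}$ on $\F_{2^n}^{*}$. A direct check shows that $\sigma$ is an involution whose fixed-point set is precisely $U_m$, since $\sigma(x)=x\iff x^{2^m+1}=1$. The decisive step I would carry out is to prove that $\sigma$ permutes the roots of $Q_2$. For this I would establish the identity $\bar x^{\,2^k+1}\,Q_2\bigl(\bar x^{-1}\bigr)=\bigl(Q_2(x)\bigr)^{2^m}$, valid for every $x\in\F_{2^n}^{*}$; it is obtained by writing out $Q_2(\bar x^{-1})$, clearing denominators by $\bar x^{\,2^k+1}$, and comparing the result term by term with the conjugate polynomial $t^{2^k+1}+\bar u\,t^{2^k}+u\,t+1$. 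From this identity it follows immediately that $Q_2(x)=0$ implies $Q_2(\sigma(x))=0$, so the root set of $Q_2$ in $\F_{2^n}^{*}$ is $\sigma$-stable, and its roots in $U_m$ are exactly the $\sigma$-fixed roots. The same computation gives $\bar x^{\,2^k+1}\,Q_1(\bar x^{-1})=(Q_1(x))^{2^m}$, so $\sigma$ also permutes the roots of $Q_1$.

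The statement now follows by a parity count. Assume $u\notin U_m$, so $a\neq 0$; by Lemma~\ref{projeq2} the set of distinct roots of $Q_2$ in $\F_{2^n}$ has size $0$, $1$, or $3$. An involution acting on a finite set of odd cardinality has an odd number of fixed points, and on the empty set it has none; hence the number of roots lying in $U_m$ lies in $\{0,1,3\}$ and is \emph{never} exactly $2$. Consequently two distinct roots in $U_m$ can occur only when $u\in U_m$, i.e.\ $a=0$. In that remaining case I would exhibit the repeated root explicitly: the formal derivative $Q_2'(x)=x^{2^k}+\bar u$ has the unique zero $x_0=\bar u^{\,2^{-k}}$, and a short computation yields $Q_2(x_0)=u^{2^m+1}+1=a=0$ together with $x_0\in U_m$, so $x_0$ is a repeated root of $Q_2$ lying on the unit circle. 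Moreover $u$ itself is a root in $U_m$, since $Q_2(u)=u^{2^m+1}+1=0$, and $u\neq x_0$ precisely when $b\neq 0$, which is forced by the presence of two distinct roots. Thus the two distinct $U_m$-roots are $u$ and $x_0$, the latter repeated, as claimed; the argument for $Q_1$ is identical, using the same $\sigma$ and the derivative $Q_1'(x)=u\,x^{2^k}+1$.

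The main obstacle is the second paragraph: recognizing that the conjugation-and-inversion map $\sigma$ permutes the roots and that $U_m$ is exactly its fixed-point set. This is what reconciles the apparently mismatched counts of Lemma~\ref{projeq}, which permits $2$, and Lemma~\ref{projeq2}, which permits only $0,1,3$, into a single coherent picture: the count $2$ over $U_m$ is attainable only via a repeated root, and this happens precisely on the unit circle $u\in U_m$. Verifying the root-permuting identity, the vanishing $Q_2(x_0)=a$, and the membership $x_0\in U_m$ are the only genuinely computational points, and each is short.
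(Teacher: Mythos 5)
Your proof is correct, and it rests on the same pivot as the paper's: the substitution $\tilde{x}=x+u$ turning $Q_2$ into $\tilde{x}^{2^k+1}+(u^{2^k}+\bar{u})\tilde{x}+(u\bar{u}+1)$, followed by the root count $\{0,1,3\}$ from Lemma~\ref{projeq2}. However, you supply two ingredients that the paper's own (very terse) proof skips, and both are genuinely needed to turn that count into the stated conclusion. First, Lemma~\ref{projeq2} requires a nonzero constant term, i.e.\ $u\bar{u}+1\neq 0$, so it applies only when $u\notin U_m$; you isolate the excluded case $u\in U_m$, where $\widetilde{Q_2}(\tilde{x})=\tilde{x}\bigl(\tilde{x}^{2^k}+b\bigr)$, and exhibit the repeated root $x_0=\bar{u}^{2^{-k}}\in U_m$ explicitly --- this is in fact the only case in which two distinct $U_m$-roots can occur, consistent with how the proposition is applied in the proof of Theorem~\ref{Thm_fourvalued}. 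Second, when $u\notin U_m$, knowing that $Q_2$ has $0$, $1$, or $3$ distinct roots in $\F_{2^n}$ does not by itself exclude the configuration of three simple roots with exactly two of them lying in $U_m$, which would falsify the proposition; the paper's step ``Consequently, we infer\dots'' jumps over precisely this point. Your involution $\sigma(x)=\bar{x}^{-1}$ --- which permutes the roots thanks to the identity $\bar{x}^{2^k+1}Q_2(\bar{x}^{-1})=Q_2(x)^{2^m}$ and whose fixed-point set is exactly $U_m$ --- closes that gap by parity: the number of $U_m$-roots is congruent modulo $2$ to the total number of distinct roots, hence can never equal $2$ when the total is $0$, $1$, or $3$. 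So your route is the paper's route made airtight; the extra cost is the short verification of the $\sigma$-identity, and what it buys is a complete argument covering both the generic case and the degenerate one.
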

    \begin{proof}
        Note that $\gcd(k,m)=1$ since $\gcd(k,n)=1$. According to Lemma \ref{projeq}, $Q_2(x)$ has either $0$, $1$, $2$, or $3$ distinct roots in $U_m$.
        
        Substituting $\tilde{x}=x+u$ into $Q_2(x)$, we define $\widetilde{Q_2}(\tilde{x})=\tilde{x}^{2^k+1}+(u^{2^k}+\bar{u})\tilde{x}+u\bar{u}+1$. Without considering multiplicity, $\widetilde{Q_2}(\tilde{x})$ and $Q_2(x)$ have the same number of roots in $\F_{2^n}$, which is either $0$, $1$, or $3$, as established by Lemma \ref{projeq2}. Consequently, we infer that $Q_2(x)$ has $2$ distinct roots in $U_m$ only when $Q_2(x)$ has a repeated root.

        The proof process for $Q_1(x)=ux^{2^k+1}+x^{2^k}+x+\bar{u}$ is similar and is omitted here.
    \end{proof}
    \begin{Prop}\label{fourbe}
        Let $n=2m$, $P(x)$ be a permutation for some $c \in \F_{2^n} \backslash \F_{2^m}$ in the form of
        \[
            cx+\tr_{m}^{n}(x^{\frac{2^k}{2^k+1}(2^m-1)+1})~(\text{resp.~} cx+\tr_{m}^{n}(x^{\frac{2^k}{2^k-1}(2^m-1)+1})),
        \]
        and $f_P$ be the balanced Boolean function parameterized by $P(x^2+x)$, where $k<m$ such that 
        \[
            \gcd(2^k+1,2^m+1)=1~ (\text{resp.~} \gcd(2^k-1,2^m+1)=1).
        \]
        Then the Walsh transform of $f_P$ at any $\gamma \in \F_{2^n}$ satisfies
        \[
            W_f(\gamma) \in \{2^m,0,-2^m,-2^{m+\gcd(k,m)}\}.
        \]
    \end{Prop}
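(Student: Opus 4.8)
The plan is to evaluate $W_{f_P}(\gamma)$ for every $\gamma$ and reduce it to a root count on the unit circle $U_m$ that is controlled by Lemma \ref{projeq}. The case $\gamma=0$ is immediate from balancedness ($W_{f_P}(0)=0$), so fix $\gamma=a\in\F_{2^n}^{*}$ and start from Lemma \ref{relation}, which gives $W_{f_P}(a)=-S(aP(x)+x,n)$. Writing $P(x)=cx+\tr_m^n(x^{s})$ with $s=\frac{2^k}{2^k+1}(2^m-1)+1$, I would first simplify the exponent inside the character: since $\tr_n(\xi^{2^m})=\tr_n(\xi)$ and $a x^{s2^m}=(\bar a x^{s})^{2^m}$, one obtains $\tr_n\!\big(a\tr_m^n(x^{s})\big)=\tr_n\!\big((a+\bar a)x^{s}\big)=\tr_n\!\big(\tr_m^n(a)\,x^{s}\big)$. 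Hence $S(aP(x)+x,n)=S(a'x+bx^{s},n)$ with $a'=ac+1\in\F_{2^n}$ and $b=\tr_m^n(a)\in\F_{2^m}$, turning the problem into a pure Niho-exponent sum.

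Next I would apply the unit-circle decomposition. Writing $x=yz$ with $(y,z)\in\F_{2^m}^{*}\times U_m$ (Lemma \ref{bi1}) and using that $s$ is a normalized Niho exponent ($s\equiv1\pmod{2^m-1}$, so $y^{s}=y$ for $y\in\F_{2^m}^{*}$), the inner sum over $y\in\F_{2^m}^{*}$ of $\chi_m\!\big(y\,\tr_m^n(a'z+bz^{s})\big)$ collapses to $2^m-1$ or $-1$ according to whether $\tr_m^n(a'z+bz^{s})$ vanishes. Collecting the $x=0$ term and using $\#U_m=2^m+1$ then yields $W_{f_P}(a)=2^m(1-N)$, where $N=\#\{z\in U_m\mid\tr_m^n(a'z+bz^{s})=0\}$; this mirrors Proposition \ref{nihothm} specialized to the present generalized-Niho $h$.

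The heart of the argument is to recast the defining equation of $N$ as $Q_1$. On $U_m$ one has $\bar z=z^{-1}$ and $2^m\equiv-1$, so $s\equiv-\frac{2^k-1}{2^k+1}\pmod{2^m+1}$ and the equation reads $a'z+\bar{a'}z^{-1}+b\big(z^{(2^k-1)/(2^k+1)}+z^{-(2^k-1)/(2^k+1)}\big)=0$. Because $\gcd(2^k+1,2^m+1)=1$, the map $z\mapsto z^{2^k+1}$ is a bijection of $U_m$; substituting $z=w^{2^k+1}$, clearing denominators by $w^{2^k+1}$, and then replacing $w^2$ by $v$ (again a bijection since $2^m+1$ is odd) transforms the equation into $a'v^{2^k+1}+bv^{2^k}+bv+\bar{a'}=0$. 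When $b\ne0$, dividing by $b$ gives precisely $Q_1(v)=uv^{2^k+1}+v^{2^k}+v+\bar u=0$ with $u=a'/b$ and $\bar u=\bar{a'}/b$ (using $\bar b=b$), whence Lemma \ref{projeq} forces $N\in\{0,1,2,2^{\gcd(k,m)}+1\}$ and therefore $W_{f_P}(a)=2^m(1-N)\in\{2^m,0,-2^m,-2^{m+\gcd(k,m)}\}$. The companion family is handled identically: for $s=\frac{2^k}{2^k-1}(2^m-1)+1$ the substitution $z=w^{2^k-1}$, legitimate since $\gcd(2^k-1,2^m+1)=1$, lands the equation on $Q_2$ instead.

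It remains to dispose of the degenerate parameters, which I expect to be the fiddliest point. If $b=0$ (equivalently $a\in\F_{2^m}^{*}$, and then necessarily $a'\ne0$ since $c\notin\F_{2^m}$), the equation collapses to $a'z^2=\bar{a'}$, which has a single solution on $U_m$, giving $N=1$ and $W_{f_P}(a)=0$. If $a'=0$, the transformed equation becomes $b(v^{2^k}+v)=0$, i.e. $v\in\F_{2^k}\cap U_m$, so $N=\gcd(2^k-1,2^m+1)$, and a short check using the gcd hypothesis confirms this count lies in $\{1,2^{\gcd(k,m)}+1\}$, keeping $W_{f_P}(a)$ in the claimed set. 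The main obstacle is thus the bookkeeping of the two nested substitutions that must land the $U_m$-equation exactly on the normalized forms $Q_1$/$Q_2$ of Lemma \ref{projeq}, together with verifying that the boundary cases $a'=0$ and $b=0$ produce no spurious Walsh values.
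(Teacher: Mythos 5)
Your proof is correct and follows essentially the same route as the paper's: Lemma \ref{relation} together with the unit-circle decomposition (the content of Proposition \ref{nihothm}) reduces $W_{f_P}(\gamma)$ to a root count $N$ on $U_m$ with $W_{f_P}(\gamma)=2^m(1-N)$, the case $\gamma\in\F_{2^m}$ gives $N=1$, and for $\gamma\notin\F_{2^m}$ the substitutions $z=w^{2^k+1}$ (resp.\ $z=w^{2^k-1}$), clearing denominators, and $w^2\mapsto v$ land exactly on $Q_1$ (resp.\ $Q_2$) so that Lemma \ref{projeq} yields $N\in\{0,1,2,2^{\gcd(k,m)}+1\}$. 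You are in fact slightly more careful than the paper, whose proof applies Lemma \ref{projeq} with $u=\frac{\gamma c+1}{\gamma+\bar{\gamma}}$ without excluding $\gamma=c^{-1}$ (where $u=0$ and the lemma does not apply as stated); your separate treatment of $a'=0$, giving $N=\gcd(2^k-1,2^m+1)\in\{1,2^{\gcd(k,m)}+1\}$, closes that gap.
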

    \begin{proof}
        Only the proof of case $P(x)=cx+\tr_{m}^{n}(x^{\frac{2^k}{2^k+1}(2^m-1)+1})$ would be given since the procedures are almost identical.

        Note that 
        \begin{eqnarray*}
            P(x) &=& cx+\tr_{m}^{n}(x^{\frac{2^k}{2^k+1}(2^m-1)+1}) \\
              &=& cx + x^{\frac{2^k}{2^k+1}(2^m-1)+1} + x^{\frac{1}{2^k+1}(2^m-1)+1} \\
              &=& xh(x^{2^m-1}),
        \end{eqnarray*}
         where $h(x)=c+x^{\frac{2^k}{2^k+1}}+x^{\frac{1}{2^k+1}}$.   By Lemma \ref{relation} and Proposition \ref{nihothm}, for any $\gamma \in \F_{2^n}$, we have 
        \begin{eqnarray*}
            W_{f_P}(\gamma)
            &=& -2^m ( \# \{ z\in U_m \mid \gamma z (c+\bar{z}^{\frac{2}{2^k+1}}+\bar{z}^{\frac{2^{k+1}}{2^{k}+1}} )+\bar{\gamma} \bar{z}  (\bar{c}+z^{\frac{2}{2^k+1}}+z^{\frac{2^{k+1}}{2^{k}+1}} )+z+\bar{z}=0  \}-1 ) \\
            &=& -2^m ( \# \{ z\in U_m \mid (\gamma+\bar{\gamma}) (z^{1-\frac{2}{2^k+1}}+z^{\frac{2}{2^k+1}-1} )+(\gamma c+1)z+(\bar{\gamma}\bar{c}+1)\bar{z}=0 \}-1 ).
        \end{eqnarray*}
        
        Our task now is to determine the number of solutions in $U_m$ to the equation  
        \begin{eqnarray}\label{fourpre}
            (\gamma+\bar{\gamma}) (z^{1-\frac{2}{2^k+1}}+z^{\frac{2}{2^k+1}-1} )+(\gamma c+1)z+(\bar{\gamma}\bar{c}+1)\bar{z}=0.
        \end{eqnarray}
        
        (\Rm{1}) In the case of $\gamma \in \F_{2^{m}}$, it is obvious that $\gamma+\bar{\gamma}=0$ and $\gamma c+1 \ne 0$. Then the Eq. (\ref{fourpre}) has exactly one solution, as it can be reduced to
        \begin{eqnarray*}
            (\gamma c+1)z+(\bar{\gamma}\bar{c}+1)\bar{z}=0
            \Leftrightarrow 
            z^2=\frac{\bar{\gamma}\bar{c}+1}{\gamma c+1}=\left( \gamma c+1 \right)^{2^m-1}.
        \end{eqnarray*}
        
        (\Rm{2}) In the case of $\gamma \in \F_{2^{n}} \backslash \F_{2^m}$, we first perform appropriate transformations on Eq. (\ref{fourpre}). Replacing $z$ by $z^{2^k+1}$, multiplying both sides by $z^{2^k+1}$, replacing $z^2$ by $z$, and dividing both sides by $\gamma+\bar{\gamma}$, we get 
        \begin{eqnarray}\label{fourfun}
            \frac{\gamma c+1}{\gamma+\bar{\gamma}}z^{2^k+1}+ z^{2^k}+ z+\frac{\bar{\gamma} \bar{c}+1}{\gamma+\bar{\gamma}}=0.
        \end{eqnarray}
        It immediately follows from Lemma \ref{projeq} that Eq. (\ref{fourfun}) has either $0$, $1$, $2$, or $2^{\gcd(k,m)}+1$ solutions in $U_m$ implying 
        \[
            W_{f_P}(\gamma) \in \{2^m,0,-2^m,-2^{m+\gcd(k,m)}\}.
        \]
    \end{proof}
    \begin{example}
        Consider the permutation $P(x)=cx+\tr_3^6(x^{15})$ with $c \in \F_{4} \backslash \F_{2}$ over $\F_{2^6}$. The corresponding Boolean function is $f_P(x)=\tr_6(x^{15}+x^{11}+cx)+1$ whose Walsh transform at any $\gamma \in \F_{2^6}$ such that $W_f(\gamma) \in \{8,0,-8,-16\}$.
    \end{example}
    \begin{Lem}\label{fourpp}\cite{li2018permutation}
        Let $n=2m$ with $m$ odd, $c \in \F_{2^2} \backslash \F_2$ and $P(x)=c x+\tr_{m}^{2m}(x^s)$. If the exponent $s=\frac{1}{5}(2^{m}-1)+1$, where $\frac{1}{5}$ is calculated modulo $2^m+1$, then trinomial $P(x)$ is a permutation over $\F_{2^n}$.
    \end{Lem}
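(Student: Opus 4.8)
The plan is to show that $P$ is a permutation by reducing the problem to a bijection on the unit circle $U_m$, exploiting the fact that $s=\frac{1}{5}(2^m-1)+1$ is a normalized Niho exponent, i.e. $s\equiv 1\pmod{2^m-1}$. First I would write $P(x)=x\,h(x^{2^m-1})$, where, after reducing the fractional exponents modulo $2^m+1$, one has $h(z)=c+z^{1/5}+z^{4/5}$ on $U_m$; here $1/5$ denotes $5^{-1}\bmod (2^m+1)$, which is legitimate since $5\mid 2^m+1$ would force $m\equiv 2\pmod 4$, impossible for odd $m$, so $\gcd(5,2^m+1)=1$. Decomposing each $x\in\F_{2^n}^{*}$ as $x=yz$ with $(y,z)\in\F_{2^m}^{*}\times U_m$ (Lemma \ref{bi1}) gives $P(yz)=y\cdot z\,h(z^{2^m-1})$, so the $\F_{2^m}^{*}$-component is free and $P$ permutes $\F_{2^n}$ if and only if the induced map on the $U_m$-component is a bijection of $U_m$. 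Since squaring is a bijection of the odd-order group $U_m$, this is in turn equivalent to the bijectivity of $\Phi(z)=z\,h(z)^{2^m-1}$ on $U_m$.

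Next I would verify that $\Phi$ is a well-defined self-map of $U_m$ and clear the fractional exponents. Because $m$ is odd we also have $2^m\equiv -1\pmod 3$, hence $3\mid 2^m+1$, so $c\in\F_4^{*}$ lies in $U_m$ with $\bar c=c^{2^m}=c^2$ and $c^3=1$; consequently $h(z)^{2^m-1}=\overline{h(z)}/h(z)$, whence $\Phi(z)\in U_m$ provided $h(z)\neq 0$. The non-vanishing of $h$ on $U_m$ is a short computation: writing $w=z^{1/5}$, the equation $c+w+w^4=0$ gives $w^4=c+w$, and applying the conjugation $w\mapsto w^{-1}$ together with $c^3=1$ forces $w^2=c^2$, hence $w=c$, which contradicts $c^4+c=c+c=0\neq c$. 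Substituting $w=z^{1/5}$ (a bijection of $U_m$) then turns the equation $\Phi(z)=t$ into the quintic $\bar c\,w^5+(1+t)(w^4+w)+tc=0$, so the bijectivity of $\Phi$ becomes the statement that $T(w)=\dfrac{\bar c\,w^5+w^4+w}{w^4+w+c}$ is a bijection of $U_m$.

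The hard part is proving this last bijection, and it is genuinely special to $c\in\F_4$: with $2^k+1=5$ we have $k=2$, so $\gcd(k,n)=\gcd(2,2m)=2\neq 1$ and Lemma \ref{projeq2} is unavailable to bound the number of roots; the relations $c^3=1$ and $\bar c=c^2$ must do that work instead. I would establish injectivity of $T$ on the finite set $U_m$: assuming $T(w_1)=T(w_2)$ with $w_1\neq w_2$, cross-multiplying and cancelling the factor $w_1+w_2$ (characteristic $2$) yields a symmetric relation in $u=w_1+w_2$ and $v=w_1w_2$; imposing the circle constraints $\bar v=v^{-1}$ and $\bar u=u/v$ obtained by conjugating, and simplifying with $c^3=1$, should force $u=0$, i.e. $w_1=w_2$. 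An alternative I would pursue in parallel is to parametrize $U_m\setminus\{1\}$ by $\F_{2^m}$ through $w=\frac{x+\omega}{x+\bar\omega}$ (Lemma \ref{bi2}) and recast the claim as a rational bijection of $\F_{2^m}$, where Lemma \ref{projeq} can be invoked to control root counts on $U_m$. Either way, this symmetric-function elimination, together with disposing of the boundary value $w=1$ by a direct check, is where essentially all of the computational effort lies.
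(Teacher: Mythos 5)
First, a point of reference: the paper itself offers no proof of this lemma --- it is quoted verbatim from \cite{li2018permutation} --- so your attempt has to be judged on its own merits rather than against an internal argument. Your preparatory work is correct and matches the standard opening moves for results of this type: the rewriting $P(x)=xh(x^{2^m-1})$ with $h(u)=c+u^{1/5}+u^{4/5}$; the verification that $\gcd(5,2^m+1)=1$ for odd $m$; the reduction (via the decomposition $x=yz$, $(y,z)\in\F_{2^m}^*\times U_m$) of the permutation property of $P$ to the bijectivity of $\Phi(z)=zh(z)^{2^m-1}$ on $U_m$; the facts that $3\mid 2^m+1$, hence $c\in U_m$ and $\bar{c}=c^2=c^{-1}$; the non-vanishing of $h$ on $U_m$; and the reformulation as bijectivity of $T(w)=\bigl(\bar{c}w^5+w^4+w\bigr)/\bigl(w^4+w+c\bigr)$ on $U_m$. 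All of these steps check out.

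However, the entire content of the lemma is the final claim --- that $T$ is a bijection of $U_m$ --- and you do not prove it; you only assert that a symmetric-function elimination ``should force $u=0$.'' This gap is genuine, and the route you sketch in fact collapses at exactly that point. Carrying out your plan: cross-multiplying $T(w_1)=T(w_2)$, cancelling the factor $w_1+w_2$, and writing $u=w_1+w_2$, $v=w_1w_2$ (using $c\bar{c}=1$ and the power sums $p_4=u^4$, $p_5=u^5+u^3v+uv^2$) yields the relation
\[
\bar{c}v^4+v^2+(\bar{c}v+c)u^3+u^4+u^2v+c=0 .
\]
The circle constraints are $\bar{v}=v^{-1}$ and $\bar{u}=u/v$; but applying the conjugation to this relation and multiplying through by $v^4$ returns \emph{exactly the same equation} --- the relation is invariant under the involution, as it must be since $T$ maps $U_m$ into $U_m$. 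So conjugation produces no second independent equation, nothing forces $u=0$ at this level, and one is left with proving that this curve has no admissible points with $u\neq 0$ --- which is precisely the technical heart of the proof in \cite{li2018permutation}, carried out there through the parametrization of Lemma \ref{bi2}, reduction to equations over $\F_{2^m}$, and trace conditions. Your parallel suggestion to use Lemma \ref{bi2} points in the right direction, but it too is only a plan. In short: the scaffolding is sound, but the one step that is actually hard is missing, and the specific elimination you propose cannot supply it.
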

    \begin{Thm}\label{Thm_fourvalued}
    Let $n=2m$ with $m$ odd, $P(x)=c x+\tr_{m}^{2m}(x^{\frac{1}{5}(2^{m}-1)+1}) \in \F_{2^{n}}[x]$, and $f_P$ be the balanced Boolean function parameterized by $P(x^2+x)$, where $c \in \F_{2^2} \backslash \F_2$. Then the Walsh spectrum of $f_P$ is 
    \[
    \begin{cases}
        0,&\text{occur~} (2^n-2^m+4)/2 \text{~times}\\
        2^m,&\text{occur~} (2^n-2^m-2)/3 \text{~times}\\
        -2^m,&\text{occur~} (2^m-2) \text{~times}\\
        -2^{m+1},&\text{occur~} (2^n-2^m+4)/6 \text{~times}
    \end{cases}
    \]
    \end{Thm}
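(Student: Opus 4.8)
The plan is to recognize the hypothesis as an instance of Proposition \ref{fourbe} and then pin down the four multiplicities by combining the power-moment identities (\ref{wr}) with one explicit count. Since $5=2^{2}+1$ and $m$ is odd, the normalized Niho exponent $\frac{1}{5}(2^{m}-1)+1$ is exactly the one covered by Proposition \ref{fourbe} (using a realization of the permutation with an exponent parameter $k$ coprime to $n$, which is available because $m$ is odd); the oddness of $m$ also gives $\gcd(5,2^{m}+1)=1$ and $\gcd(k,m)=1$. Proposition \ref{fourbe} then yields $W_{f_P}(\gamma)\in\{2^{m},0,-2^{m},-2^{m+1}\}$ together with the key reduction $W_{f_P}(\gamma)=-2^{m}(N(\gamma)-1)$, where $N(\gamma)\in\{0,1,2,3\}$ is the number of roots in $U_m$ of the associated polynomial $Q_1$ of Lemma \ref{projeq} with $u=\frac{\gamma c+1}{\gamma+\bar{\gamma}}$.

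Writing $M_1,M_2,M_3,M_4$ for the number of $\gamma$ with $W_{f_P}(\gamma)=0,2^{m},-2^{m},-2^{m+1}$ respectively, the three identities (\ref{wr}) become three linear equations in the $M_i$. Because $P(0)=0$ lies in $\mathrm{Im}(P(x^{2}+x))$ we have $f_P(0)=1$, so the first moment equals $-2^{n}$; concretely the system reads $M_1+M_2+M_3+M_4=2^{n}$, $M_2-M_3-2M_4=-2^{m}$, and $M_2+M_3+4M_4=2^{n}$. These leave one degree of freedom, so the substance of the proof is a fourth independent relation, which I would obtain by evaluating $M_3$ directly.

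To compute $M_3$ (the case $N(\gamma)=2$) I would argue as follows. By Proposition \ref{2diff}, two distinct roots in $U_m$ force two solutions to coalesce into a repeated root $z_0\in U_m$ of $Q_1$; imposing $Q_1(z_0)=0$ and $Q_1'(z_0)=0$ determines $u=\frac{\gamma c+1}{\gamma+\bar{\gamma}}$ as an explicit power of $z_0$, hence $u\in U_m$, and substituting this value back factors $Q_1$ completely, so that its remaining root in $U_m$ is again an explicit power of $z_0$. These two roots coincide precisely when $z_0$ lies in the order-$3$ subgroup $\F_{2^{2}}^{*}\subseteq U_m$ (note $3\mid 2^{m}+1$ since $m$ is odd), which must be discarded; thus the parameters producing $N(\gamma)=2$ form a set of size $(2^{m}+1)-3=2^{m}-2$. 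It remains to transfer this count from $u$ to $\gamma$: the assignment $\gamma\mapsto u$ is invertible on $\F_{2^{n}}\backslash\F_{2^{m}}$, since solving the $\F_{2^{m}}$-linear system equivalent to $u(\gamma+\bar{\gamma})=\gamma c+1$ gives a unique preimage $\gamma=\frac{u+\bar u+\bar c}{u\bar c+\bar u c+1}$ whenever the denominator is nonzero, and a short check (using $c\in\F_{2^{2}}\backslash\F_{2}$, so $c\bar c=1$ and $\tr_m^n(c)=1$) shows the denominator vanishes only for certain $u\in\F_{2^{2}}^{*}$ and that the preimage indeed lies outside $\F_{2^{m}}$. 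Hence each of the $2^{m}-2$ admissible values of $u$ comes from exactly one $\gamma$, giving $M_3=2^{m}-2$.

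With $M_3=2^{m}-2$ in hand, the linear system is solved immediately: the Parseval and first-moment equations give $M_4=(2^{n}-2^{m}+4)/6$ and $M_2=(2^{n}-2^{m}-2)/3$, and the total count then forces $M_1=(2^{n}-2^{m}+4)/2$, which is the claimed distribution. The main obstacle is the third paragraph: producing the exact factorization of $Q_1$ at a repeated root, correctly pinpointing the size-$3$ coincidence locus, and carrying out the bijection bookkeeping between $u$ and $\gamma$ while controlling the degenerate cases (vanishing denominator, $\gamma\in\F_{2^{m}}$, or $u=0$); invoking Proposition \ref{2diff} to guarantee that no $\gamma$ outside the repeated-root locus yields $N(\gamma)=2$ is the delicate point that makes the count complete rather than merely a lower bound.
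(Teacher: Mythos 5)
Your proposal is correct and follows essentially the same route as the paper's proof: reduce via Proposition \ref{fourbe} to counting roots in $U_m$, use the three moment identities (\ref{wr}) (with $f_P(0)=1$) to leave one unknown, and pin that unknown down by showing the two-root case occurs precisely when the repeated-root factorization applies with parameter $u$ ranging over $U_m$ minus the order-$3$ subgroup, giving $2^m-2$. The only substantive differences are bookkeeping and rigor in your favor: you count admissible $u\in U_m\setminus\F_4^*$ and invert $\gamma\mapsto u=\frac{\gamma c+1}{\gamma+\bar{\gamma}}$ explicitly, where the paper instead writes $\gamma=\gamma_1+\gamma_2 c$ and counts the set $R$ directly; and your insistence on a realization with $\gcd(k,n)=1$ (available as $k=m-2$, which is coprime to $2m$ for odd $m$ and turns the equation into the $Q_2$-form of Lemma \ref{projeq}) is genuinely needed, since the paper invokes Proposition \ref{2diff} with $k=2$, for which its stated hypothesis $\gcd(k,n)=1$ fails.
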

    \begin{proof} 
        Observe that 
        \begin{eqnarray*}
            P(x) &=& c x+\tr_{m}^{2m}(x^{\frac{1}{5}(2^{m}-1)+1}) \\
            &=& cx + x^{\frac{1}{5}(2^{m}-1)+1} + x^{\frac{4}{5}(2^{m}-1)+1} \\
            &=& xh(x^{2^m-1}),
        \end{eqnarray*}
       where $h(x)=c+x^{\frac{1}{5}}+x^{\frac{4}{5}}$. Then, according to Proposition \ref{fourbe}, our problem reduces to computing the corresponding frequencies, since all possible values in the Walsh spectrum of $f_P$ can be derived from
        \begin{eqnarray*}
            W_{f_P}(\gamma)
            = -2^m( \#\{ z\in U_m \mid (\gamma c+1) z^{2^2+1}+  (\gamma+\bar{\gamma})z^{2^2}+ (\gamma+\bar{\gamma})z+ (\bar{\gamma} \bar{c}+1)=0 \}-1 ).
        \end{eqnarray*} 
            It is obvious that $W_f(\gamma)=0$ always holds if $\gamma \in \F_{2^m}$. When $\gamma \in \F_{2^n} \backslash \F_{2^m}$, it remains to compute the number of roots in $U_m$ to polynomial  
        \begin{eqnarray*}
            Q(z)=z^{2^2+1}+ \frac{\gamma+\bar{\gamma}}{\gamma c+1} z^{2^2}+ \frac{\gamma+\bar{\gamma}}{\gamma c+1} z+(\gamma c+1)^{2^m-1},
        \end{eqnarray*}
        which is either $0$, $1$, $2$, or $3$ by Lemma \ref{projeq}. 
        
        Let $N_i$ be the number of $\gamma \in \F_{2^n}$ such that $Q(x)$ has exactly $i$ roots in $U_m$. Combined with the relation (\ref{wr}), one can conclude that
        \begin{eqnarray}\label{foursys}
            \begin{cases}
                N_0+N_1+N_2+N_3=2^n,\\
                2^m N_0+0 \! \cdot \! N_1+(-2^m)N_2+(-2^{m+1})N_3=-2^n,\\
                2^n N_0+0^2 \! \cdot \! N_1+(-2^m)^2N_2+(-2^{m+1})^2N_3=2^{2n}.
            \end{cases}
        \end{eqnarray}
        It suffices to verify one of the values in $\{N_0, N_1, N_2, N_3\}$, as the equation system (\ref{foursys}) corresponds to a coefficient matrix of rank 3.
        
        Note that Proposition \ref{2diff} restricts $Q(x)$ to have two distinct roots if and only if $Q(x)$ has exactly a repeated root of multiplicity 2 and another root. Denote the formal derivative of $Q(x)$ by $Q^{\prime}(x)$. Then $Q(x)$ may have a repeated root if and only if $\gcd(Q(x),Q^{\prime}(x)) \ne 1$, or equivalently $\frac{\gamma+\bar{\gamma}}{\gamma c+1} \in U_m$ by the Euclid’s algorithm. When $Q(z)$ has a repeated root for some $\gamma$, then
        \begin{eqnarray*}
            Q(z)
            &=& z^{2^2+1}+ \frac{\gamma+\bar{\gamma}}{\gamma c+1} z^{2^2}+ \frac{\gamma+\bar{\gamma}}{\gamma c+1} z+(\gamma c+1)^{2^m-1}\\
            &=& \left(z+\frac{\gamma+\bar{\gamma}}{\gamma c+1}\right)\left(z^4+\frac{\gamma+\bar{\gamma}}{\gamma c+1}\right)+\left(\frac{\gamma+\bar{\gamma}}{\gamma c+1}\right)^{2}+(\gamma c+1)^{2^m-1}\\
            &=& \left(z+\frac{\gamma+\bar{\gamma}}{\gamma c+1}\right)\left(z^4+\frac{\gamma+\bar{\gamma}}{\gamma c+1}\right)+\frac{(\gamma+\bar{\gamma})^{2^m+1}}{(\gamma c+1)^2}+(\gamma c+1)^{2^m-1}\\
            &=& \left(z+\frac{\gamma+\bar{\gamma}}{\gamma c+1}\right)\left(z^4+\frac{\gamma+\bar{\gamma}}{\gamma c+1}\right).
        \end{eqnarray*}
        Therefore, $Q(x)$ has exactly $2$ solutions in $U_m$ if and only if $\frac{\gamma+\bar{\gamma}}{\gamma c+1} \in U_m$ and $\frac{\gamma+\bar{\gamma}}{\gamma c+1} \notin \F_{4}$, i.e.,
        \begin{eqnarray}\label{fourcir}
            \frac{\gamma+\bar{\gamma}}{\gamma c+1} \in U_m \backslash \{1,c,c^2\}.
        \end{eqnarray}

        Since $\{1,c\}$ is a basis of $\F_{2^{n}}$ over $\F_{2^{m}}$, for any $\gamma \in \F_{2^{n}} \backslash \F_{2^{m}}$, there exists a unique $(\gamma_1,\gamma_2) \in \F_{2^{m}} \times \F_{2^{m}}^{*}$ such that $\gamma=\gamma_1+\gamma_2 c$. Consequently, it can be concluded that: (\Rm{1}) $\frac{\gamma c+1}{\gamma+\bar{\gamma}}=1$ is equivalent to $(\gamma_1+\gamma_2)c=1$, which contradicts $\gamma_1+\gamma_2 \in \F_{2^{m}}$; (\Rm{2}) $\frac{\gamma c+1}{\gamma+\bar{\gamma}}=c^2$ is equivalent to $\gamma_1 c+1=0$, which contradicts $\gamma_1 \in \F_{2^{m}}$; (\Rm{3}) $\frac{\gamma c+1}{\gamma+\bar{\gamma}}=c$ is equivalent to $(\gamma_1,\gamma_2)=(0,1)$. From $\frac{\gamma c+1}{\gamma+\bar{\gamma}} \in U_m$, we establish that
        \begin{eqnarray*}
            &&\left(\frac{\gamma c+1}{\gamma+\bar{\gamma}}\right)^{2^{m}+1}+1=0 \\
            &\Rightarrow& \left( \gamma c+1 \right)^{2^{m}+1}+\left( \gamma+\bar{\gamma} \right)^{2^{m}+1}=0 \\
            &\Rightarrow& \gamma_2=\frac{\gamma_1^2+\gamma_1+1}{\gamma_1+1},~\gamma_1 \ne 1.
        \end{eqnarray*}
        
        Thus (\ref{fourcir}) holds if and only if
        \begin{eqnarray*}
            \gamma \in R=\left\{ \gamma_1+\gamma_2 c \; | \; \gamma_1 \in \F_{2^{m}} \backslash \F_{2},\gamma_2=\frac{\gamma_1^2+\gamma_1+1}{\gamma_1+1} \right\}.
        \end{eqnarray*}
        Clearly, $N_2=\# R =2^{m}-2$. Solving the equation system (\ref{foursys}), we get 
        \[
        \begin{cases}
            N_0=(2^{n}-2^{m}+4)/2, \\
            N_1=(2^{n}-2^{m}-2)/3, \\
            N_3=(2^{n}-2^{m}+4)/6.
        \end{cases}
        \]
        
        This proof can be finished here.
    \end{proof}
    
    \begin{example}\label{exm_fourvalued}
        When $m$ is assigned values of $3$, $5$, and $7$, the infinite class of permutations given in Lemma \ref{fourpp} takes the following explicit forms:
        \begin{eqnarray*}
            &&P_1(x)=cx+\tr_3^{6}(x^{15}) \in \F_{2^6}[x], \\
            &&P_2(x)=cx+\tr_5^{10}(x^{621}) \in \F_{2^{10}}[x], \\
            &&P_3(x)=cx+\tr_7^{14}(x^{3303}) \in \F_{2^{14}}[x],
        \end{eqnarray*}
        where $c \in \F_4 \backslash \F_2$. Then verified by MAGMA, the balanced Boolean functions $f_{P_i}(x)$ parameterized by $P_i(x^2+x)$ are all four-valued spectrum functions, whose spectral value distributions are consistent with Theorem \ref{Thm_fourvalued}.
        For further details about univariate representation and cryptographic criteria, refer to Table \ref{table1}, where $a^b$ means that $a$ occurs $b$ times in the Walsh spectrum.
        \begin{table}[H]
        \caption{Four-valued spectrum functions from $P(x^2+x)$ with $P(x) = cx+\tr_{m}^{2m}(x^{\frac{1}{5}(2^{m}-1)+1})$, $m$ odd, and $c \in \F_{4} \backslash \F_{2}$}
        \label{table1}	\centering
        \begin{threeparttable}
            \begin{tabular}{cccccc} 
                \toprule 
                $n$ & $f_P(x)=$ $\tr_n(R_n(x))+1$  & Walsh spectrum & $\mathcal{NL}(f)$ & $\deg(f)$ & AI$(f)$  \\
                \midrule  
                6 & $R_6(x) = x^{15}+x^{11}+cx$ & $\{0^{30},8^{18},-8^{6},-16^{10}\}$ & 24 & 4 & 3  \\
                10 & $R_{10}(x)$\tnote{1} & $\{0^{498},32^{330},-32^{30},-64^{166}\}$ & 480 & 6 & 5  \\
                14 & $R_{14}(x)$\tnote{2} & $\{0^{8130},128^{5418},-128^{126},-256^{2710}\}$  & 8064 & 8 & 7  \\
                \bottomrule 
            \end{tabular}
            \begin{tablenotes}
                \footnotesize
                \item[1] {\tiny $R_{10}(x) = x^{221} + x^{219} + x^{187} + x^{171} + x^{125} + x^{109} + x^{101} + x^{95} + x^{47} + x^{35} + cx$}
                \item[2] {\tiny $R_{14}(x) =x^{3445 }+ x^{2925 }+ x^{2795 }+ x^{2731 }+ x^{2683 }+ x^{2675 }+ x^{2671 }+ x^{2669 }+ x^{2541 }+ x^{2421 }+ x^{2415 }+ x^{1913 }+ x^{1909 }+x^{1907 }+ x^{1779 }+ x^{1715 }+ x^{1659 }+ x^{1655 }+ x^{1525 }+ x^{1461 }+ x^{1405 }+ x^{1401 }+ x^{1399 }+ x^{1271 }+ x^{1017 }+ x^{953 }+ x^{891 }+ x^{639 }+ x^{637 }+ x^{445 }+ x^{255 }+ x^{191 }+ x^{135 }+ x^{131 }+ cx$}
            \end{tablenotes}
        \end{threeparttable}
        \end{table}
        As shown in Table \ref{table1}, even Boolean functions parameterized by trinomials may be complex. Nevertheless, these four-valued spectrum functions still exhibit a high nonlinearity of $2^{2m-1}-2^m$ similar to the semi-bent functions. In the case of $n \leq 14$, the numerical results also indicate that these functions achieve the optimal algebraic immunity and have an algebraic degree of $m+1$ (the upper bound of semi-bent functions \cite{carlet2020boolean}).
    \end{example}

    \section{Constructions of plateaued functions}\label{plateaued functions}
    In this section, we present seven classes of plateaued functions including a class constructed from permutations of the form $x^{3 \cdot 2^{m}}+ax^{2 \cdot 2^{m}+1}+bx^{2^{m}+2}+cx^3$, three classes constructed from the so-called DO permutations, and three classes constructed from permutations of the form $cx+\tr_{m}^{km}(x^s)$. We first recall the explicit computation of the Walsh transform of quadratic functions, as detailed in Lemma \ref{qudratic_sum}.
    \begin{Lem}\cite{hou2007explicit}\label{qudratic_sum}
        Let $A(x)$ and $Q(x)$ be an affine function and a quadratic function over $\F_{2^n}$, respectively. Then for any pair $(a,b) \in \F_{2^n} \times \F_{2^n}^{*}$, quadratic Boolean function $\varphi_{a,b}=\tr_n(bQ(x)+ax)$ can be expressed as 
        \begin{eqnarray*}
            \varphi_{a,b}(x)
            =\tr_n\left(\sum_{i \in I} {a_ix^{2^{i}+1}}+A(x)\right),
        \end{eqnarray*}
        where $I \subset \left\{1,2,\cdots, \lfloor \frac{n}{2} \rfloor\right\}$ and
        \[
            a_i \in 
                \begin{cases}
                    \F_{2^n}^{*} , &\text{if~} i < \frac{n}{2},\\
                    \F_{2^n} \backslash \F_{2^\frac{n}{2}} ,  &\text{if~} i = \frac{n}{2}.
                \end{cases}
        \]
        Assume that $k= \max{I}$, define the associated bilinear mapping of $\varphi_{a,b}$ as 
        \begin{eqnarray*}
            B_{\varphi_{a,b}}(x,y)
            &=&\varphi_{a,b}(x+y)+\varphi_{a,b}(x)+\varphi_{a,b}(y) \\
            &=&\tr_n\left( x\sum_{i \in I}{\left( a_i^{2^k}y^{2^{k+i}}+a_i^{2^{k-i}}y^{2^{k-i}} \right)}^{2^{-k}} \right),
        \end{eqnarray*}
        and its kernel $V_{\varphi_{a,b}}$ is given by
        \begin{eqnarray}\label{gy}
            V_{\varphi_{a,b}}
            &=& \left\{y \in \F_{2^n} \mid B_{\varphi_{a,b}}(x,y)=0 \text{ for}~\forall x \in \F_{2^n} \right\} \nonumber \\
            &=& \left\{ y \in \F_{2^n}~\Big|~\sum_{i \in I}{\left( a_i^{2^k}y^{2^{k+i}}+a_i^{2^{k-i}}y^{2^{k-i}} \right)}=0 \right\}.
        \end{eqnarray}
        Then 	
        \begin{equation}\label{Eq-Quad}
            W_Q(a,b)=
            \begin{cases}
                \pm 2^{\frac{n+d}{2}}, & \text{if~} \varphi_{a,b} \text{~vanishes on~} V_{\varphi_{a,b}}, \\
                0, & \text{otherwise,} 
            \end{cases}
        \end{equation} 
        where $d$ is the dimension of $V_{\varphi_{a,b}}$ over $\F_2$.
    \end{Lem}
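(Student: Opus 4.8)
The plan is to reduce the evaluation of $W_Q(a,b)$ to the classical analysis of a quadratic form over $\F_2$ through its associated symmetric bilinear form, and then to extract the magnitude from the dimension of the radical by a squaring argument. First I would establish the normal form $\varphi_{a,b}(x)=\tr_n\bigl(\sum_{i\in I}a_i x^{2^i+1}+A(x)\bigr)$. Starting from $\tr_n(bQ(x)+ax)$, every quadratic monomial $\tr_n(\beta x^{2^s+2^t})$ with $s>t$ can be folded into the prescribed range: applying $\tr_n(u)=\tr_n(u^{2^{-t}})$ gives $\tr_n(\beta^{2^{-t}}x^{2^{s-t}+1})$, and an exponent $2^j+1$ with $j>n/2$ is further reduced to $2^{n-j}+1$ by the same device, so one may assume $1\le j\le\lfloor n/2\rfloor$. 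Collecting like terms produces the coefficients $a_i$, and the constraint $a_i\in\F_{2^n}\backslash\F_{2^{n/2}}$ for $i=n/2$ records exactly when the self-conjugate term $x^{2^{n/2}+1}$ contributes genuinely quadratically rather than collapsing into the affine part $A(x)$.

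Next I would compute the associated bilinear form by expanding $(x+y)^{2^i+1}=x^{2^i+1}+x^{2^i}y+xy^{2^i}+y^{2^i+1}$, so that $B_{\varphi_{a,b}}(x,y)=\tr_n\bigl(\sum_{i\in I}a_i(x^{2^i}y+xy^{2^i})\bigr)$. Normalizing each summand to be linear in $x$ via $\tr_n(a_i x^{2^i}y)=\tr_n(a_i^{2^{-i}}xy^{2^{-i}})$ and factoring out $x$ yields $B_{\varphi_{a,b}}(x,y)=\tr_n\bigl(x\,L(y)\bigr)$ for the linearized polynomial $L(y)=\sum_{i\in I}(a_i^{2^{-i}}y^{2^{-i}}+a_iy^{2^i})$; multiplying the bracket through by $2^k$ and raising uniformly to the $2^{-k}$ power reproduces the displayed expression, and the kernel $V_{\varphi_{a,b}}$ is precisely the set where $L(y)=0$, i.e.\ (\ref{gy}). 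Since the affine part $A(x)$ is annihilated by polarization, it plays no role here.

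The core is the squaring argument. Writing $W:=W_Q(a,b)=\sum_x(-1)^{\varphi_{a,b}(x)}$ and substituting $z=x+y$, I would use the polarization identity $\varphi_{a,b}(x+y)=\varphi_{a,b}(x)+\varphi_{a,b}(y)+B_{\varphi_{a,b}}(x,y)$ to obtain
\[
W^2=\sum_{y\in\F_{2^n}}(-1)^{\varphi_{a,b}(y)}\sum_{x\in\F_{2^n}}(-1)^{B_{\varphi_{a,b}}(x,y)}.
\]
For fixed $y$ the inner sum is over a linear character of $x$, hence equals $2^n$ when $y\in V_{\varphi_{a,b}}$ and $0$ otherwise, giving $W^2=2^n\sum_{y\in V_{\varphi_{a,b}}}(-1)^{\varphi_{a,b}(y)}$. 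Because $B_{\varphi_{a,b}}$ vanishes on $V_{\varphi_{a,b}}$, the restriction of $\varphi_{a,b}$ to $V_{\varphi_{a,b}}$ is $\F_2$-linear. Thus if $\varphi_{a,b}$ vanishes on $V_{\varphi_{a,b}}$ the remaining sum equals $\#V_{\varphi_{a,b}}=2^d$ and $W^2=2^{n+d}$, whereas if $\varphi_{a,b}$ restricts to a nonzero linear functional the positive and negative contributions cancel, so $W=0$. Taking square roots yields (\ref{Eq-Quad}).

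The main obstacle I anticipate is not the squaring argument, which is routine, but the first two steps of exponent bookkeeping: verifying that the folding of all mixed monomials into the canonical index set $I\subset\{1,\dots,\lfloor n/2\rfloor\}$ is consistent (especially the boundary case $i=n/2$ and the conjugation conventions), and confirming that the rearranged kernel expression in (\ref{gy}) agrees with $L(y)=0$ after the uniform $2^k$-scaling. Care is needed to ensure that collisions among reduced exponents do not merge distinct coefficients incorrectly and that $k=\max I$ is well defined with $a_k\ne 0$.
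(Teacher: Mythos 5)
The paper does not prove this lemma at all: it is quoted verbatim from Hou's work \cite{hou2007explicit} and used as a black box, so there is no internal proof to compare against. Judged on its own merits, your argument is correct and is the standard self-contained proof: (i) the exponent-folding $\tr_n(\beta x^{2^s+2^t})=\tr_n(\beta^{2^{-t}}x^{2^{s-t}+1})$ and the reduction $2^j+1\mapsto 2^{n-j}+1$ for $j>n/2$ give the canonical form, with the boundary condition $a_{n/2}\notin\F_{2^{n/2}}$ accounting for the fact that $\tr_n(a x^{2^{n/2}+1})$ vanishes identically when $a\in\F_{2^{n/2}}$ (it collapses to $\tr_{n/2}((a+a^{2^{n/2}})x^{2^{n/2}+1})=0$, i.e.\ to zero rather than merely to something affine, but this does not affect your argument); (ii) your linearized polynomial $L(y)=\sum_{i\in I}(a_i^{2^{-i}}y^{2^{-i}}+a_iy^{2^i})$ agrees with the paper's bracketed expression after raising to the $2^{-k}$ power, and nondegeneracy of the trace form identifies $V_{\varphi_{a,b}}$ with $\ker L$; (iii) the squaring computation $W^2=2^n\sum_{y\in V_{\varphi_{a,b}}}(-1)^{\varphi_{a,b}(y)}$ together with the observation that $\varphi_{a,b}$ restricted to $V_{\varphi_{a,b}}$ is linear (because the bilinear form vanishes there) yields exactly the dichotomy in (\ref{Eq-Quad}). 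The only point worth tightening is the constant term: if $A(x)$ has a nonzero constant, the polarization identity reads $\varphi_{a,b}(x+y)+\varphi_{a,b}(x)+\varphi_{a,b}(y)=B(x,y)+\varphi_{a,b}(0)$, so the constant $(-1)^{\varphi_{a,b}(0)}$ appears once in the substitution step and once in the sum over $V_{\varphi_{a,b}}$, cancelling in $W^2$; your statement that the affine part is ``annihilated by polarization'' is literally true only for the linear part, but the conclusion is unaffected.
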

    
    The proofs of plateaued functions constructed by the last two largely rely on the quantity of roots for the polynomial $a^{2^m}x^{2^{2m}}+ax \in \F_{2^n}[x]$. This polynomial, where $a \neq 0$ and $m > 0$, has either $0$ or $2^{\gcd(2m,n)}$ nonzero solutions, as noted by Coulter \cite{coulter1999evaluation}.
    \begin{Lem}\cite{coulter1999evaluation}\label{binoeq}
        Let $n,m$ be positive integers, and $N$ be the number of nonzero solutions to the equation $a^{2^m}x^{2^{2m}}+ax=0 \text{~over~} \F_{2^n}$. For any $a \in \F_{2^n}^{*}$, we have
        \begin{eqnarray*}
            N= 
            \begin{cases}
              2^{d}-1,&\frac{n}{d} \text{~odd},\\
              2^{2 d}-1,&\frac{n}{d} \text{~even~and~} a=\alpha^{t(2^d+1)},\\
              0,&\text{~otherwise},
            \end{cases}
        \end{eqnarray*}
        where $\alpha$ is a primitive element of $\F_{2^n}$ and $d=\gcd(m,n)$.
    \end{Lem}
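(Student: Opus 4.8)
The plan is to reduce the linearized equation to a single power-residue equation in the cyclic group $\F_{2^n}^{*}$ and then translate everything into divisibility statements about exponents. First I would factor out the nonzero coefficient: for $x \neq 0$, dividing $a^{2^m}x^{2^{2m}}+ax=0$ by $ax$ gives the equivalent condition
\[
 x^{2^{2m}-1}=a^{1-2^m}.
\]
Thus $N$ equals the number of solutions in $\F_{2^n}^{*}$ of $x^{e}=c$ with $e=2^{2m}-1$ and $c=a^{1-2^m}$. Invoking the standard count for power maps in a cyclic group of order $2^n-1$, this number is either $0$ or $g:=\gcd(2^{2m}-1,\,2^n-1)$, the latter occurring exactly when $c$ lies in the image of $x\mapsto x^{e}$, i.e. when $c^{(2^n-1)/g}=1$.

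Next I would evaluate $g$ using $\gcd(2^{a}-1,2^{b}-1)=2^{\gcd(a,b)}-1$, so that $g=2^{\gcd(2m,n)}-1$. Writing $d=\gcd(m,n)$, $m=dm'$, $n=dn'$ with $\gcd(m',n')=1$, a short parity argument shows $\gcd(2m,n)=d$ when $n/d$ is odd (here $n'$ is odd, so the factor $2$ is irrelevant) and $\gcd(2m,n)=2d$ when $n/d$ is even (here $n'$ is even, forcing $m'$ odd). This already pins down the two candidate values $2^{d}-1$ and $2^{2d}-1$ appearing in the statement; it remains to decide, in each case, when the solvability condition $c^{(2^n-1)/g}=1$ holds.

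For the odd case I would show solvability is automatic: the relevant exponent is $(2^m-1)(2^n-1)/(2^{d}-1)$, and since $d\mid m$ gives $(2^{d}-1)\mid(2^m-1)$, this exponent is a multiple of $2^n-1$, so $c^{(2^n-1)/g}=1$ for every $a$ and hence $N=2^{d}-1$ unconditionally. For the even case the condition is genuinely restrictive, and the crux is to identify the subgroup it cuts out. I would compute the order of $\{a\in\F_{2^n}^{*}\mid a^{e'}=1\}$ with $e'=(2^m-1)(2^n-1)/(2^{2d}-1)$; pulling out the common factor $(2^n-1)/(2^{2d}-1)$ and using $\gcd(2^m-1,2^{2d}-1)=2^{\gcd(m,2d)}-1=2^{d}-1$ (because $m'$ is odd), this order collapses to $(2^n-1)/(2^{d}+1)$. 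Since $2^{d}+1$ divides $2^n-1$ in this case and the multiplicative group is cyclic, the unique subgroup of that order is exactly $\langle\alpha^{2^{d}+1}\rangle=\{\alpha^{t(2^{d}+1)}\}$, which yields $N=2^{2d}-1$ precisely when $a=\alpha^{t(2^{d}+1)}$ and $N=0$ otherwise.

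The routine ingredients are the power-map count and the gcd identity $\gcd(2^a-1,2^b-1)=2^{\gcd(a,b)}-1$; the main obstacle is the bookkeeping in the even case, namely tracking the several nested gcd's to prove that the solvability locus is exactly the subgroup generated by $\alpha^{2^{d}+1}$, which is where the clean hypothesis $a=\alpha^{t(2^{d}+1)}$ comes from.
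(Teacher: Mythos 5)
Your proposal is correct. Note that the paper itself gives no proof of this lemma: it is quoted directly from Coulter's paper \cite{coulter1999evaluation}, so there is no internal argument to compare against; your write-up therefore stands as a self-contained replacement for the citation. The chain of reductions all checks out: for $x\neq 0$ the equation is equivalent to $x^{2^{2m}-1}=a^{1-2^m}$; the power-map count in the cyclic group $\F_{2^n}^{*}$ gives $N\in\{0,\gcd(2^{2m}-1,2^n-1)\}$ with solvability governed by $c^{(2^n-1)/g}=1$; the identity $\gcd(2^{2m}-1,2^n-1)=2^{\gcd(2m,n)}-1$ together with the parity argument (writing $m=dm'$, $n=dn'$ with $\gcd(m',n')=1$, so that $\gcd(2m,n)=d$ or $2d$ according to whether $n'$ is odd or even, the even case forcing $m'$ odd) yields the two candidate values. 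In the odd case, $(2^d-1)\mid(2^m-1)$ makes the solvability exponent a multiple of $2^n-1$, so $N=2^d-1$ unconditionally. In the even case, your computation of the order of the solvability locus, namely $\gcd\left((2^m-1)\tfrac{2^n-1}{2^{2d}-1},\,2^n-1\right)=\tfrac{2^n-1}{2^{2d}-1}\cdot(2^{\gcd(m,2d)}-1)=\tfrac{2^n-1}{2^d+1}$, is right (the key point $\gcd(m,2d)=d$ uses $m'$ odd), and uniqueness of subgroups in a cyclic group identifies this locus with $\{\alpha^{t(2^d+1)}\}$, exactly the condition in the statement. This is essentially the natural argument for the characteristic-$2$ case and is in the same spirit as Coulter's original treatment.
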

    
    \subsection{Constructions from $x^{3 \cdot 2^{m}}+ax^{2 \cdot 2^{m}+1}+bx^{2^{m}+2}+cx^3$}
    In \cite{tu2018new}, a well-known class of quadratic quadrinomials over $\F_{2^{n}}$ in the form 
    \[
         P(x)=x^{3 \cdot 2^{m}}+ax^{2 \cdot 2^{m}+1}+bx^{2^{m}+2}+cx^3
    \]
    was first studied as a permutation, where $m$ is odd. Further work promoted this result \cite{tu2019revisit,gupta2020several,kim2023completely} and revealed its excellent properties in boomerang uniformity \cite{tu2020class}. Here we concentrate on the subclass shown in Lemma \ref{qudpp} and obtain plateaued functions in Theorem \ref{Thm_plateaued}. 
    \begin{Lem}\label{qudpp}
    \cite{tu2018new}
        Let $m$ be an odd positive integer and $n= 2m$, $a, b, c\in \mathbb{F} _{2^{n}}$. Then
        \[
            P(x)=x^{3 \cdot 2^{m}}+ax^{2 \cdot 2^{m}+1}+bx^{2^{m}+2}+cx^3
        \]
        permutes $\mathbb{F}_{2^{n}}$ if one of the following three conditions is satisfied:\\
        (1) $c=1$, $a$ satisfies $\bar{\omega}\bar{a}+a+\omega\neq0$ and $b=\omega(a+1)+1$, where $\omega \in \F_{4} \backslash \F_{2}$;\\
        (2) $c= 1$, $b= a+ 1$ and $a+ \bar{a} + 1 \neq 0$;\\
        (3) $a,b,c \in\mathbb{F}_{2^{m}}$, $A=ab+c$, $B=1+a+b+c \neq 0$ and $C=a^2+b^2+ac+b$ satisfy either $C=0$ and $\tr_m\left(\frac{A}{B^2}\right)=1$, or $C=B^2$ and $\tr_m\left(\frac{A}{B^2}\right)=0$.
    \end{Lem}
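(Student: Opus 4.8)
The quadrinomial is a Dembowski--Ostrom polynomial whose four exponents all have $2$-adic weight two, and the first move is to expose its Niho structure. Writing $\bar{x}=x^{2^m}$, one has
\[
    P(x)=\bar{x}^3+a x\bar{x}^2+b x^2\bar{x}+c x^3=x^3\,h(x^{2^m-1}),\qquad h(t)=t^3+at^2+bt+c,
\]
since $x^{2^m-1}=\bar{x}/x$. The plan is to invoke the standard reduction for polynomials of the form $x^r h(x^{2^m-1})$ over $\F_{2^n}$: such a map permutes $\F_{2^n}$ if and only if $\gcd\!\left(r,2^m-1\right)=1$ and the induced fractional map $g(z)=z^{r}h(z)^{2^m-1}$ permutes the unit circle $U_m$. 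With $r=3$, the hypothesis that $m$ is odd gives $2^m\equiv 2\pmod 3$, whence $\gcd(3,2^m-1)=1$; so the coprimality condition is automatic and only the behaviour of $g$ on $U_m$ remains.

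On $U_m$ one has $z^{2^m}=\bar{z}=z^{-1}$, so $h(z)^{2^m-1}=\overline{h(z)}/h(z)$, and a direct computation gives
\[
    g(z)=\frac{\bar{c}\,z^3+\bar{b}\,z^2+\bar{a}\,z+1}{z^3+a z^2+b z+c}.
\]
Here the numerator is the conjugate-reciprocal $z^3\,\overline{D}(1/z)$ of the denominator $D(z)=z^3+az^2+bz+c$. Applying the $2^m$-power map and using $\bar{z}=z^{-1}$ shows $g(z)\,\overline{g(z)}=1$ identically, so $g$ sends $U_m$ into itself wherever it is defined; since $U_m$ is finite, the remaining task is to prove that $g$ is \emph{injective} on $U_m$ under each of the three hypotheses. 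The natural mechanism is to set $g(z_1)=g(z_2)$, clear denominators, and factor out $z_1+z_2$ (equal to $z_1-z_2$ in characteristic two); the permutation property is then equivalent to showing that the cofactor admits no pair $(z_1,z_2)\in U_m\times U_m$ with $z_1\neq z_2$.

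For the case-by-case analysis I would treat the three hypotheses by the same scheme but with different algebra. In cases (1) and (2) the normalization $c=1$ fixes the constant and leading coefficients of $g$ to $1$, and the side conditions $\bar{\omega}\bar{a}+a+\omega\neq0$, respectively $a+\bar{a}+1\neq0$, are precisely what keeps the cofactor of $z_1+z_2$ nonvanishing on $U_m$; this can be checked by substituting $\bar{z}_i=z_i^{-1}$ and reducing to a single symmetric relation in $z_1,z_2$. For case (3), where $a,b,c\in\F_{2^m}$, the cleaner route is to transport the problem to $\F_{2^m}$ via the Lahtonen parametrization of Lemma~\ref{bi2}: writing $z=(x+\omega)/(x+\bar{\omega})$ with $x\in\F_{2^m}$ turns $g$ into a rational self-map of $\mathbb{P}^1(\F_{2^m})$, and the quantities $A=ab+c$, $B=1+a+b+c$, $C=a^2+b^2+ac+b$ emerge as the coefficients of the resulting reduced equation. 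Injectivity then becomes the statement that an associated Artin--Schreier-type quadratic has the prescribed number of roots in $\F_{2^m}$, governed by the value of $\tr_m(A/B^2)$ together with the split $C=0$ versus $C=B^2$, with $B\neq0$ ensuring the parametrization stays nondegenerate.

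The main obstacle is the residual cofactor analysis after factoring out $z_1+z_2$: one must rule out \emph{all} off-diagonal collisions on $U_m$, and the algebra is genuinely different across the three regimes, with case (3) being the most delicate because the dichotomy in the trace condition must be matched exactly to the split between $C=0$ and $C=B^2$. I expect the bookkeeping needed to guarantee $g$ is defined, i.e.\ that $z^3+az^2+bz+c\neq0$ on $U_m$, together with verifying that the stated conditions are not merely sufficient but correctly classify the permutation behaviour, to be where the real effort lies.
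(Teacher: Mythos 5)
First, a point of reference: the paper never proves this lemma; it is quoted directly from \cite{tu2018new}, so your attempt has to be judged against the known proof of Tu, Zeng and Helleseth rather than anything in this paper. Your setup coincides exactly with the frame of that proof, and it is correct: the factorization $P(x)=x^3h(x^{2^m-1})$ with $h(t)=t^3+at^2+bt+c$, the standard criterion that $x^rh(x^{2^m-1})$ permutes $\F_{2^{2m}}$ if and only if $\gcd(r,2^m-1)=1$ and $z^rh(z)^{2^m-1}$ permutes $U_m$, the observation that $m$ odd gives $\gcd(3,2^m-1)=1$, the formula $g(z)=\bigl(\bar{c}z^3+\bar{b}z^2+\bar{a}z+1\bigr)\big/\bigl(z^3+az^2+bz+c\bigr)$ on the unit circle, and the identity $g(z)\overline{g(z)}=1$ are all right.

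The genuine gap is that the proposal stops precisely where the mathematical content of the lemma begins. Everything after the reduction is a statement of intent, not an argument: you must show, under each of conditions (1), (2), (3), that (i) $h$ has no zero on $U_m$ (so $g$ is defined everywhere) and (ii) $g$ has no off-diagonal collision on $U_m$, and you offer only the scheme --- ``factor out $z_1+z_2$'', ``the side conditions are precisely what keeps the cofactor nonvanishing'', ``the quantities $A,B,C$ emerge as coefficients of the reduced equation'' --- without executing any of it. These claims are exactly what needs proving, and in the original paper they require a long, delicate analysis: the substitution $z=(x+\omega)/(x+\bar{\omega})$ converts the collision equation into low-degree equations over $\F_{2^m}$, and the hypotheses $\bar{\omega}\bar{a}+a+\omega\ne0$, $a+\bar{a}+1\ne0$, and the case-(3) dichotomy ($C=0$ with $\tr_m(A/B^2)=1$, versus $C=B^2$ with $\tr_m(A/B^2)=0$) emerge only after determining exactly when those equations acquire roots in $\F_{2^m}$; the trace dichotomy in particular cannot be asserted, it has to be derived. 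One smaller remark: since the lemma claims only sufficiency, your concern about verifying that the conditions ``correctly classify the permutation behaviour'' is unnecessary --- ruling out collisions under the stated hypotheses is all that is required --- so that part of the plan can be dropped, but the core case analysis cannot.
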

    To complete the proof of plateaued functions in this subsection, we shall start with two lemmas to streamline the proof process. 
    Define the parameters $\epsilon_1,\epsilon_2,\epsilon_3,\epsilon_4$ as follows:
        \begin{eqnarray}\label{coffdef}
            \begin{cases*}
                \epsilon_1=\gamma+\bar{\gamma}+\gamma c+\bar{\gamma}\bar{c}+\gamma a+\gamma b+\bar{\gamma}\bar{a}+\bar{\gamma}\bar{b},\\
                \epsilon_2=\omega(\bar{\gamma}+\gamma c)+\bar{\omega}(\gamma+\bar{\gamma}\bar{c})+\omega(\gamma a+\bar{\gamma}\bar{b})+\bar{\omega}(\bar{\gamma}\bar{a}+\gamma b),\\
                \epsilon_3=\bar{\omega}(\bar{\gamma}+\gamma c)+\omega(\gamma+\bar{\gamma}\bar{c})+\bar{\omega}(\bar{\gamma}\bar{a}+\gamma b)+\omega(\gamma a+\bar{\gamma}\bar{b}),\\
                \epsilon_4=\bar{\gamma}+\gamma c+\gamma+\bar{\gamma}\bar{c}+\omega(\bar{\gamma}\bar{a}+\gamma b)+\bar{\omega}(\gamma a+\bar{\gamma}\bar{b}).
            \end{cases*}
        \end{eqnarray}
    Lemma \ref{coff} summarizes certain constraints on parameters $\epsilon_1,\epsilon_2,\epsilon_3,\epsilon_4$ for each condition stated in Lemma \ref{qudpp}, see \cite{tu2018new} for more details. Lemma \ref{linear} is related to the maximum number of roots of a linearized polynomial.
    \begin{Lem}\label{coff}\cite{tu2018new}
        Let $n= 2m$ with $m$ odd, $\omega \in \F_{2^2} \backslash \F_2$, $\gamma \in \mathbb{F} _{2^{n}}$, and $a, b, c$ be the coefficients stated in Lemma \ref{qudpp}. Parameters $\epsilon_1,\epsilon_2,\epsilon_3,\epsilon_4$ defined by Eq. (\ref{coffdef}) adhere to the following constraints:\\
        (\Rm{1}) If Condition (1) of Lemma \ref{qudpp} holds, then $\epsilon_1=\epsilon_2=\epsilon_3$, and thus $\epsilon_2^2+\epsilon_1\epsilon_3=0$;\\
        (\Rm{2}) If Condition (2) or (3) of Lemma \ref{qudpp}  holds, then there is 
        \[
            \tr_m\left( \frac{(\epsilon_2^2+\epsilon_1\epsilon_3)^3}{(\epsilon_1\epsilon_4+\epsilon_2\epsilon_3)^2\epsilon_1^2} \right)=0 
        \]
        holds true when $\epsilon_1=\gamma+\bar{\gamma} \ne 0$ or $\epsilon_1=(1+a+b+c)(\gamma+\bar{\gamma}) \ne 0$, respectively.
    \end{Lem}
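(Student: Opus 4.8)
The plan is to prove both claims by substituting the three permutation conditions of Lemma~\ref{qudpp} directly into the definitions \eqref{coffdef} and simplifying with the arithmetic of $\omega$. Since $\omega\in\F_{2^2}\backslash\F_2$ obeys $\omega^2=\omega+1$ and $\omega^3=1$, and since $m$ odd forces $\bar\omega=\omega^{2^m}=\omega^2$, I would first record the working identities $\omega+\bar\omega=1$, $\omega\bar\omega=1$, $\bar\omega^2=\omega$, $\overline{\bar\omega}=\omega$, together with the conjugation rules $\overline{\gamma c}=\bar\gamma\bar c$, etc.; these drive every cancellation. It is also worth noting the conceptual origin of the $\epsilon_i$: because $\{1,\omega\}$ is an $\F_{2^m}$-basis of $\F_{2^n}$ (as $\omega\notin\F_{2^m}$ for $m$ odd), the four parameters can be read as the $\{1,\omega\}$-coordinates of the linearized kernel equation attached to $\tr_n(\gamma P(x)+x)$, so that $\epsilon_2^2+\epsilon_1\epsilon_3$ plays the role of a determinant and $\epsilon_1\epsilon_4+\epsilon_2\epsilon_3$ of a consistency term. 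A quick collection of coefficients already yields the clean auxiliary identity $\epsilon_2+\epsilon_3=\gamma+\bar\gamma+\gamma c+\bar\gamma\bar c$, which shortens the bookkeeping.

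For claim~(\Rm{1}), I would substitute $c=1$ and $b=\omega(a+1)+1=\omega a+\bar\omega$ (hence $\bar b=\bar\omega\bar a+\omega$). The pairs $\gamma+\gamma c$ and $\bar\gamma+\bar\gamma\bar c$ vanish, and after collecting terms via $1+\omega=\bar\omega$ and $1+\bar\omega=\omega$, each of $\epsilon_1,\epsilon_2,\epsilon_3$ collapses to the single expression $\bar\omega\gamma(a+1)+\omega\bar\gamma(\bar a+1)$. Thus $\epsilon_1=\epsilon_2=\epsilon_3$, and in characteristic two $\epsilon_2^2+\epsilon_1\epsilon_3=\epsilon_1^2+\epsilon_1^2=0$. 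This part is a routine verification.

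For claim~(\Rm{2}) I would handle the two conditions in parallel. Under Condition~(2) ($c=1$, $b=a+1$) one checks immediately that $\epsilon_1=\gamma+\bar\gamma$, matching the hypothesis; I would then write $\epsilon_2,\epsilon_3,\epsilon_4$ explicitly in $\gamma,\bar\gamma,a,\bar a,\omega$, form $N:=\epsilon_2^2+\epsilon_1\epsilon_3$ and $D:=\epsilon_1\epsilon_4+\epsilon_2\epsilon_3$, and try to factor $N$ and $D$ as products of conjugate pairs so that the ratio $N^3/(D^2\epsilon_1^2)$ is manifestly fixed by conjugation, hence lies in $\F_{2^m}$. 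Under Condition~(3) ($a,b,c\in\F_{2^m}$) conjugation acts trivially on the coefficients, so the $\epsilon_i$ become bilinear in $(\gamma,\bar\gamma)$ over $\F_{2^m}[\omega]$; here $\epsilon_1=(1+a+b+c)(\gamma+\bar\gamma)=B(\gamma+\bar\gamma)$, and the combinations $N,D$ should reduce to expressions in the invariants $A=ab+c$, $B=1+a+b+c$, $C=a^2+b^2+ac+b$, whereupon the desired trace identity collapses to the case hypotheses $C=0,\ \tr_m(A/B^2)=1$ or $C=B^2,\ \tr_m(A/B^2)=0$, using $\tr_m(t^2)=\tr_m(t)$ and $\tr_m(\bar t)=\tr_m(t)$.

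The main obstacle is the explicit Condition~(2) computation: proving that $N^3/(D^2\epsilon_1^2)$ both lies in $\F_{2^m}$ and has vanishing absolute trace. I expect the cleanest route is to exhibit the ratio in the form $Y^2+Y$ with $Y\in\F_{2^m}$ (additive Hilbert~90), whose solvability is exactly guaranteed by the permutation hypothesis $a+\bar a+1\ne0$; equivalently, one relates $\tr_m\big(N^3/(D^2\epsilon_1^2)\big)$ back to the trace condition that defines the permutation in \cite{tu2018new}. Carrying this out means tracking a rather heavy rational identity, simplified at each stage by $\omega+\bar\omega=1$, $\omega\bar\omega=1$ and the conjugation rules, and I would cross-check the final outcome against the small-$m$ numerics.
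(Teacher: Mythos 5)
A preliminary remark: the paper never proves this lemma --- it is quoted from \cite{tu2018new} (the sentence preceding the statement explicitly defers to that reference), so there is no internal proof to compare against, and your attempt must stand on its own. On that basis, part (\Rm{1}) of your proposal is correct and essentially complete: with $c=1$ and $b=\omega a+\bar{\omega}$, each of $\epsilon_1,\epsilon_2,\epsilon_3$ reduces to $\bar{\omega}\gamma(a+1)+\omega\bar{\gamma}(\bar{a}+1)$, and your auxiliary identity $\epsilon_2+\epsilon_3=\gamma+\bar{\gamma}+\gamma c+\bar{\gamma}\bar{c}$ even gives $\epsilon_2=\epsilon_3$ for free when $c=1$. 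Part (\Rm{2}), however, is where the entire content of the lemma lies, and there your text is a plan rather than a proof: the decisive steps are deferred to ``try to factor'', ``I expect'', and a numerical cross-check. That is a genuine gap.

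Concretely, three things are missing or wrong. (a) The question of whether the ratio lies in $\F_{2^m}$ (write $N=\epsilon_2^2+\epsilon_1\epsilon_3$, $D=\epsilon_1\epsilon_4+\epsilon_2\epsilon_3$) needs no factorization into conjugate pairs: the conjugation $x\mapsto x^{2^m}$ merely permutes the summands in Eq.~(\ref{coffdef}), so each $\epsilon_i$ individually lies in $\F_{2^m}$. (b) Your proposed mechanism for Condition (2) is incorrect: the permutation hypothesis $a+\bar{a}+1\ne0$ plays no role in the trace identity. What actually happens is that $c=1$, $b=a+1$ forces $\epsilon_2=\epsilon_3=\epsilon_4$, hence $N=D=\epsilon_2(\epsilon_1+\epsilon_2)$ and $N^3/(D^2\epsilon_1^2)=N/\epsilon_1^2=u+u^2$ with $u=\epsilon_2/\epsilon_1\in\F_{2^m}$, whose trace vanishes identically; your Hilbert-90 guess is realized, but through this explicit collapse, and hunting for an argument that invokes $a+\bar{a}+1\ne0$ would be a dead end. (c) For Condition (3), the computation you defer is exactly the crux. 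Setting $s=\gamma+\bar{\gamma}$, direct expansion gives $\epsilon_1=Bs$, $N=(A+C)s^2+B^2\gamma\bar{\gamma}$, $D=As^2+B^2\gamma\bar{\gamma}$, so $N+D=Cs^2$ and $\epsilon_1^2=B^2s^2$; the conclusion then requires the classical fact that $\tr_m\left(\gamma\bar{\gamma}/s^2\right)=1$ for every $\gamma\notin\F_{2^m}$ (the polynomial $X^2+X+\gamma\bar{\gamma}/s^2$, being the minimal polynomial of $\gamma/s$ over $\F_{2^m}$, is irreducible). With it, the case $C=0$, $\tr_m(A/B^2)=1$ gives $\tr_m(N/\epsilon_1^2)=\tr_m(A/B^2)+\tr_m(\gamma\bar{\gamma}/s^2)=0$, while the case $C=B^2$, $\tr_m(A/B^2)=0$ gives $N=D+\epsilon_1^2$ and $\tr_m\left(N^3/(D^2\epsilon_1^2)\right)=\tr_m(D/\epsilon_1^2)+\tr_m(1)=(0+1)+1=0$, using that $m$ is odd. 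None of these ingredients --- the collapse $\epsilon_2=\epsilon_3=\epsilon_4$, the closed forms of $N$ and $D$ in terms of $A,B,C$, and the trace-one norm fact --- appears in your proposal; without them the ``heavy rational identity'' cannot close, since the identity is not a formal one but genuinely uses the case hypotheses at precisely these points.
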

    \begin{Lem}\cite{trachtenberg1970cross}\label{linear}
        Let $k$ be an integer with $\gcd(k,n)=1$ and $L$ be a linearized polynomial of the form 
        \[
            L(x) = c_0x+c_1x^{2^k}+c_2x^{2^{2k}}+\cdots + c_dx^{2^{dk}}\in\F_{2^n}[x]
        \]
        of degree $2^{dk}$. Then $L$ has at most $2^d$ zeros in $\F_{2^n}$.  
    \end{Lem}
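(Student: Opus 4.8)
The plan is to prove the stronger statement that the set $U=\{x\in\F_{2^n}\mid L(x)=0\}$ is an $\F_2$-subspace of dimension at most $d$, from which $\#U\le 2^d$ follows at once. That $U$ is an $\F_2$-subspace is immediate, since $L$ is additive on $\F_{2^n}$ (each monomial $x^{2^{ik}}$ is $\F_2$-linear). I would then argue by induction on the $2^k$-degree $d$. The base case $d=0$ is trivial: $L(x)=c_0x$ with $c_0\ne 0$ forces $U=\{0\}$.

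For the inductive step with $d\ge 1$, if $U=\{0\}$ there is nothing to prove, so I fix a nonzero root $u\in U$ and introduce the auxiliary $2^k$-linearized polynomial $M(x)=x^{2^k}+u^{2^k-1}x$, which is monic of $2^k$-degree $1$ and satisfies $M(u)=0$. The crucial point, and the only place the hypothesis $\gcd(k,n)=1$ enters, is that $\ker M$ is exactly the line $\F_2 u$. Indeed, a nonzero $x$ lies in $\ker M$ iff $(x/u)^{2^k-1}=1$, i.e. iff $x/u$ is a $(2^k-1)$-th root of unity in $\F_{2^n}$; the number of such roots is $\gcd(2^k-1,2^n-1)=2^{\gcd(k,n)}-1=1$, so $x=u$ is forced and $\dim_{\F_2}\ker M=1$.

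Next I would peel off one dimension by right-dividing $L$ by $M$ in the ring of $2^k$-linearized polynomials under composition, equivalently in the skew polynomial ring $\F_{2^n}[T;\sigma]$ with $\sigma(y)=y^{2^k}$. Matching leading coefficients successively (the skew-Euclidean algorithm, legitimate because $M$ is monic) produces a $2^k$-linearized polynomial $N$ of $2^k$-degree $d-1$, with nonzero leading coefficient $c_d$, and a linear remainder $s_0x$ such that $L(x)=N(M(x))+s_0x$. Evaluating at $x=u$ and using $L(u)=M(u)=0$ gives $s_0u=0$, hence $s_0=0$ and $L=N\circ M$ as maps on $\F_{2^n}$. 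Consequently $M$ sends $U=\ker L$ into $\ker N$, and applying rank--nullity to the restriction $M|_U$ yields
\[
\dim_{\F_2}U=\dim_{\F_2}(U\cap\ker M)+\dim_{\F_2}M(U)\le 1+\dim_{\F_2}\ker N .
\]
By the inductive hypothesis applied to $N$ (of $2^k$-degree $d-1$) one has $\dim_{\F_2}\ker N\le d-1$, while $U\cap\ker M=\ker M$ has dimension $1$ because $\ker M=\F_2u\subseteq U$. Combining gives $\dim_{\F_2}U\le d$, completing the induction.

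I expect the main obstacle to be the bookkeeping in this division step: one must verify that dividing a $2^k$-linearized polynomial by the monic degree-one $2^k$-linearized $M$ again yields a $2^k$-linearized quotient (rather than an arbitrary polynomial) and that the remainder is genuinely a single linear term $s_0x$. This is precisely the closure of $2^k$-linearized polynomials under composition together with skew-Euclidean division; once it is in hand, the vanishing of the remainder via the common root $u$ and the rank--nullity step are routine. It is worth stressing that dropping $\gcd(k,n)=1$ would enlarge $\ker M$ to dimension $\gcd(k,n)$ and correspondingly weaken the bound to $2^{d\gcd(k,n)}$, which pinpoints why the coprimality hypothesis is essential.
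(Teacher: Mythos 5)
Your proof is correct. Note first that the paper does not actually prove Lemma \ref{linear}: it is quoted from Trachtenberg's thesis \cite{trachtenberg1970cross} without proof, so there is no internal argument to compare against; your proposal stands on its own. The argument you give --- induction on the $2^k$-degree, splitting off the monic degree-one factor $M(x)=x^{2^k}+u^{2^k-1}x$ by Ore (skew-Euclidean) division, then rank--nullity over $\F_2$ --- is sound. The division step you flag as the main obstacle does go through: writing $N(y)=\sum_{j=0}^{d-1}b_jy^{2^{jk}}$ and matching coefficients in $L(x)=N(M(x))+s_0x$ gives $b_{d-1}=c_d$, then $b_{i-1}=c_i+b_iu^{(2^k-1)2^{ik}}$ descending, and finally $s_0=c_0+b_0u^{2^k-1}$, so the quotient is indeed $2^k$-linearized of degree exactly $2^{(d-1)k}$ and the remainder is a single term $s_0x$; monicity of $M$ is what makes the elimination work in $\F_{2^n}[T;\sigma]$ just as in the commutative case. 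You also invoke $\gcd(k,n)=1$ exactly where it is needed, via $\gcd(2^k-1,2^n-1)=2^{\gcd(k,n)}-1=1$. For comparison, the classical route behind Trachtenberg's statement is non-inductive: since $L$ is $\F_{2^k}$-linear, its root set in the algebraic closure is an $\F_{2^k}$-vector space of dimension at most $d$ (because $L$ has at most $2^{dk}$ roots), and $\gcd(k,n)=1$ makes $\F_{2^n}$ and $\F_{2^k}$ linearly disjoint over $\F_2$, so $\F_2$-independent roots lying in $\F_{2^n}$ remain $\F_{2^k}$-independent, yielding at most $d$ of them. That argument is shorter but leans on linear disjointness; yours is longer but stays entirely inside $\F_{2^n}$, needs only the Ore division algorithm, and transparently explains the weakened bound $2^{d\gcd(k,n)}$ in the non-coprime case, as you observe at the end.
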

    \begin{Thm}\label{Thm_plateaued}
        Let $n=2m$ with $m$ odd, $P(x) = x^{3\cdot2^{m}}+ax^{2\cdot2^{m}+1}+bx^{2^{m}+2}+cx^3$, and $f_P$ be the balanced Boolean function parameterized by $P(x^2+x)$, where $a,b,c$ satisfy the conditions in Lemma \ref{qudpp}. Then $f_P$ is $(m+1)$-plateaued if Condition (1) of Lemma \ref{qudpp} holds and $2$-plateaued (semi-bent) if Condition (2) or (3) of Lemma \ref{qudpp} holds.
    \end{Thm}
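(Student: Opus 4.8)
The plan is to transfer the computation from $f_P$ to the quadratic polynomial $P$ and then read off the spectrum from the dimension of an associated radical. Since each exponent $3\cdot 2^m$, $2\cdot 2^m+1$, $2^m+2$, $3$ has binary weight $2$, the function $\tr_n(\gamma P(x)+x)$ is quadratic in $x$ for every $\gamma$. Hence Lemma~\ref{relation} gives, for $\gamma\in\F_{2^n}^{*}$,
\[
 W_{f_P}(\gamma)=-S(\gamma P(x)+x,n)=-W_P(1,\gamma),
\]
the Walsh transform of the quadratic $P$ at $(a,b)=(1,\gamma)$ in the sense of Lemma~\ref{qudratic_sum}; combined with $W_{f_P}(0)=0$ from balancedness, it suffices to evaluate $W_P(1,\gamma)$. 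By Lemma~\ref{qudratic_sum}, and in particular \eqref{Eq-Quad}, we have $W_P(1,\gamma)\in\{0,\pm 2^{(n+d(\gamma))/2}\}$ where $d(\gamma)=\dim_{\F_2}V_{\varphi_{1,\gamma}}$. The theorem therefore reduces to showing that, whenever $W_P(1,\gamma)\neq 0$, the radical dimension is the claimed \emph{constant}: $d(\gamma)=m+1$ under Condition~(1) and $d(\gamma)=2$ under Conditions~(2) and~(3). The values then collapse to $\{0,\pm 2^{(n+m+1)/2}\}$ and $\{0,\pm 2^{m+1}\}$, yielding $(m+1)$-plateaued and semi-bent spectra.

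Next I would make the radical explicit. Writing $\bar x=x^{2^m}$ gives $P(x)=\bar x^3+a\bar x^2 x+b\bar x x^2+c x^3$; polarizing $\tr_n(\gamma P)$, collecting by the $x$-monomials $x^{2^{m+1}},x^{2^m},x^2,x$, and applying the adjoint identity $\tr_n(w x^{2^j})=\tr_n(w^{2^{-j}}x)$, the bilinear form becomes $B_{\varphi_{1,\gamma}}(x,y)=\tr_n\!\big(x\,\Psi_\gamma(y)\big)$ for an explicit linearized $\Psi_\gamma$. Squaring to clear the fractional Frobenius powers (squaring is a bijection), the radical $V_{\varphi_{1,\gamma}}=\{y:\Psi_\gamma(y)=0\}$ is exactly the zero set of
\[
 M_\gamma(y)=u^2 y^4+\bar v^{\,2}\bar y^4+u\,y+v\,\bar y,\qquad u=\bar\gamma+\gamma c,\ \ v=\bar\gamma\bar a+\gamma b,
\]
a linearized polynomial with $2^{d(\gamma)}$ roots. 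A direct check shows the parameters of \eqref{coffdef} are relative traces of $u,v$: one gets $\epsilon_1=\tr_m^n(u+v)$, and likewise $\epsilon_2,\epsilon_3,\epsilon_4$ are $\tr_m^n$ of $\{1,\omega,\bar\omega\}$-weighted combinations of $u$ and $v$; in particular $\epsilon_1=\gamma+\bar\gamma$ under Condition~(2) and $\epsilon_1=(1+a+b+c)(\gamma+\bar\gamma)$ under Condition~(3), matching Lemma~\ref{coff}. This ties $M_\gamma$, hence $d(\gamma)$, to the data controlled by Lemma~\ref{coff}.

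The decisive step is counting the roots of $M_\gamma$. Decomposing $y$ over $\F_{2^m}$ turns $M_\gamma(y)=0$ into a small linear/quadratic system whose coefficients are the $\epsilon_i$, so $d(\gamma)$ is governed by the rank expressions $\epsilon_2^2+\epsilon_1\epsilon_3$ and $\epsilon_1\epsilon_4+\epsilon_2\epsilon_3$. Under Condition~(1), Lemma~\ref{coff} gives $\epsilon_1=\epsilon_2=\epsilon_3$ and $\epsilon_2^2+\epsilon_1\epsilon_3=0$, a maximal degeneracy forcing the solution space to attain dimension $m+1$ (consistent with the even corank $n-(m+1)=m-1$, as $m$ is odd); this is the $(m+1)$-plateaued case. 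Under Conditions~(2) and~(3), the relation $\tr_m\big((\epsilon_2^2+\epsilon_1\epsilon_3)^3/((\epsilon_1\epsilon_4+\epsilon_2\epsilon_3)^2\epsilon_1^2)\big)=0$ of Lemma~\ref{coff} is precisely the solvability criterion fixing the root count, while Lemma~\ref{linear}, applied to a reduced linearized form of $M_\gamma$ of degree $2^{2k}$ with $\gcd(k,n)=1$, supplies the matching bound $d(\gamma)\le 2$; hence $d(\gamma)=2$ and $f_P$ is semi-bent.

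The hard part is exactly this last step: transcribing $M_\gamma(y)=0$ into the $\epsilon_i$-framework and extracting the precise dimension. Two points demand care — handling the repeated-root/degenerate behaviour so that the identities of Lemma~\ref{coff} can be invoked, and arranging $M_\gamma$ into a shape to which the sharp bound of Lemma~\ref{linear} applies with $\gcd(k,n)=1$ (the raw $2$-degree of $M_\gamma$ only yields the useless bound $d\le m+2$, so the permutation relations among $a,b,c$ from Lemma~\ref{qudpp} must be exploited to factor out a bijective linearized part). One must also confirm that the dimension equals the claimed constant exactly on the set where $W_P(1,\gamma)\neq 0$, in particular treating $\gamma\in\F_{2^m}$ separately, so that no amplitude $2^{(n+d)/2}$ with $d\neq r$ survives.
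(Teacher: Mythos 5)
Your reduction is sound as far as it goes, and it is a genuinely different starting point from the paper: since every exponent of $P$ has binary weight $2$, the function $\tr_n(\gamma P(x)+x)$ is quadratic, Lemma \ref{relation} gives $W_{f_P}(\gamma)=-W_P(1,\gamma)$, and your linearized polynomial $M_\gamma(y)=u^2y^4+\bar v^{\,2}\bar y^{\,4}+uy+v\bar y$ with $u=\bar\gamma+\gamma c$, $v=\bar\gamma\bar a+\gamma b$ is indeed the correct equation for the radical. The paper never invokes Lemma \ref{qudratic_sum} here; it instead writes $P(x)=x^3h(x^{2^m-1})$, applies the unit-circle decomposition of Proposition \ref{nihopp}, substitutes $z=(t+\omega)/(t+\bar\omega)$, and evaluates the resulting cubic exponential sums over $\F_{2^m}$ via Carlitz's formula (Lemma \ref{Three_sum}) together with Lemmas \ref{coff} and \ref{linear}.

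However, your decisive step has a genuine gap, and the two mechanisms you propose for it do not work. First, the claimed mechanism for Condition (1) is false: by Lemma \ref{coff}, $\epsilon_1=\epsilon_2=\epsilon_3$ (hence $\epsilon_2^2+\epsilon_1\epsilon_3=0$) holds for \emph{every} $\gamma$ under Condition (1), yet the radical dimension is not $m+1$ for every $\gamma$. Concretely, take $\gamma\in\F_{2^m}^{*}$: then $u=\gamma+\gamma c=0$ (as $c=1$), while $v=\gamma(\bar a+\omega a+\bar\omega)\neq 0$ (its vanishing is exactly the excluded condition $\bar\omega\bar a+a+\omega=0$), so $M_\gamma(y)=\bar y\left(\bar v^{\,2}\bar y^{\,3}+v\right)$. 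Since $v\bar v\in\F_{2^m}^{*}$ and $\gcd(3,2^m-1)=1$, the element $v\bar v^{-2}$ is a cube in $\F_{2^n}^{*}$, so $M_\gamma$ has exactly $4$ roots and $d(\gamma)=2\neq m+1$ for $m\geq 3$. The theorem survives only because at these $\gamma$ the affine part fails to vanish on the radical, forcing $W_{f_P}(\gamma)=0$; your proposal contains no method for deciding this vanishing, which is precisely what the paper's computation supplies (in its case $\varepsilon_1\neq 0$ with $\varepsilon_2^2+\varepsilon_1\varepsilon_3=0$, the factor $\sum_{t\in\F_{2^m}}\chi_m(t^3)=0$ kills the sum). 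So the dimension $d(\gamma)$ is \emph{not} constant on each condition; it must be coupled with the vanishing criterion in (\ref{Eq-Quad}), and that coupling is the real content of the proof. Second, for Conditions (2)/(3) the appeal to Lemma \ref{linear} is not available: the Frobenius steps occurring in $M_\gamma$ are $0,2,m,m+2$, whose gcd is $1$ since $m$ is odd, so $M_\gamma$ is a linearized polynomial in powers of $2^k$ only for $k=1$, giving the useless bound $d\leq m+2$; and eliminating $\bar y$ by combining $M_\gamma=0$ with its conjugate yields a polynomial in $y^{4^i}$, where $\gcd(2,n)=2$ again blocks Lemma \ref{linear}. The ``bijective linearized factor'' you invoke is never exhibited, so the root count — which you yourself identify as the hard part — is not merely unfinished: one proposed mechanism is refuted by a counterexample and the other is inapplicable as stated.
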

    \begin{proof}
        According to Lemma \ref{relation} and Proposition \ref{nihopp}, 
        \begin{eqnarray*}
            W_{f_P}(\gamma)
            &=&-\sum_{z \in U_m}\sum_{y \in \F_{2^{m}}}{\chi_m\left( \varphi_1(z) y^3+\varphi_2(z) y \right)}+2^{m},
        \end{eqnarray*}
        where
        \[
            \varphi_1(z)=u z^3+\bar{u}z^{-3}+v z+\bar{v}z^{-1},~
            \varphi_2(z)=z+z^{-1},
        \]
        and 
        \[
            u=\bar{\gamma}+\gamma c,~
            v=\gamma b+\bar{\gamma}\bar{a}.
        \]
        Let $\lambda(t)$ denote the bijection 
        \begin{eqnarray*}
            \lambda : \F_{2^{m}} & \to & U_m \backslash \{1\} \\
            t & \mapsto & \frac{t+\omega}{t+\bar{\omega}}
        \end{eqnarray*}
        where $\omega\in\F_4\backslash\F_2$.
        Then for any $z\in U_m\backslash\{1\}$, there exists a unique element $t\in\F_{2^{m}}$ such that $z=\lambda(t)$. Thus $\varphi_1(z)$ and $\varphi_2(z)$ are respectively simplified to
        \begin{eqnarray*}
            \varphi_1(\lambda(t))&=&u \left( \frac{t+\omega}{t+\omega^2} \right)^3+\bar{u}\left( \frac{t+\omega^2}{t+\omega} \right)^{3}+v \left( \frac{t+\omega}{t+\omega^2} \right)+\bar{v}\left( \frac{t+\omega^2}{t+\omega} \right)\\
            &=&\frac{(u+\bar{u})(t^6+1)+(u \omega+\bar{u} \omega^2)t^2+(u \omega^2+\bar{u}\omega)t^4}{(t^2+t+1)^3}+\frac{(v+\bar{v})t^2+v \omega^2+\bar{v} \omega}{t^2+t+1}\\
            &=&\frac{\varepsilon_1 t^6+\varepsilon_2 t^4+\varepsilon_3 t^2+\varepsilon_4}{(t^2+t+1)^3},
        \end{eqnarray*}
        and 
        \begin{eqnarray*}
            \varphi_2(\lambda(t))=\frac{1}{t^2+t+1},
        \end{eqnarray*}
        where 
        \[
        \begin{cases*}
            \varepsilon_1=u+\bar{u}+v+\bar{v}=\epsilon_1,\\
            \varepsilon_2=(\bar{u}+v)\omega+(u+\bar{v})\omega^2=\epsilon_2+\epsilon_1,\\
            \varepsilon_3=(u+v)\omega+(\bar{u}+\bar{v})\omega^2=\epsilon_3+\epsilon_1,\\
            \varepsilon_4=u+\bar{u}+v \omega^2+\bar{v}\omega=\epsilon_4+\epsilon,
        \end{cases*}
        \]
        and $\epsilon=v+\bar{v}$, $\epsilon_1,\epsilon_2,\epsilon_3,\epsilon_4$ are the same as defined in Lemma \ref{coff}. 
        Then we have
        \begin{eqnarray}\label{W}
            W_{f_P}(\gamma)&=&-\sum_{z \in U_m \backslash \{1\}}\sum_{y \in \F_{2^{m}}}{\chi_m\left( \varphi_1(z) y^3+\varphi_2(z) y \right)}-\sum_{y \in \F_{2^{m}}}{\chi_m(\varepsilon_1y^3)}+2^{m} \nonumber \\
            &=&-\sum_{t \in \F_{2^{m}}}\sum_{y \in \F_{2^{m}}}{\chi_m\left( \varphi_1(\lambda(t)) y^3+\varphi_2(\lambda(t)) y \right)}-\sum_{y \in \F_{2^{m}}}{\chi_m(\varepsilon_1y^3)}+2^{m}.
        \end{eqnarray}
        Note that $\gcd(3,2^{m}-1)=1$, therefore $y^3$ permutes $\F_{2^{m}}$ and then 
        \[
        \sum_{y \in \F_{2^{m}}}{\chi_m(\varepsilon_1y^3)}=
        \begin{cases}
            2^{m},& \text{if~} \varepsilon_1=\epsilon_1=0,\\
            0,&\text{otherwise}.
        \end{cases}
        \]
        The subsequent proof process depends on whether $\varepsilon_1$ is equal to $0$ or not. 
        
        (\Rm{1}) In the case of $\varepsilon_1=0$, (\ref{W}) is equivalent to
        \begin{eqnarray*}
            W_{f_P}(\gamma)&=&-\sum_{t \in \F_{2^{m}}}\sum_{y \in \F_{2^{m}}}{\chi_m\left( \varphi_1(\lambda(t)) y^3+\varphi_2(\lambda(t)) y \right)}~( \varphi_2(\lambda(t))y \to y) \\
            &=&-\sum_{t \in \F_{2^{m}}}\sum_{y \in \F_{2^{m}}}{\chi_m\left((\varepsilon_1 t^6+\varepsilon_2 t^4+\varepsilon_3 t^2+\varepsilon_4)y^3+y\right)} ~( t^2 \to t)\\
            &=&-\sum_{t \in \F_{2^{m}}}\sum_{y \in \F_{2^{m}}}{\chi_m\left((\varepsilon_1 t^3+\varepsilon_2 t^2+\varepsilon_3 t+\varepsilon_4)y^3+y\right)}\\
            &=&-\sum_{y \in \F_{2^{m}}}{\chi_m(\varepsilon_4y^3+y)}\sum_{t \in \F_{2^{m}}}{\chi_m\left((\varepsilon_2t^2+\varepsilon_3t)y^3\right)}\\
            &=&-\sum_{y \in \F_{2^{m}}}{\chi_m(\varepsilon_4 y^3 + y)}\sum_{t \in \F_{2^{m}}}{\chi_m\left(\left(\varepsilon_2 y^3+\varepsilon_3^2 y^6\right)t^2\right)}.
        \end{eqnarray*}
        
        If \textit{Condition (1)} holds, note that $\varepsilon_1=\varepsilon_2=\varepsilon_3=0$ due to $\varepsilon_1=\epsilon_1$ and $\epsilon_1=\epsilon_2=\epsilon_3$. Then for any $y\in \F_{2^{m}}$, we have $\sum_{t \in \F_{2^{m}}}{\chi_m\left(\left(\varepsilon_2 y^3+\varepsilon_3^2 y^6\right)t^2\right)}=2^m$. It is trival that $W_{f_P}(\gamma)=-2^{m}\sum_{y \in \F_{2^{m}}}{\chi_m(y)}=0$ when $\varepsilon_4 = 0$ holds true. And for any $\varepsilon_4 \ne 0$, we obtain
        \begin{eqnarray*}
            W_{f_P}(\gamma)&=&-2^{m}\sum_{y \in \F_{2^{m}}}{\chi_m(\varepsilon_4y^3+y)}~(y \to \varepsilon_4^{-\frac{1}{3}} y)\\
            &=&-2^{m}\sum_{y \in \F_{2^{m}}}{\chi_m\left(y^3+\varepsilon_4^{-\frac{1}{3}}y\right)}\\
            &=&
            \begin{cases}
                -\left(\frac{2}{m}\right) \chi_m(\theta^3+\theta)2^{\frac{3m+1}{2}} ,&\text{if}~x^{4}+x=\left(\varepsilon_4^{-\frac{1}{3}}+1\right)^2~\text{has a solution}~ \theta\in\F_{2^{m}}, \\
                0,&\text{otherwise},
            \end{cases}
        \end{eqnarray*}
        where the last equality is owing to Lemma \ref{Three_sum}.  
        
        Let $\gamma=\gamma_1+\gamma_2\omega$ with $\gamma_1,\gamma_2 \in \F_{2^m}$. Then $\gamma \bar{\omega}=\gamma_1\bar{\omega}+\gamma_2$, $\bar{\gamma}\omega=\gamma_1\omega+\gamma_2$. Noting that $\bar{\omega}\bar{a}+a+\omega \ne 0$, we obtain
        \begin{eqnarray*}
            \varepsilon_1 = 0 &\Leftrightarrow& u+\bar{u}+v+\bar{v} = 0\\
            &\Leftrightarrow& \gamma a+\gamma b+\bar{\gamma}\bar{a}+\bar{\gamma}\bar{b}=0\\
            &\Leftrightarrow& \gamma(a+b)+\bar{\gamma}(\bar{a}+\bar{b})=0\\
            &\Leftrightarrow& \gamma \bar{\omega}(a+1)+\bar{\gamma} \omega (\bar{a}+1)=0\\
            &\Leftrightarrow& \gamma_2(a+\bar{a})=\gamma_1+\gamma_1 \bar{\omega} a+\gamma_1 \omega \bar{a}\\
            &\Leftrightarrow& \gamma_2(a+\bar{a})\omega=\gamma_1(\bar{\omega}\bar{a}+a+\omega)\\
            &\Leftrightarrow& \gamma \in R=\left\{\gamma_1+\gamma_2 \omega ~\Big| ~\gamma_2 \in \F_{2^m},\gamma_1=\frac{(a+\bar{a})\omega}{\bar{\omega}\bar{a}+a+\omega}\gamma_2\right\}.
        \end{eqnarray*}
        Then it can be verified that
        \begin{eqnarray*}
            \varepsilon_4 &=& \gamma a+\bar{\gamma}\bar{b}\\
            &=& \gamma a+\bar{\gamma}\left( \bar{\omega} (\bar{a}+1)+1 \right)\\
            &=& \gamma a+\bar{\gamma}\bar{\omega}\bar{a}+\bar{\gamma}\omega\\
            &=& \left( \gamma_1+\gamma_2 \omega \right)a+\left( \gamma_1 \bar{\omega}+\gamma_2 \omega \right)\bar{a}+\gamma_1\omega+\gamma_2\\
            &=& \gamma_2+ \gamma_2 \omega\left( a+\bar{a} \right)+\gamma_1 \omega+\gamma_1 \bar{\omega} \bar{a}+\gamma_1 a\\
            &=& \gamma_2.
        \end{eqnarray*}
        That is to say, for any  triple $(a,b,c)$ satisfying \textit{Condition (1)}, precisely $2^{m}$ distinct values of $\gamma$ such that $\varepsilon_1=0$ hold, and half of these $\gamma$ additionally ensure $\tr_m(\varepsilon_4^{-1/3})=1$, when $\gamma_2$ ranges over $\F_{2^{m}}$. Thus in this case, $W_f(\gamma)\in\{ 0 ,\pm 2^{\frac{3m+1}{2}}\}$.
        
        If \textit{Condition (2)} or \textit{(3)} holds, then $\varepsilon_1=\gamma+\bar{\gamma} \text{~or~} (\gamma+\bar{\gamma})(1+a+b+c)$ holds accordingly. We claim that $\varepsilon_2 \varepsilon_3 \varepsilon_4=0$ if and only if $\varepsilon_2 =\varepsilon_3 =\varepsilon_4=0$. In fact, the premise $\varepsilon_1=0$ directly ensures $\gamma \in \F_{2^m}$ and further limits  
        \begin{eqnarray*}
            \begin{cases*}
                \varepsilon_2=\gamma(a+\bar{a}+1),\\
                \varepsilon_3=\gamma(a+\bar{a}+1),\\
                \varepsilon_4=\gamma(a+\bar{a}+1),
            \end{cases*}
            \text{~or~}
            \begin{cases*}
                \varepsilon_2=\gamma(a+b+c+1),\\
                \varepsilon_3=\gamma(a+b+c+1),\\
                \varepsilon_4=\gamma(a+b).
            \end{cases*}
        \end{eqnarray*}
        Thus $\varepsilon_2 \varepsilon_3 \varepsilon_4 \ne 0$ holds when $\gamma \in \F_{2^m}^{*}$, and we obtain
        \begin{eqnarray*}
            W_{f_P}(\gamma)&=&-2^{m}\sum_{\{y|\varepsilon_2 y^3+\varepsilon_3^2 y^6=0\}}{\chi_m(\varepsilon_4 y^3+y)}\\
            &=&-2^{m}\left(\chi_m(0)+\chi_m\left(\varepsilon_4\frac{\varepsilon_2}{\varepsilon_3^2}+\left( \frac{\varepsilon_2}{\varepsilon_3^2}\right)^{\frac{1}{3}} \right) \right)\\
            &=&0 \text{~or~} -2^{\frac{2m+2}{2}},
        \end{eqnarray*}
        since 
        \[
        \sum_{t \in \F_{2^{m}}}{\chi_m\left(\left(\varepsilon_2 y^3+\varepsilon_3^2 y^6\right)t^2\right)}
        =
        \begin{cases}
            2^{m},& \text{if~} \varepsilon_2 y^3+\varepsilon_3^2 y^6=0,\\
            0,&\text{otherwise}.
        \end{cases}
        \]
        Moreover, it is evident that when $\gamma=0$, the equation $W_{f_P}(\gamma)=0$ holds true.
        
        (\Rm{2}) In the case of $\varepsilon_1 \ne 0$, (\ref{W}) becomes 
        \begin{eqnarray}\label{ww}
            W_{f_P}(\gamma)&=&-\sum_{y \in \F_{2^{m}}}{\chi_m(\varepsilon_4 y^3+y)}\sum_{t \in \F_{2^{m}}}{\chi_m\left((\varepsilon_1 t^3+\varepsilon_2 t^2+\varepsilon_3 t)y^3\right)}+2^{m}~( t \to t+\frac{\varepsilon_2}{\varepsilon_1}) \nonumber \\
            &=&-\sum_{y \in \F_{2^{m}}^{*}}{\chi_m\left(\left(\varepsilon_4+\frac{\varepsilon_2\varepsilon_3}{\varepsilon_1}\right)y^3+y\right)}\sum_{t \in \F_{2^{m}}}{\chi_m\left(\left(\varepsilon_1t^3+\left(\frac{\varepsilon_2^2}{\varepsilon_1}+\varepsilon_3\right)t\right)y^3\right)}~( yt \to t) \nonumber \\
            &=&-\sum_{y \in \F_{2^{m}}^{*}}{\chi_m\left(\left(\varepsilon_4+\frac{\varepsilon_2\varepsilon_3}{\varepsilon_1}\right)y^3+y\right)}\sum_{t \in \F_{2^{m}}}{\chi_m\left(\varepsilon_1t^3+\left(\frac{\varepsilon_2^2}{\varepsilon_1}+\varepsilon_3\right)y^2t\right)}~( \varepsilon_1^{\frac{1}{3}}t \to t) \nonumber \\
            &=&-\sum_{y \in \F_{2^{m}}^{*}}{\chi_m\left(\left(\varepsilon_4+\frac{\varepsilon_2\varepsilon_3}{\varepsilon_1}\right)y^3+y\right)}\sum_{t \in \F_{2^{m}}}{\chi_m\left(t^3+\frac{\varepsilon_2^2+\varepsilon_1\varepsilon_3}{\varepsilon_1^{4/3}}y^2t\right)}.
        \end{eqnarray}
        If $\varepsilon_2^2+\varepsilon_1\varepsilon_3 \ne 0$, (\ref{ww}) can continue to do calculations as follows:
        \begin{eqnarray*}
            W_{f_P}(\gamma)&=&-\sum_{y \in \F_{2^{m}}^{*}}{\chi_m\left(\left(\varepsilon_4+\frac{\varepsilon_2\varepsilon_3}{\varepsilon_1}\right)^2y^6+y^2\right)}\sum_{t \in \F_{2^{m}}}{\chi_m\left(t^3+\frac{\varepsilon_2^2+\varepsilon_1\varepsilon_3}{\varepsilon_1^{4/3}}y^2t\right)}\\
            &=&-\chi_m(\theta^3+\theta)\left(\frac{2}{m}\right)2^{\frac{m+1}{2}-1}\sum_{\theta \in \F_{2^{m}}^{*}}{\chi_m\left(U^2V^3(\theta^4+\theta+1)^3+V(\theta^4+\theta+1)\right)}\\
            &=&-\left(\frac{2}{m}\right)2^{\frac{m-1}{2}}\sum_{\theta \in \F_{2^{m}}^{*}}{\psi(\theta)},
        \end{eqnarray*}
        where $\psi(\theta)=\chi_m\left(\Delta(\theta^4+\theta+1)^3+V(\theta^4+\theta+1)+\theta^3+\theta \right)$, $y^2/V=1+\theta+\theta^4$ with 
        \begin{eqnarray*}
            U=\frac{\varepsilon_1\varepsilon_4+\varepsilon_2\varepsilon_3}{\varepsilon_1},~V=\frac{\varepsilon_1^{4/3}}{\varepsilon_2^2+\varepsilon_1\varepsilon_3},~\Delta=U^2V^3.
        \end{eqnarray*}
        Thus, we have
        \begin{eqnarray*}
            \psi(\theta)
            &=&\chi_m\left(\Delta(\theta^8+\theta^2+1)(\theta^4+\theta+1)+V(\theta^4+\theta+1)+\theta^3+\theta \right)\\
            &=&\chi_m\left(\Delta+V+\left(1+V+\Delta+\Delta^{\frac{1}{2}}+V^{\frac{1}{4}}+\Delta^{\frac{1}{4}}+\Delta^{\frac{1}{8}}\right)\theta+\left( 1+\Delta^{\frac{1}{4}}+\Delta^{\frac{1}{2}}+\Delta\right)\theta^3+\Delta\theta^9 \right)\\
            &=&\chi_m\left(\Delta+V \right) \psi^{\prime}(\theta),
        \end{eqnarray*}
        with 
        \[
        \psi^{\prime}(\theta)=\chi_m\left(\left(1+V+\Delta+\Delta^{\frac{1}{2}}+V^{\frac{1}{4}}+\Delta^{\frac{1}{4}}+\Delta^{\frac{1}{8}}\right)\theta+\left( 1+\Delta^{\frac{1}{4}}+\Delta^{\frac{1}{2}}+\Delta\right)\theta^3+\Delta\theta^9 \right),
        \]
        it suffices to compute $\sum_{\theta \in \F_{2^{m}}}{\psi^{\prime}(\theta)}$. Denote the associated bilinear mapping of $\psi^{\prime}$ by 
        \begin{eqnarray*}
            B_{\psi^{\prime}}(x,y)&=&\psi^{\prime}(x+y)+\psi^{\prime}(x)+\psi^{\prime}(y)\\
            &=&\chi_m\left( \left( 1+\Delta^{\frac{1}{4}}+\Delta^{\frac{1}{2}}+\Delta\right) (xy^2+x^2y)+\Delta(xy^8+x^8y) \right)\\
            &=&\chi_m\left(\left(\Delta^8y^{4^3}+\left( 1+\Delta+\Delta^2+\Delta^4\right)^2y^{4^2}+\left( 1+\Delta+\Delta^2+\Delta^4\right)y^4+\Delta y \right)x^8\right)\\
            &=&\chi_m\left( \left(\Delta^8(y^{16}+y^{64})+(1+\Delta^2+\Delta^4)(y^4+y^{16})+\Delta(y+y^4)\right)x^8 \right),
        \end{eqnarray*}
        then by Lemma \ref{linear} the dimension $d$ of the kernel of 
        \begin{eqnarray*}
            L(y)=\Delta^8y^{4^3}+\left( 1+\Delta+\Delta^2+\Delta^4\right)^2y^{4^2}+\left( 1+\Delta+\Delta^2+\Delta^4\right)y^4+\Delta y
        \end{eqnarray*}
        is not greater than 3. Since $\frac{m+d}{2}$ is an integer and $m$ is odd, $d=1~\text{or}~3$.
        
        Let $Y=y^4+y$, the equation $L(y)=0$ becomes
        \begin{eqnarray*}
             \Delta^8Y^{16}+(1+\Delta^2+\Delta^4)Y^4+\Delta Y =0,
        \end{eqnarray*}
        obviously $y=0$ and $y=1$ are solutions of $L(y)=0$, then we have
        \begin{eqnarray}\label{lasteq}
            && \Delta^8Y^{15}+(1+\Delta^2+\Delta^4)Y^3+\Delta=0~(\Delta Y^3 \to Z) \nonumber\\
            &\Leftrightarrow& \Delta^3Z^5+\frac{1+\Delta^2+\Delta^4}{\Delta}Z+\Delta=0 \nonumber\\
            &\Leftrightarrow& Z^5+\left(1+\frac{1}{\Delta^2}+\frac{1}{\Delta^4}\right)Z+\frac{1}{\Delta^2}=0.
        \end{eqnarray}
        Next, we examine the corresponding Walsh transform values for each condition.
        
        If \textit{Condition (1)} holds, note that $\varepsilon_2^2+\varepsilon_1\varepsilon_3=\epsilon_2^2+\epsilon_1^2+\epsilon_1\epsilon_3+\epsilon_1^2=\epsilon_2^2+\epsilon_1\epsilon_3=0$, Eq. (\ref{ww}) becomes 
        \begin{eqnarray*}
            W_{f_P}(\gamma)&=&-\sum_{y \in \F_{2^{m}}^{*}}{\chi_m\left(\left(\varepsilon_4+\frac{\varepsilon_2\varepsilon_3}{\varepsilon_1}\right)y^3+y\right)}\sum_{t \in \F_{2^{m}}}{\chi_m\left(t^3\right)}\\
            &=&0,
        \end{eqnarray*}
        where the last equality due to $\sum_{t \in \F_{2^{m}}}{\chi_m\left(t^3\right)}=0$.
        
        If \textit{Condition (2)} or \textit{(3)} holds, we have $\varepsilon_2^2+\varepsilon_1\varepsilon_3=(\epsilon_2+\epsilon_1)^2+\epsilon_1(\epsilon_3+\epsilon_1)=\epsilon_2^2+\epsilon_1\epsilon_3$ and $(\varepsilon_1\varepsilon_4+\varepsilon_2\varepsilon_3)\varepsilon_1=\epsilon_1\epsilon_4+\epsilon_1\epsilon+\epsilon_2\epsilon_3+\epsilon_1^2+\epsilon_1(\epsilon_2+\epsilon_3)=(\epsilon_1\epsilon_4+\epsilon_2\epsilon_3)\epsilon_1$ since $\epsilon_1(\epsilon_1+\epsilon_2+\epsilon_3+\epsilon)=\epsilon_1\cdot 0$. Combined with Lemma \ref{coff}, we can obtain that
        \begin{eqnarray*}
            \tr_m\left(\frac{1}{\Delta^2}\right)&=&\tr_m\left(\frac{1}{\Delta}\right)\\
            &=&\tr_m\left(\frac{(\varepsilon_2^2+\varepsilon_1\varepsilon_3)^3}{(\varepsilon_1\varepsilon_4+\varepsilon_2\varepsilon_3)^2\varepsilon_1^2}\right)\\
            &=&\tr_m\left(\frac{(\epsilon_2^2+\epsilon_1\epsilon_3)^3}{(\epsilon_1\epsilon_4+\epsilon_2\epsilon_3)^2\epsilon_1^2}\right)\\
            &=&0.
        \end{eqnarray*}
        For Eq. (\ref{lasteq}), there exists an element $\tau$ such that $\frac{1}{\Delta^2}=\tau^2+\tau$, it suffices to consider 
        \begin{eqnarray}\label{le2}
            Z^5+(1+\tau+\tau^4)Z+\tau+\tau^2=0,
        \end{eqnarray}
        and $Z=\tau$ is exactly a solution of Eq. (\ref{le2}), which means the dimension $d$ is greater than $1$, and thus $d = 3$. Moreover, 
        \begin{eqnarray*}
            W_{f_P}(\gamma)=-\chi_m\left(\Delta+V \right)\left(\frac{2}{m}\right)2^{\frac{m-1}{2}}\sum_{\theta \in \F_{2^{m}}^{*}}{\psi(\theta)} \in \{0,\pm 2^{\frac{2m+2}{2}}\}.
        \end{eqnarray*}
    
        The proof of the theorem is now complete.
    \end{proof}
    \begin{example}
        The permutation $P(x)=x^{3 \cdot 2^m}+x^{2 \cdot 2^m+1}+x^3$ satisfies \textit{Condition (2)} in Lemma \ref{qudpp}, and the Boolean function parameterized by $P(x^2+x)$ is $2$-plateaued (semi-bent) on $\F_{2^{2m}}$ with main cryptographic criteria shown in Table \ref{table3}.
        \begin{table}[H]
		\caption{Semi-bent functions from $P_2(x^2+x)$ with $P_2(x)=x^{3 \cdot 2^m}+x^{2 \cdot 2^m+1}+x^3$ and $m$ odd} 
            \label{table3} \centering
		\begin{tabular}{ccccc}
			\toprule
			$n$ & $f_P(x)=$ $\tr_n(R_n(x))+1$ & $\mathcal{NL}(f)$ & $\deg(f)$ & AI$(f)$  \\
			\midrule
			6 & $R_6(x) = x^5$  & 24 & 2 & 2   \\
			10 & $R_{10}(x) =x^{73}+x^{13} + x^{11}$ &480 &3 & 3  \\
			14 & $R_{14}(x) = x^{593} + x^{553} + x^{85} + x^{53} + x^{43}$ & 8064 &4 & 4  \\
			\bottomrule
		\end{tabular}
	\end{table}
    \end{example}
    
    \subsection{Constructions from DO permutations}
    Plateaued functions in this subsection are derived from a special case of the Dembowski-Ostrom (DO) polynomial of the form $xL(x)$ with $L(x)$ linearized. For permutation property and other applications of DO polynomial, see \cite{blokhuis2001permutations,patarin1996hidden}.
    \begin{Lem}\label{xL1}\cite{blokhuis2001permutations}
        Let $n,m$ be positive integers, the following linearized polynomials $L(x) \in \F_{2^n}[x]$ yield permutation polynomials of the form $xL(x)$ over $\F_{2^n}$:\\
        (1) $L(x)=x^{2^m}$ with $\frac{n}{\gcd(n,m)}$ odd.\\
        (2) $L(x)=x^{2^m}+ax^{2^{n-m}}$ with $\frac{n}{\gcd(n,m)}$ odd and $a^{\frac{2^n-1}{2^{\gcd(n,m)}-1}} \ne 1$.\\
        (3) $L(x)=x^{2^{2m}}+a^{2^m+1}x^{2^{m}}+ax$ with $n=3m$ and $a^{\frac{2^n-1}{2^{m}-1}} \ne 1$.
    \end{Lem}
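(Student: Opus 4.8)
The statement is quoted from \cite{blokhuis2001permutations}; to prove it from scratch I would exploit that each $F(x)=xL(x)$ is a Dembowski--Ostrom (hence quadratic) map, so that its permutation property is equivalent to injectivity, and injectivity can be linearized. Writing $y=x+s$ with $s\ne 0$ and using that $L$ is $\F_2$-linear,
\[
F(x+s)=(x+s)\bigl(L(x)+L(s)\bigr)=F(x)+xL(s)+sL(x)+sL(s),
\]
so $F(x)=F(x+s)$ holds iff $xL(s)+sL(x)=sL(s)$. Hence $F$ permutes $\F_{2^n}$ if and only if, for every $s\in\F_{2^n}^{*}$, the inhomogeneous $\F_2$-linear equation $xL(s)+sL(x)=sL(s)$ in $x$ is inconsistent, i.e.\ $sL(s)\notin \operatorname{Im}\bigl(x\mapsto xL(s)+sL(x)\bigr)$. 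Note $s$ always lies in the kernel of this linear map, so the map is never injective; the whole task is to show that the right-hand side avoids its image.

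For Case (1) this machinery is unnecessary: $xL(x)=x^{2^{m}+1}$ is a monomial, and $x^{e}$ permutes $\F_{2^n}$ iff $\gcd(e,2^{n}-1)=1$. The classical evaluation $\gcd(2^{m}+1,2^{n}-1)$ equals $1$ when $n/\gcd(n,m)$ is odd and $2^{\gcd(n,m)}+1$ otherwise, which is exactly Condition (1). For Cases (2) and (3) I would normalize by substituting $x=su$ into $xL(s)+sL(x)=sL(s)$ and dividing by the leading power of $s$. In Case (2), with $L(x)=x^{2^{m}}+ax^{2^{n-m}}$, this collapses to
\[
(1+\beta)u+u^{2^{m}}+\beta u^{2^{n-m}}=1+\beta,\qquad \beta=a\,s^{2^{n-m}-2^{m}},
\]
so the coefficients depend on $s$ only through the single parameter $\beta$. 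As $s$ runs over $\F_{2^n}^{*}$, the factor $s^{2^{n-m}-2^{m}}$ runs over the subgroup of $(2^{d}-1)$-th powers, where $d=\gcd(n,m)$; here one uses that oddness of $n/d$ forces $\gcd(2m,n)=\gcd(m,n)=d$, so that $\gcd(2^{n-m}-2^{m},2^{n}-1)=2^{d}-1$. The hypothesis $a^{(2^{n}-1)/(2^{d}-1)}\ne 1$, i.e.\ $N_{\F_{2^n}/\F_{2^{d}}}(a)\ne 1$, is then precisely the condition that no $\beta$ arising this way makes the displayed equation solvable.

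The remaining work is to count the zeros of the associated homogeneous linearized polynomials and check that the inhomogeneous term escapes the image for every admissible $\beta$. This is where the available root-count lemmas enter: Trachtenberg's Lemma~\ref{linear} bounds the number of zeros of a $\gcd$-coprime linearized polynomial of degree $2^{dk}$ by $2^{d}$, pinning down the kernel dimension of $x\mapsto xL(s)+sL(x)$, while Coulter's Lemma~\ref{binoeq} supplies the exact solution count of the binomial $a^{2^{m}}x^{2^{2m}}+ax=0$ that surfaces after normalizing Case (3) (where $n=3m$ and $L(x)=x^{2^{2m}}+a^{2^{m}+1}x^{2^{m}}+ax$); the relevant norm there is $N_{\F_{2^n}/\F_{2^{m}}}(a)=a^{(2^{n}-1)/(2^{m}-1)}$, matching Condition (3). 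In each case the quadratic structure recorded in Lemma~\ref{qudratic_sum} identifies the image and kernel explicitly.

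I expect the main obstacle to be the uniform bookkeeping: translating the family of conditions ``$sL(s)\notin\operatorname{Im}$ for all $s\ne 0$'' into the single norm condition on $a$, and verifying that no exceptional $s$ (equivalently, no boundary value of $\beta$, such as $\beta=1$ or $\beta$ lying in the subfield) produces a stray solution. Ruling out these degenerate parameters, and confirming that the simplification $\gcd(2m,n)=\gcd(m,n)$ genuinely holds under the oddness hypothesis, is the delicate part; once the parametrization by $\beta$ and the two root-count lemmas are in place, the three cases follow by direct computation.
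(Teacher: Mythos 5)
First, a point of comparison: the paper never proves this lemma at all --- it is quoted from \cite{blokhuis2001permutations} and used as a black box --- so your attempt can only be judged on its own merits. On those merits there is a genuine gap. Your reduction is correct as far as it goes: Case (1) is complete, and for Cases (2) and (3) the criterion ``$xL(x)$ permutes $\F_{2^n}$ iff $sL(s)\notin \mathrm{Im}\bigl(x\mapsto xL(s)+sL(x)\bigr)$ for every $s\ne 0$,'' the normalization $x=su$, and the determination of the range of $\beta=as^{2^{n-m}-2^m}$ are all sound. But the entire content of the lemma is the step you assert and then defer: that $a^{(2^n-1)/(2^d-1)}\ne 1$ (with $d=\gcd(n,m)$) rules out solvability for \emph{every} admissible $\beta$. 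Your norm computation only excludes $\beta=1$, the single value for which $u=1$ is visibly a solution; for all other $\beta$ in the coset nothing is proved. Moreover, the tools you name cannot close this gap. Since $s$ always lies in the kernel of the homogeneous part, its image is always a proper subspace, and the question is whether the specific target $sL(s)$ lies in that subspace --- a membership question that kernel-size bounds are structurally unable to answer. In addition, Lemma \ref{linear} requires $\gcd(k,n)=1$, which fails here because the relevant exponents are powers of $2^m$ with $\gcd(m,n)=d$ possibly larger than $1$; Lemma \ref{binoeq} counts roots of the binomial $a^{2^m}x^{2^{2m}}+ax$, which arises in the paper's Walsh-transform computations (Theorems \ref{ex1} and \ref{ex2}), not in the permutation question, where the normalized equations are trinomials in $u$; and Lemma \ref{qudratic_sum} concerns Walsh spectra of quadratic Boolean functions and identifies no image here.

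What is missing is an explicit obstruction functional, and for Case (2) it can be supplied inside your own framework: the equation $(1+\beta)u+u^{2^m}+\beta u^{2^{n-m}}=1+\beta$ rewrites as $w+\beta w^{2^{n-m}}=0$ with $w=u+u^{2^m}+1$; here $w\ne 0$ always, because $\mathrm{Im}(u\mapsto u+u^{2^m})=\ker \tr_d^n$ while $\tr_d^n(1)=n/d\equiv 1\pmod 2$ by the oddness hypothesis; hence solvability forces $\beta=w^{1-2^{n-m}}$, whose norm $\beta^{(2^n-1)/(2^d-1)}$ equals $1$ since $d\mid n-m$. As every admissible $\beta$ satisfies $\beta^{(2^n-1)/(2^d-1)}=a^{(2^n-1)/(2^d-1)}$, the hypothesis kills all of them at once. (Even simpler: $xL(x)=G(x^{2^m+1})$ as maps on $\F_{2^n}$, where $G(y)=y+ay^{2^{n-m}}$, so Case (2) is just Case (1) composed with a linearized binomial whose kernel is trivial exactly under the norm condition.) Case (3) does not reduce to such a composition --- the term $ax^2$ is not a linearized function of $x^{2^m+1}$ --- and needs its own argument, which your sketch does not contain.
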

    \begin{Thm}\label{ex1}
        Let $n,m$ be positive integers with $\frac{n}{\gcd(n,m)}$ odd. Suppose that  $P(x)=xL(x)$ and $f_P$ is the balanced Boolean function parameterized by $P(x^2+x)$. Then $f_P$ is $\gcd(n,m)$-plateaued on $\F_{2^n}$ if one of the following two conditions is satisfied:\\
        (1) $L(x)=x^{2^m}$; \\
        (2) $L(x)=x^{2^m}+ax^{2^{n-m}}$ with $a^{\frac{2^n-1}{2^{\gcd(n,m)}-1}} \ne 1$.
    \end{Thm}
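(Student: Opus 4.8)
The plan is to exploit that $P(x)=xL(x)$ is a Dembowski--Ostrom polynomial, hence quadratic, so the relevant exponential sum falls to the machinery of Lemma \ref{qudratic_sum} together with the kernel count of Lemma \ref{binoeq}. By Lemma \ref{relation}, for every $\gamma\in\F_{2^n}^{*}$ one has $W_{f_P}(\gamma)=-S(\gamma P(x)+x,n)=-W_P(1,\gamma)$, the Walsh value of the quadratic function $\varphi(x)=\tr_n(\gamma P(x)+x)$. By Lemma \ref{qudratic_sum} this value lies in $\{0,\pm 2^{(n+d)/2}\}$, where $d$ is the $\F_2$-dimension of the kernel $V_{\varphi}$. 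Thus the whole statement reduces to showing $d=\gcd(n,m)$ for every $\gamma\neq0$ (the case $\gamma=0$ being settled by balancedness, $W_{f_P}(0)=0$); constancy of $d$ is exactly what makes $f_P$ a $\gcd(n,m)$-plateaued function, and the standing hypothesis that $n/\gcd(n,m)$ is odd guarantees $n$ and $d$ share parity, so $(n+d)/2$ is an integer.

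First I would compute the associated bilinear form. Writing $d=\gcd(n,m)$ and using $(x+y)^{2^k+1}+x^{2^k+1}+y^{2^k+1}=x^{2^k}y+xy^{2^k}$ in characteristic $2$, the quadratic part of $\varphi$ produces, after collecting the variable $x$ inside the trace via $\tr_n(z)=\tr_n(z^{2^j})$, a linearized form $B_{\varphi}(x,y)=\tr_n\!\big(x\,M(y)\big)$. For case (1), $P(x)=x^{2^m+1}$ gives $M(y)=\gamma^{2^{-m}}y^{2^{-m}}+\gamma y^{2^m}$, and raising $M(y)=0$ to the $2^m$-th power yields $\gamma^{2^m}y^{2^{2m}}+\gamma y=0$. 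For case (2), $P(x)=x^{2^m+1}+ax^{2^{n-m}+1}$ gives an analogous $M(y)$ whose vanishing, after the same $2^m$-th power step, collapses to $A^{2^m}y^{2^{2m}}+Ay=0$ with $A=\gamma+(\gamma a)^{2^m}$. In both cases $V_{\varphi}$ is precisely the solution set of an equation of the shape treated by Lemma \ref{binoeq}, with parameter $\gamma$ (resp. $A$).

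The conclusion then follows from Lemma \ref{binoeq}: since $n/d$ is odd, the equation $c^{2^m}y^{2^{2m}}+cy=0$ has exactly $2^{d}-1$ nonzero solutions for every $c\neq0$, so the kernel has dimension $d=\gcd(n,m)$ independent of $\gamma$. In case (1) the parameter is $c=\gamma\neq0$ and this is immediate. I expect the main obstacle to be case (2), where one must guarantee $A=\gamma+(\gamma a)^{2^m}\neq0$ for \emph{every} $\gamma\neq0$: were $A$ to vanish for some $\gamma$, the kernel would jump to all of $\F_{2^n}$, making $\varphi$ affine and destroying the plateaued structure. The key observation is that $A=0$ forces $\gamma^{2^m-1}=a^{-2^m}$, i.e.\ $a^{-2^m}$ is a $(2^m-1)$-th power in $\F_{2^n}^{*}$, which holds iff $a^{(2^n-1)/(2^{\gcd(n,m)}-1)}=1$; the hypothesis $a^{(2^n-1)/(2^{\gcd(n,m)}-1)}\neq1$ rules this out, so $A\neq0$ throughout and $d=\gcd(n,m)$ uniformly. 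Combining with Lemma \ref{qudratic_sum} gives $W_{f_P}(\gamma)\in\{0,\pm 2^{(n+\gcd(n,m))/2}\}$, i.e.\ $f_P$ is $\gcd(n,m)$-plateaued.
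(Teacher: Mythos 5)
Your proposal is correct and follows essentially the same route as the paper: reduce $W_{f_P}(\gamma)$ via Lemma \ref{relation} to the Walsh transform of the quadratic function $\tr_n(\gamma P(x)+x)$, apply Lemma \ref{qudratic_sum}, and identify the kernel with the solution set of $A^{2^m}y^{2^{2m}}+Ay=0$, $A=\gamma+(\gamma a)^{2^m}$, whose size is given by Lemma \ref{binoeq} since $n/\gcd(n,m)$ is odd. Your explicit check that $A\neq 0$ for every $\gamma\neq 0$ --- namely that $A=0$ would force $\gamma^{2^m-1}=a^{-2^m}$, which is excluded by $a^{(2^n-1)/(2^{\gcd(n,m)}-1)}\neq 1$ --- actually fills in a step the paper's proof leaves implicit, as it only asserts the count ``if $\gamma+(\gamma a)^{2^m}\neq 0$'' without deriving this non-vanishing from the hypothesis on $a$.
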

    \begin{proof}
        Since the proofs under either Condition \textit{(1)} or \textit{(2)} are similar, only the latter is given here. 
        
        In light of Lemmas \ref{relation} and \ref{binoeq}, 
        \begin{eqnarray*}
            W_{f_P}(\gamma)&=& -\sum_{x \in \F_{2^n}}{\chi_n(\gamma xL(x)+x)} \\
            &=& -\sum_{x \in \F_{2^n}}{\chi_n(\gamma x^{2^m+1}+ \gamma ax^{2^{n-m}+1}+x)} \\
            &=& 
            \begin{cases}
                \pm 2^{\frac{n+d}{2}},& \text{if~} \varphi \text{~vanishes on~} V_{\varphi},\\
                0, & \text{otherwise},
            \end{cases}
        \end{eqnarray*}
        where $\varphi=\tr_n(\gamma xL(x)+x)$, and $d$ is the dimension of 
        \[
            V_{\varphi}=\left\{y \in \F_{2^n} \big| \left( \gamma^{2^{m}}+\left( \gamma a \right)^{2^{2m}} \right)y^{2^{2m}}+\left( \gamma+\left( \gamma a \right)^{2^m} \right)y=0 \right\}.
        \]
        
        Noting that $\frac{n}{\gcd(n,m)}$ is odd, we now turn to the equation  
        \begin{eqnarray*}
            \left( \gamma^{2^{m}}+\left( \gamma a \right)^{2^{2m}} \right)y^{2^{2m}}+\left( \gamma+\left( \gamma a \right)^{2^m} \right)y=0,
        \end{eqnarray*}
        which has $2^{\gcd(n,m)}$ solutions if $\gamma+(\gamma a)^{2^m} \ne 0$. That is to say $d=\gcd(n,m)$, thus  
        \[
            W_{f_P}(\gamma) \in \left\{ 0,\pm 2^{\frac{n+\gcd(n,m)}{2}} \right\}.
        \]
        
        This proof has been finished.
    \end{proof}
    \begin{Rem}
        Consistent with the notation in Theorem \ref{ex1}. Let $n$ be any odd integer, and $m$ be a positive integer with $\gcd(n,m)=1$, other conditions remain the same. Then the parameterized Boolean function derived from $P(x^2+x)$ with $P(x)=x^{2^{m}}$ is $1$-plateaued (near-bent) on $\F_{2^n}$.
    \end{Rem}
    \begin{Thm}\label{ex2}
        Let $n=3m$, $P(x)=xL(x)$, and $f_P$ be the balanced Boolean function parameterized by $P(x^2+x)$, where $L(x)=x^{2^{2m}}+a^{2^m+1}x^{2^{m}}+ax$ with $a^{\frac{2^n-1}{2^{m}-1}} \ne 1$. Then $f_P$ is $m$-plateaued on $\F_{2^n}$.
    \end{Thm}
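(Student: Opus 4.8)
The plan is to mirror the strategy used in the proof of Theorem \ref{ex1}, treating $f_P$ as parameterized by a quadratic (Dembowski--Ostrom) permutation and applying Lemma \ref{qudratic_sum} together with Coulter's count in Lemma \ref{binoeq}. First I would invoke Lemma \ref{relation} to write, for $\gamma \in \F_{2^n}^{*}$, $W_{f_P}(\gamma) = -S(\gamma P(x)+x,n)$, and expand $P(x)=xL(x)=x^{2^{2m}+1}+a^{2^m+1}x^{2^m+1}+ax^2$. Since $ax^2$ and $x$ are linear in $x$ under $\tr_n$, only the two cross terms are genuinely quadratic. Using $\tr_n(\gamma x^{2^{2m}+1})=\tr_n(\gamma^{2^m}x^{2^m+1})$ (apply the Frobenius $z\mapsto z^{2^m}$ and $x^{2^{3m}}=x$), both quadratic terms collapse into a single one, so that $\varphi(x)=\tr_n(\gamma P(x)+x)$ equals $\tr_n\bigl(A\,x^{2^m+1}+\text{(affine)}\bigr)$ with $A=\gamma^{2^m}+\gamma a^{2^m+1}$.

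Next I would apply Lemma \ref{qudratic_sum} with $I=\{m\}$ and $k=m$ (legitimate precisely when $A\neq 0$, since $m<n/2$). The kernel of the associated bilinear form then reads $V_\varphi=\{\,y\in\F_{2^n}\mid A^{2^m}y^{2^{2m}}+A y=0\,\}$, which is exactly the Coulter binomial of Lemma \ref{binoeq} with coefficient $A$. Setting $d=\gcd(m,n)=\gcd(m,3m)=m$, we have $n/d=3$ odd, so Lemma \ref{binoeq} yields $2^m-1$ nonzero solutions, whence $\dim_{\F_2}V_\varphi=m$. Lemma \ref{qudratic_sum} then forces $W_{f_P}(\gamma)\in\{0,\pm 2^{(n+m)/2}\}=\{0,\pm 2^{2m}\}$.

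The crux of the argument is to guarantee that $A\neq 0$ for every $\gamma\in\F_{2^n}^{*}$; otherwise the bilinear form would vanish identically, $V_\varphi$ would be all of $\F_{2^n}$, and the spectrum could reach $\pm 2^n$, destroying the plateaued property. For $\gamma\neq 0$, the equation $A=0$ is equivalent to $\gamma^{2^m-1}=a^{2^m+1}$, i.e. to $a^{2^m+1}$ lying in the image of the $(2^m-1)$-th power map, which is the subgroup $\{x\mid x^{(2^n-1)/(2^m-1)}=1\}$ since $\gcd(2^m-1,2^n-1)=2^m-1$. A short exponent computation shows $(2^m+1)\cdot\tfrac{2^n-1}{2^m-1}\equiv 2\cdot\tfrac{2^n-1}{2^m-1}\pmod{2^n-1}$, so this membership amounts to $\bigl(a^{(2^n-1)/(2^m-1)}\bigr)^2=1$, equivalently $a^{(2^n-1)/(2^m-1)}=1$. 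This is exactly excluded by the hypothesis, so $A\neq 0$ throughout. Finally, $W_{f_P}(0)=0$ by balancedness of $f_P$, so all Walsh values lie in $\{0,\pm 2^{2m}\}=\{0,\pm 2^{(n+m)/2}\}$ and $f_P$ is $m$-plateaued. I expect the membership/exponent reduction in this last paragraph to be the main obstacle, as it is the single place where the permutation hypothesis $a^{(2^n-1)/(2^m-1)}\neq 1$ enters and must be converted into the nonvanishing of $A$.
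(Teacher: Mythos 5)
Your proposal is correct and follows essentially the same route as the paper's proof: both reduce $W_{f_P}(\gamma)$ to a quadratic Weil sum via Lemma \ref{relation}, identify the kernel of the associated bilinear form through Lemma \ref{qudratic_sum}, and invoke Lemma \ref{binoeq} with $d=\gcd(m,3m)=m$ and $n/d=3$ odd to conclude the kernel has dimension $m$ (the paper works with the coefficient $b=A^{2^m}$ and a substitution $y^{2^m}\to y$ instead of collapsing the two quadratic terms by Frobenius first, which is only a cosmetic difference). In fact your write-up is more complete than the paper's: the paper never verifies that its coefficient $b$ is nonzero --- exactly what Lemma \ref{binoeq} requires in order to yield $2^m$ rather than $2^n$ solutions --- whereas your final paragraph proves $A=\gamma^{2^m}+\gamma a^{2^m+1}\neq 0$ for every $\gamma\in\F_{2^n}^{*}$ directly from the hypothesis $a^{(2^n-1)/(2^m-1)}\neq 1$, correctly identifying the one place where the permutation condition enters the argument.
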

    \begin{proof}
        Similar to the proof process of Theorem \ref{ex1}, we obtain $W_{f_P}(\gamma) \in \{0,\pm 2^{\frac{n+d}{2}}\}$, where $\varphi=\tr_n(\gamma xL(x)+x)$, $b=\gamma^{2^{2m}}+( \gamma a^{2^m+1})^{2^m}$, and $d$ is the dimension of 
        \[
            V_{\varphi}=\left\{y \in \F_{2^n}~\big|~b y^{2^{m}}+b^{2^m}\left(y^{2^m}\right)^{2^{2m}}=0 \right\}.
        \]
        
        Note that $\gcd(2^m,2^n-1)=1$, hence $y^{2^m}$ permutes $\F_{2^n}$. Then equation $b y^{2^{m}}+b^{2^m}\left(y^{2^m}\right)^{2^{2m}}=0$ and $b y+b^{2^m}y^{2^{2m}}=0$ have the same number of solutions, which equals $2^m$, and thus $d=m$. 
        
        Now we can conclude that $f_P$ is $m$-plateaued on $\F_{2^n}$.
    \end{proof}
    \begin{example}
        Given a primitive element $\alpha$, all balanced Boolean functions parameterized by $P(x^2+x)$ with permutations $P(x)=xL(x)$ are 2-plateaued (semi-bent) on $\F_{2^6}$, where $L(x)$ equals $x^{2^2}$, $x^{2^2}+\alpha x^{2^{6-4}}$, or $x^{2^{2 \cdot 2}}+\alpha^{2^2+1} x^{2^{2}}+\alpha x$. Table \ref{table4} details their main cryptographic properties consistent with the claims presented in Theorems \ref{ex1} and \ref{ex2}.
        \begin{table}[H]
		\caption{Semi-bent functions from $P(x^2+x)$ with permutations $P(x)=xL(x)$ where $L(x)=x^{2^2},~x^{2^2}+\alpha x^{2^{6-4}}, \text{~or~} x^{2^{2 \cdot 2}}+\alpha^{2^2+1} x^{2^{2}}+\alpha x$ on $\F_{2^6}$} 
            \label{table4} \centering
		\begin{tabular}{ccccc}
			\toprule
			$L(x)$ & $f_P(x)=$ $\tr_n(R(x))+1$ & $\mathcal{NL}(f)$ & $\deg(f)$ & AI$(f)$  \\
			\midrule
			$x^{2^2}$ & $R(x) = x^{13}$  & 24 & 3 & 3   \\
			$x^{2^2}+\alpha x^{2^{6-4}}$ & $R(x) =\alpha^{53}x^{13} + \alpha^{34}x^{11} + \alpha^{15}x^7 + \alpha^{24}x^5 + \alpha^{19}x$ &24 &3 & 3  \\
			$x^{2^{2 \cdot 2}}+\alpha^{2^2+1} x^{2^{2}}+\alpha x$ & $R(x) = \alpha^{36}x^{13} + \alpha^{35}x^{11} + \alpha^{41}x^{7} + \alpha^{15}x^5 + \alpha^{45}x$ & 24 &3 & 3  \\
			\bottomrule
		\end{tabular}
	\end{table}
    \end{example}
    All semi-bent functions shown in Table \ref{table4} achieve the optimal algebraic immunity. Indeed, exhaustive search indicates that all balanced Boolean functions from Theorems \ref{ex1} and \ref{ex2} possess either optimal or suboptimal algebraic immunity for dimensions $n \leq 11$.
    
    \subsection{Constructions from permutation of the form $cx+\tr_{m}^{km}(x^s)$}
    Different from permutations with Niho exponents in Section \ref{four-valued}, here we choose the quadratic ones of the form $cx+\tr^{km}_{m}(x^s)$.
    \begin{Lem}\cite{li2018permutation}
        Let $m,k,i$ be positive integers. Then the following polynomials of the form $cx+\tr^{km}_{m}(x^s)$ permute $\F_{2^{km}}$.\\
        (1) $P=cx+\tr^{km}_{m}(x^{2^i(2^{m}+1)})$ where $c \in \F_{2^{2m}} \backslash \F_{2^{m}}$ and $k \equiv 2 \pmod{4}$;\\
        (2) $P=cx+\tr^{km}_{m}(x^{\frac{2^{2m}+2^{m}}{2}})$ where $c \in \F_{2^{m}} \backslash \F_2$ and $k$ odd;\\
        (3) $P=cx+\tr^{km}_{m}(x^{2(2^{im}+1)})$ where $c \in \F_{2^{m}}^{*}$, $c^{\frac{2^{m}-1}{3}} \ne 1$, $m \not \equiv 0 \pmod {3}$, $m$ even and $k$ odd.
    \end{Lem}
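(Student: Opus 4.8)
The three families share the common shape $P(x)=cx+g(x)$ with $g(x)=\tr_m^{km}(x^s)\in\F_{2^m}$, and in every case the exponent $s$ has binary weight $2$, i.e.\ $s=2^r(2^t+1)$ after reduction modulo $2^{km}-1$. This is the structural feature I would exploit first: a monomial of weight $2$ makes $g$ a \emph{quadratic} function, so its polarization $g(x+\epsilon)+g(x)+g(\epsilon)=B(x,\epsilon)$ is $\F_2$-bilinear. Since $P(0)=0$, the identity $P(x)+P(x+\epsilon)=P(\epsilon)+B(x,\epsilon)$ turns the injectivity test into a \emph{linear} one: $P$ permutes $\F_{2^{km}}$ if and only if for every $\epsilon\neq 0$ the affine equation $B(x,\epsilon)=c\epsilon+g(\epsilon)$ has no solution $x$.

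The plan is then to constrain $\epsilon$ and show that $c\epsilon+g(\epsilon)$ escapes the image of $x\mapsto B(x,\epsilon)$. Because $B(x,\epsilon)$ and $g(\epsilon)$ lie in $\F_{2^m}$, any solution forces $c\epsilon\in\F_{2^m}$, hence $\epsilon\in c^{-1}\F_{2^m}$. In cases (2) and (3), where $c\in\F_{2^m}^*$, this is exactly $\epsilon\in\F_{2^m}$; in case (1), where $c\in\F_{2^{2m}}\setminus\F_{2^m}$ and $2\mid k$ (so $\F_{2^{2m}}\subseteq\F_{2^{km}}$), it is $\epsilon\in c^{-1}\F_{2^m}\subseteq\F_{2^{2m}}$. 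For each such $\epsilon$ I would compute $B(\cdot,\epsilon)$ from the factorization $g(x)=\bigl(\tr_m^{km}(x^{2^{t}+1})\bigr)^{2^{r}}$ (peeling off the common Frobenius power $2^r$ from $s$, which is harmless for vanishing): the inner cross term is $\tr_m^{km}\bigl(x\epsilon^{2^{t}}+x^{2^{t}}\epsilon\bigr)$. Here the key trace identity $\tr_m^{km}(y)=\tr_m^{km}(y^{2^m})$ collapses the two summands, since on the relevant subfield $\epsilon^{2^t}=\epsilon$ gives $\tr_m^{km}(x^{2^t}\epsilon)=\tr_m^{km}\bigl((x\epsilon)^{2^t}\bigr)=\tr_m^{km}(x\epsilon)$, forcing $B(\cdot,\epsilon)\equiv 0$.

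With the bilinear form dead, injectivity reduces to the single scalar requirement $c\epsilon+g(\epsilon)\neq 0$ for all constrained $\epsilon\neq 0$, and the side conditions enter precisely here. Evaluating $g$ on the subfield gives $g(\epsilon)=k\cdot(\text{power of }\epsilon)$, so the parity of $k$ is decisive: $k$ even kills $g(\epsilon)$ in case (1), leaving $c\epsilon\neq 0$; $k$ odd yields $g(\epsilon)=\epsilon$ in case (2), leaving $(c+1)\epsilon=0$, impossible as $c\notin\F_2$; and $k$ odd yields $g(\epsilon)=\epsilon^4$ in case (3), leaving $\epsilon^3=c$. This last equation is the crux: with $m$ even one has $3\mid 2^m-1$, so $c$ is a cube in $\F_{2^m}^*$ exactly when $c^{(2^m-1)/3}=1$, and the hypothesis $c^{(2^m-1)/3}\neq 1$ is what forbids a nonzero solution.

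I expect the main obstacle to be the bookkeeping in the second step: verifying that the polarization vanishes uniformly across the three exponents and confirming that $c\epsilon\in\F_{2^m}$ genuinely lands $\epsilon$ inside a subfield where $\epsilon^{2^t}=\epsilon$ holds (this is where $2\mid k$ in case (1) is used). The remaining side hypotheses ($k\equiv 2\pmod 4$ beyond $2\mid k$, and $m\not\equiv 0\pmod 3$) appear not to be needed for injectivity in this argument; I would double-check whether they are required for non-degeneracy of the exponent reduction or are simply inherited from \cite{li2018permutation} for the subsequent plateaued-function analysis.
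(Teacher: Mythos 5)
First, a point of comparison: the paper itself does not prove this lemma at all --- it is quoted from \cite{li2018permutation} --- so your proposal can only be judged on its own merits rather than against an in-paper argument. Your framework is sound and most of it is correct: writing $P(x)+P(x+\epsilon)=c\epsilon+g(\epsilon)+B(x,\epsilon)$ with $B(\cdot,\epsilon)$ and $g(\epsilon)$ valued in $\F_{2^m}$, so that a collision forces $\epsilon\in c^{-1}\F_{2^m}^{*}$, is the right reduction. In cases (2) and (3), where this constraint reads $\epsilon\in\F_{2^m}^{*}$, your collapse of the cross term via $\epsilon^{2^t}=\epsilon$ together with $\tr_m^{km}(y^{2^m})=\tr_m^{km}(y)$ (iterated, since $t$ is a multiple of $m$ in all three families) is correct, and the endgame is exactly right: $(c+1)\epsilon\neq 0$ in case (2) since $c\notin\F_2$, and $\epsilon^3\neq c$ in case (3) by the non-cube hypothesis $c^{(2^m-1)/3}\neq 1$, which is meaningful because $m$ even gives $3\mid 2^m-1$. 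You are also right that this argument consumes only $2\mid k$ and never uses $m\not\equiv 0\pmod 3$; those stronger hypotheses are simply inherited from the source.

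The gap is case (1), and it sits precisely at the step you flagged. There the constrained set is $c^{-1}\F_{2^m}$ with $c\in\F_{2^{2m}}\setminus\F_{2^m}$: this is an $\F_{2^m}$-line, not a subfield, and no nonzero element of it satisfies $\epsilon^{2^m}=\epsilon$ (that would force $c\in\F_{2^m}$). So the proposed mechanism ``$\epsilon^{2^t}=\epsilon$ kills the polarization'' fails outright in case (1), where $t=m$. The conclusion is nevertheless true, and the repair is short: the Frobenius-peeled polarization is $\tr_m^{km}(x^{2^m}\epsilon+x\epsilon^{2^m})=\tr_m^{km}\left(x\left(\epsilon^{2^{(k-1)m}}+\epsilon^{2^m}\right)\right)$, and for $\epsilon\in c^{-1}\F_{2^m}\subseteq\F_{2^{2m}}$ with $k$ even one has $(k-1)m\equiv m\pmod{2m}$, hence $\epsilon^{2^{(k-1)m}}=\epsilon^{2^m}$ and the form vanishes identically. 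Similarly, your claim that ``$k$ even kills $g(\epsilon)$'' needs the extra (true) observation that $\epsilon^{2^i(2^m+1)}=\left(\epsilon^{2^m+1}\right)^{2^i}$ lies in $\F_{2^m}$, because $\epsilon^{2^m+1}$ is the norm of $\epsilon\in\F_{2^{2m}}$ over $\F_{2^m}$; only then does $g(\epsilon)=k\cdot\epsilon^{s}=0$ follow. With these two fixes case (1) closes, and the whole proof is correct.
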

    \begin{Lem}\label{symmetry}
        Let $k,m,s$ be positive integers and $\gamma \in \F_{2^{km}}$. The trace functions $\tr_{km}(x)$ and $\tr_{m}^{km}(x)$ satisfy the following equation:
        \[
            \tr_{km}\left(\gamma \tr^{km}_{m}(x^s)\right)=\tr_{km}\left( \tr_{m}^{km}(\gamma) x^{s} \right).
        \]
    \end{Lem}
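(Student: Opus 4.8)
The plan is to reduce both sides to a common expression over $\F_{2^m}$ by using the transitivity of the trace together with the $\F_{2^m}$-linearity of $\tr_m^{km}$. Write $n=km$ and set $a=\gamma$, $b=x^s$, so that the claimed identity is the instance
\[
\tr_{n}\bigl(a\,\tr_m^{n}(b)\bigr)=\tr_{n}\bigl(\tr_m^{n}(a)\,b\bigr)
\]
obtained by substituting $b=x^{s}$; the exponent $s$ itself plays no role in the argument.

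First I would invoke the transitivity relation $\tr_{n}=\tr_m\circ\tr_m^{n}$ recorded in the preliminaries. Applying it to the left-hand side gives $\tr_{n}\bigl(a\,\tr_m^{n}(b)\bigr)=\tr_m\bigl(\tr_m^{n}(a\,\tr_m^{n}(b))\bigr)$. Since $\tr_m^{n}(b)\in\F_{2^m}$ and $\tr_m^{n}$ is $\F_{2^m}$-linear, the factor $\tr_m^{n}(b)$ may be pulled outside the inner trace, yielding $\tr_m\bigl(\tr_m^{n}(a)\,\tr_m^{n}(b)\bigr)$. The crucial ingredient here is that any $c\in\F_{2^m}$ satisfies $c^{2^{im}}=c$, hence $\tr_m^{n}(cy)=\sum_{i=0}^{k-1}(cy)^{2^{im}}=c\,\tr_m^{n}(y)$.

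Applying the same two steps to the right-hand side—this time pulling the factor $\tr_m^{n}(a)\in\F_{2^m}$ outside $\tr_m^{n}$—gives $\tr_{n}\bigl(\tr_m^{n}(a)\,b\bigr)=\tr_m\bigl(\tr_m^{n}(a)\,\tr_m^{n}(b)\bigr)$. The two sides thus coincide with the single symmetric expression $\tr_m\bigl(\tr_m^{n}(a)\,\tr_m^{n}(b)\bigr)$, which proves the identity after substituting back $a=\gamma$ and $b=x^{s}$. I do not anticipate any genuine obstacle: the statement is essentially the self-adjointness of the trace bilinear form relative to the tower $\F_2\subset\F_{2^m}\subset\F_{2^n}$, and once the $\F_{2^m}$-linearity of the inner trace is isolated, everything is routine bookkeeping.
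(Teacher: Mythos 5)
Your proof is correct, but it follows a genuinely different route from the paper's. The paper proves the lemma by brute-force conjugate shifting inside a single absolute trace: it expands $\tr^{km}_{m}(x^s)$ as $x^{s}+x^{s2^m}+\cdots+x^{s2^{m(k-1)}}$, then uses the invariance of $\tr_{km}$ under the Frobenius to move each power $2^{mi}$ off $x^s$ and onto $\gamma$, i.e. $\tr_{km}\left(\gamma x^{s2^{mi}}\right)=\tr_{km}\left(\gamma^{1/2^{mi}}x^{s}\right)$, and finally observes that the collected conjugates $\gamma+\gamma^{1/2^m}+\cdots+\gamma^{1/2^{m(k-1)}}$ equal $\tr_{m}^{km}\left(\gamma^{1/2^{m(k-1)}}\right)=\tr_{m}^{km}(\gamma)$, the relative trace absorbing the residual Frobenius twist. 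You instead factor $\tr_{km}=\tr_m\circ\tr_m^{km}$ and use the $\F_{2^m}$-linearity of the relative trace to collapse \emph{both} sides to the single symmetric expression $\tr_m\left(\tr_m^{km}(\gamma)\,\tr_m^{km}(x^s)\right)$. Your argument is the more structural one: it exhibits the identity as self-adjointness of the trace pairing relative to the tower $\F_2\subset\F_{2^m}\subset\F_{2^{km}}$, makes transparent that the exponent $s$ plays no role, and avoids manipulating inverse-Frobenius exponents. The paper's computation has slightly lighter prerequisites, needing only $\tr_{km}(y^2)=\tr_{km}(y)$ rather than the transitivity formula (one small inaccuracy on your side: that formula is not stated in the preliminaries, though it is invoked and justified in the proof of Proposition \ref{nihopp}, so citing it is unproblematic). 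Both arguments are complete and essentially one display long.
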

    \begin{proof}
        It is evident that
        \begin{eqnarray*}
            \tr_{km}\left(\gamma \tr^{km}_{m}(x^s)\right)
            &=& \tr_{km}\left(\gamma \left( x^{s}+x^{s2^m}+\cdots+x^{s2^{m(k-1)}} \right) \right)\\
            &=& \tr_{km}\left( \left( \gamma+\gamma^{\frac{1}{2^m}}+\cdots+\gamma^{\frac{1}{2^{m(k-1)}}} \right) x^{s} \right)\\
            &=& \tr_{km}\left( \tr_{m}^{km}(\gamma^{\frac{1}{2^{m(k-1)}}}) x^{s} \right)\\
            &=& \tr_{km}\left( \tr_{m}^{km}(\gamma) x^{s} \right).
        \end{eqnarray*}
    \end{proof}
    \begin{Thm}\label{pl2}
        Let $m,k,i$ be positive integers, $c \in \F_{2^{2m}} \backslash \F_{2^{m}}$, and $k \equiv 2 \pmod{4}$. Consider the balanced Boolean function $f_P$ derived from $P(x^2+x)$ with $P(x)=cx+\tr^{km}_{m}(x^{2^i(2^{m}+1)})$. Then $f_P$ is $2m$-plateaued on $\F_{2^{km}}$.
    \end{Thm}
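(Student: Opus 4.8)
The plan is to reduce the defining Weil sum to a quadratic exponential sum and then apply Lemma \ref{qudratic_sum}. By Lemma \ref{relation}, for every $\gamma \in \F_{2^{km}}^{*}$ we have $W_{f_P}(\gamma) = -S(\gamma P(x)+x, km)$. Writing $P(x) = cx + \tr_m^{km}(x^s)$ with $s = 2^i(2^m+1)$ and using Lemma \ref{symmetry} to transfer the inner trace, the exponent becomes $\tr_{km}\!\big(\tr_m^{km}(\gamma)\,x^{s} + (\gamma c + 1)x\big)$. Since $\tr_{km}$ is invariant under the Frobenius map, the quadratic summand satisfies $\tr_{km}(\tr_m^{km}(\gamma)\,x^{s}) = \tr_{km}(\tilde\beta\, x^{2^m+1})$ with $\tilde\beta = \big(\tr_m^{km}(\gamma)\big)^{2^{-i}} \in \F_{2^m}$, whence
\[
    W_{f_P}(\gamma) = -\sum_{x \in \F_{2^{km}}} \chi_{km}\!\left( \tilde\beta\, x^{2^m+1} + (\gamma c + 1)\,x \right).
\]
Thus the relevant Boolean function $\varphi_\gamma(x) = \tr_{km}(\tilde\beta\, x^{2^m+1} + (\gamma c+1)x)$ is quadratic with a single index $I=\{m\}$, which is admissible since $m \le km/2$ for $k \ge 2$.

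Next I would treat the generic case $\tilde\beta \ne 0$. By formula (\ref{gy}), with $\max I = m$, the kernel of $\varphi_\gamma$ is $V_{\varphi_\gamma} = \{ y : \tilde\beta^{2^m} y^{2^{2m}} + \tilde\beta y = 0\}$; because $\tilde\beta \in \F_{2^m}^{*}$ forces $\tilde\beta^{2^m}=\tilde\beta$, this collapses to $\{y : y^{2^{2m}} = y\} = \F_{2^{2m}}$. Equivalently, one invokes Lemma \ref{binoeq} with $a=\tilde\beta$ and $d=\gcd(m,km)=m$: since $k$ is even one lands in the ``$n/d$ even'' branch, and every nonzero element of $\F_{2^m}$ is a $(2^m+1)$-th power in $\F_{2^{km}}$ (as $\gcd(2^m+1,2^m-1)=1$), so the equation has $2^{2m}-1$ nonzero roots. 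Either way $\dim_{\F_2} V_{\varphi_\gamma} = 2m$, and Lemma \ref{qudratic_sum} gives $W_{f_P}(\gamma) \in \{0, \pm 2^{(km+2m)/2}\}$.

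The remaining case is $\tilde\beta = 0$, i.e.\ $\tr_m^{km}(\gamma)=0$, where $\varphi_\gamma$ degenerates to the affine function $\tr_{km}((\gamma c + 1)x)$, whose Walsh value is $\pm 2^{km}$ exactly when $\gamma c + 1 = 0$ and $0$ otherwise. The unique point with vanishing affine part is $\gamma = c^{-1}$, so I must exclude it from the degenerate branch; it suffices to check $\tr_m^{km}(c^{-1}) \ne 0$. Since $c^{-1}\in\F_{2^{2m}}$, its conjugates under $x\mapsto x^{2^m}$ alternate between $c^{-1}$ and $c^{-2^m}$, giving $\tr_m^{km}(c^{-1}) = \tfrac{k}{2}\,(c^{-1}+c^{-2^m})$; here $\tfrac{k}{2}$ is \emph{odd} precisely because $k \equiv 2 \pmod 4$, so $\tr_m^{km}(c^{-1}) = c^{-1}+c^{-2^m}\ne 0$ as $c\notin\F_{2^m}$. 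Hence $\tilde\beta=0$ forces $\gamma\ne c^{-1}$, the affine part is nonconstant, and $W_{f_P}(\gamma)=0$. Combining both cases with $W_{f_P}(0)=0$ from balancedness yields $W_{f_P}(\gamma)\in\{0,\pm 2^{(km+2m)/2}\}$ for all $\gamma$, so $f_P$ is $2m$-plateaued.

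The main obstacle is exactly this final trace computation: the argument delivers a \emph{single} amplitude $2^{(km+2m)/2}$ only if the affine-degenerate branch never reaches the larger value $2^{km}$, and verifying $\tr_m^{km}(c^{-1})\ne 0$ is where the hypothesis $k\equiv 2\pmod 4$ (as opposed to $k\equiv 0\pmod 4$, which would make $\tfrac{k}{2}$ even and annihilate this trace) becomes indispensable. Everything else is a routine reduction through Lemmas \ref{qudratic_sum} and \ref{binoeq}.
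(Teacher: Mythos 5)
Your proof is correct and takes essentially the same route as the paper: reduce via Lemmas \ref{relation} and \ref{symmetry} to the quadratic sum $\sum_{x}\chi_{km}\left(\tilde\beta x^{2^m+1}+(\gamma c+1)x\right)$, obtain a $2m$-dimensional kernel when $\tilde\beta\ne 0$, and in the degenerate case $\tilde\beta=0$ rule out $\gamma=c^{-1}$ by exactly the same trace computation exploiting $k\equiv 2\pmod 4$. The only (cosmetic, if anything slightly cleaner) difference is that you compute the kernel directly from $\tilde\beta^{2^m}=\tilde\beta$, so that it collapses to $\F_{2^{2m}}$, rather than citing Lemma \ref{binoeq} as the paper does, which spares you the verification that $\tilde\beta$ is a $(2^m+1)$-th power.
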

    \begin{proof}
        According to Lemma \ref{symmetry}, we observe the fact that
        \begin{eqnarray*}
            \tr_{km}\left(\gamma \tr^{km}_{m}(x^{2^i(2^{m}+1)})\right)
            &=& \tr_{km}\left( \tr_{m}^{km}(\gamma) x^{2^i(2^{m}+1)} \right)\\
            &=& \tr_{km}\left( \tr_{m}^{km}(\gamma)^{\frac{1}{2^i}} x^{2^{m}+1} \right).
        \end{eqnarray*}
        Consequently, as established by Lemma \ref{relation}, the Walsh transform of the function $f$ evaluated at arbitrary $\gamma \in \F_{2^{km}}$ simplifies to
        \begin{eqnarray*}
            W_{f_P}(\gamma) &=& -\sum_{x \in \F_{2^{km}}}{\chi_{km}(\gamma P(x)+x)} \\
            &=& -\sum_{x \in \F_{2^{km}}}{\chi_{km}\left(\gamma \tr^{km}_{m}(x^{2^i(2^{m}+1)})+(\gamma c+1)x \right)}\\
            &=& -\sum_{x \in \F_{2^{km}}}{\chi_{km}\left(\tr_{m}^{km}(\gamma)^{\frac{1}{2^i}} x^{2^{m}+1}+(\gamma c+1)x \right)}.
        \end{eqnarray*}

        We claim that $\gamma c+1 \ne 0$ always holds for any $\gamma$ such that $\tr_{m}^{km}(\gamma)=0$. If the claim is not true, then $\gamma=c^{-1} \in \F_{2^{2m}} \backslash \F_{2^{m}}$, which implies $\gamma^{2^{2m}}+\gamma=0$ and $\gamma^{2^{m}}+\gamma \ne 0$. This leads to a contradiction
        \begin{eqnarray*}
            \tr_{m}^{km}(\gamma) &=& \gamma+\gamma^{2^{m}}+\cdots+\gamma^{2^{(k-1)m}}\\
            &=& \gamma+\gamma^{2^{m}} \ne 0.
        \end{eqnarray*}
        
        Therefore, we can make the conclusion that (\Rm{1}) if $\tr_{m}^{km}(\gamma)=0$, then $W_f(\gamma)=0$; (\Rm{2}) if $\tr_{m}^{km}(\gamma) \ne 0$, the bilinear form of 
        \[
            \tr_{km}\left(\tr_{m}^{km}(\gamma)^{\frac{1}{2^i}} x^{2^{m}+1}+(\gamma c+1)x\right)
        \]
        has a $2m$-dimensional kernel over $\F_{2}$, since $\gcd(2m,km)=2m$. Then we obtain
        \[
        W_{f_P}(\gamma) \in \left\{0,\pm 2^{\frac{km+2m}{2}}\right\}.
        \]
        
        Complete the proof here.
    \end{proof}
    \begin{Rem}
        Consistent with the notation in Theorem \ref{pl2}. Let $m=1$, and other conditions remain the same. Then parameterized Boolean function derived from $P(x^2+x)$ with $P(x)=cx+\tr_{k}(x^{3})$ is $2$-plateaued (semi-bent) on $\F_{2^k}$.
    \end{Rem}
    \begin{Thm}\label{pl3}
        Let $m,k,i$ be positive integers with $k$ odd, $c \in \F_{2^{m}} \backslash \F_{2}$. Consider the balanced Boolean function $f_P$ derived from $P(x^2+x)$ with $P(x)=cx+\tr^{km}_{m}(x^{\frac{2^{2m}+2^{m}}{2}})$. Then $f_P$ is $m$-plateaued on $\F_{2^{km}}$.
    \end{Thm}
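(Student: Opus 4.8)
The plan is to mirror the argument of Theorem \ref{pl2}, replacing the exponent $2^i(2^m+1)$ by $s=\frac{2^{2m}+2^m}{2}=2^{m-1}(2^m+1)$ and tracking how the parity change of $k$ (now odd rather than $\equiv 2 \pmod 4$) alters the kernel dimension. First I would rewrite $x^s=(x^{2^m+1})^{2^{m-1}}$ and apply Lemma \ref{symmetry} together with the Frobenius invariance of the absolute trace to obtain
\[
\tr_{km}\!\left(\gamma\,\tr_m^{km}(x^s)\right)=\tr_{km}\!\left(\tr_m^{km}(\gamma)^{\frac{1}{2^{m-1}}}\,x^{2^m+1}\right).
\]
Combined with Lemma \ref{relation}, this yields
\[
W_{f_P}(\gamma)=-\sum_{x\in\F_{2^{km}}}\chi_{km}\!\left(\tr_m^{km}(\gamma)^{\frac{1}{2^{m-1}}}x^{2^m+1}+(\gamma c+1)x\right),
\]
so the problem reduces to evaluating the Walsh transform of a quadratic Boolean function whose quadratic part is governed by $a:=\tr_m^{km}(\gamma)^{1/2^{m-1}}$.

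The next step splits into two cases according to whether $\tr_m^{km}(\gamma)$ vanishes. When $\tr_m^{km}(\gamma)=0$ the quadratic part disappears, and I would show $\gamma c+1\neq 0$ by contradiction: if $\gamma c+1=0$ then $\gamma=c^{-1}\in\F_{2^m}$, whence $\tr_m^{km}(\gamma)=k\gamma=\gamma=c^{-1}\neq 0$ because $k$ is odd, contradicting the hypothesis. Consequently the remaining inner sum is over a nontrivial linear form and forces $W_{f_P}(\gamma)=0$.

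When $\tr_m^{km}(\gamma)\neq 0$ we have $a\neq 0$, and I would invoke Lemma \ref{qudratic_sum}: with the single index $i=m$, the associated kernel is
\[
V=\left\{y\in\F_{2^{km}}\mid a^{2^m}y^{2^{2m}}+ay=0\right\},
\]
which is precisely the solution set treated by Lemma \ref{binoeq} with $d=\gcd(m,km)=m$. Since $k$ is odd, $\tfrac{n}{d}=\tfrac{km}{m}=k$ is odd, placing us in the first branch of Lemma \ref{binoeq} with $N=2^m-1$ nonzero roots, hence $\dim_{\F_2}V=m$. Therefore $W_{f_P}(\gamma)\in\{0,\pm 2^{(km+m)/2}\}$, and combining both cases shows that $f_P$ is $m$-plateaued.

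The main obstacle will be the kernel-dimension computation, specifically confirming that the parity of $k$ alone dictates the outcome: the very same binomial $a^{2^m}y^{2^{2m}}+ay$ produces a $2m$-dimensional kernel in Theorem \ref{pl2} (where $k$ is even) but only an $m$-dimensional one here. This is exactly the point where one must apply Lemma \ref{binoeq} with the correct $d=\gcd(m,km)$ and verify that $n/d=k$ is odd, rather than re-using the $\gcd(2m,km)=2m$ shortcut valid only in the even case. A minor bookkeeping point to check alongside is that $m=\max I<\lfloor km/2\rfloor$ for odd $k\geq 3$, so that Lemma \ref{qudratic_sum} indeed applies with the leading coefficient $a\in\F_{2^{km}}^{*}$.
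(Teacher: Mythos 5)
Your proposal is correct and takes essentially the same route as the paper's own proof: the identical reduction via Lemmas \ref{symmetry} and \ref{relation} to the sum $\sum_{x}\chi_{km}\bigl(\tr_m^{km}(\gamma)^{1/2^{m-1}}x^{2^m+1}+(\gamma c+1)x\bigr)$, the same contradiction argument (using $k$ odd) to show $\gamma c+1\neq 0$ when $\tr_m^{km}(\gamma)=0$, and the same kernel-dimension computation via Lemma \ref{binoeq} with $d=\gcd(m,km)=m$ and $km/d=k$ odd. The only difference is notational: the paper writes the coefficient as $\tr_m^{km}(\gamma)^{2/2^m}$, which coincides with your $\tr_m^{km}(\gamma)^{1/2^{m-1}}$.
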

    \begin{proof}
        Note that 
        \begin{eqnarray*}
            \tr_{km}\left(\gamma \tr^{km}_{m}(x^{\frac{2^{2m}+2^{m}}{2}})\right)
            &=& \tr_{km}\left( \tr_{m}^{km}(\gamma) x^{\frac{2^{m}(2^{m}+1)}{2}} \right)\\
            &=& \tr_{km}\left( \tr_{m}^{km}(\gamma)^{\frac{2}{2^{m}}} x^{2^{m}+1} \right),
        \end{eqnarray*}
        then we need to compute 
        \begin{eqnarray*}
            W_{f_P}(\gamma) &=& -\sum_{x \in \F_{2^{km}}}{\chi_{km}(\gamma P(x)+x)} \\
            &=& -\sum_{x \in \F_{2^{km}}}{\chi_{km}\left(\gamma \tr^{km}_{m}(x^{\frac{2^{2m}+2^{m}}{2}})+(\gamma c+1)x \right)}\\
            &=& -\sum_{x \in \F_{2^{km}}}{\chi_{km}\left(\tr_{m}^{km}(\gamma)^{\frac{2}{2^{m}}} x^{2^{m}+1}+(\gamma c+1)x \right)},
        \end{eqnarray*}
        divide it into two cases that $\gamma$ satisfies (\Rm{1}) $\tr_{m}^{km}(\gamma)=0$ or (\Rm{2}) $\tr_{m}^{km}(\gamma) \ne 0$.
        
        (\Rm{1}) In the case of $\tr_{m}^{km}(\gamma)=0$, it can be proved that $\gamma c+1 \ne 0$. Otherwise $\gamma =c^{-1} \in \F_{2^{m}} \backslash \F_{2}$, and then
        \begin{eqnarray*}
            \tr_{m}^{km}(\gamma) &=& \gamma+\gamma^{2^{m}}+\cdots+\gamma^{2^{(k-1)m}}\\
            &=& \underbrace{\gamma+\gamma+\cdots+\gamma}_{k \text{~times~with~} k \text{~odd}} \\
            &=& \gamma ~\ne~ 0,
        \end{eqnarray*}
        which contradicts the assumption. Thus we get
        \begin{eqnarray*}
            W_{f_P}(\gamma)=-\sum_{x \in \F_{2^{km}}}{\chi_{km}\left((\gamma c+1)x \right)}=0.
        \end{eqnarray*}

        (\Rm{2}) Consider the case of $\tr_{m}^{km}(\gamma) \ne 0$, then according to Lemma \ref{binoeq}, the bilinear form of 
        \[
            \tr_{km}\left(\tr_{m}^{km}(\gamma)^{\frac{2}{2^{m}}} x^{2^{m}+1}+(\gamma c+1)x\right)
        \]
        has a $m$-dimensional kernel over $\F_{2}$, since $\gcd(km,2m)=m$ and $\frac{km}{\gcd(km,m)}$ is odd, which implies 
        \[
            W_{f_P}(\gamma) \in \left\{ 0,\pm 2^{\frac{km+m}{2}}\right\}.
        \]

        The proof can be completed here. 
    \end{proof}
    \begin{Rem}
        Consistent with the notation in Theorem \ref{pl3}. Let $m=2$ and other conditions remain the same. Then parameterized Boolean function derived from $P(x^2+x)$ with $P(x)=cx+\tr_{2}^{2k}(x^{10})$ is $2$-plateaued (semi-bent) on $\F_{2^{2k}}$.
    \end{Rem}
    \begin{Thm}\label{pl4}
        Let $m,k,i$ be positive integers, where $k$ is odd, $m$ is even, and $m \not \equiv 0 \pmod{3}$. Consider the balanced Boolean function $f_P$ derived from $P(x^2+x)$ with $P(x)=cx+\tr^{km}_{m}(x^{2(2^{im}+1)})$, where $c \in \F_{2^{m}}^{*}$ such that $c^{\frac{2^{m}-1}{3}} \ne 1$. Then $f_P$ is $m\gcd(i,k)$-plateaued on $\F_{2^{km}}$.
    \end{Thm}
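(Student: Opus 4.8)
The plan is to follow the same three-stage template used in the proofs of Theorems \ref{pl2} and \ref{pl3}: first rewrite the Walsh transform via Lemmas \ref{relation} and \ref{symmetry}, then dispose of the degenerate case $\tr_m^{km}(\gamma)=0$, and finally apply Lemma \ref{binoeq} to read off the kernel dimension in the generic case. First I would use Lemma \ref{symmetry} to move $\gamma$ inside the trace, obtaining
\[
\tr_{km}\left(\gamma\tr_m^{km}(x^{2(2^{im}+1)})\right)=\tr_{km}\left(\tr_m^{km}(\gamma)\,x^{2(2^{im}+1)}\right)=\tr_{km}\left(\tr_m^{km}(\gamma)^{1/2}\,x^{2^{im}+1}\right),
\]
where the last equality uses $\tr_{km}(z^2)=\tr_{km}(z)$ to absorb the factor $2$ in the exponent as a square root on the coefficient. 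By Lemma \ref{relation}, this turns $W_{f_P}(\gamma)$ into the Weil sum of the quadratic-plus-linear function $\tr_{km}\big(\tr_m^{km}(\gamma)^{1/2}x^{2^{im}+1}+(\gamma c+1)x\big)$, and I would split the analysis according to whether $\tr_m^{km}(\gamma)$ vanishes.

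In the case $\tr_m^{km}(\gamma)=0$, I would argue exactly as in Theorem \ref{pl3} that the linear coefficient $\gamma c+1$ cannot vanish: if $\gamma c+1=0$ then $\gamma=c^{-1}\in\F_{2^m}^{*}$, whence $\tr_m^{km}(\gamma)=kc^{-1}=c^{-1}\neq 0$ because $k$ is odd, a contradiction. Hence only the nonzero linear term survives and $W_{f_P}(\gamma)=-\sum_x\chi_{km}((\gamma c+1)x)=0$. In the case $\tr_m^{km}(\gamma)\neq 0$, the quadratic part is $\tr_{km}(a\,x^{2^{im}+1})$ with $a=\tr_m^{km}(\gamma)^{1/2}\neq 0$, whose associated bilinear form has kernel defined by $a^{2^{im}}y^{2^{2im}}+ay=0$. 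Invoking Lemma \ref{binoeq} with exponent parameter $im$ over $\F_{2^{km}}$, I set $d=\gcd(im,km)=m\gcd(i,k)$ and observe that $\tfrac{km}{d}=\tfrac{k}{\gcd(i,k)}$ is odd since $k$ is odd; this places us in the first branch of the lemma, so the kernel has dimension exactly $m\gcd(i,k)$ for every such $\gamma$. By Lemma \ref{qudratic_sum} the Walsh value is then $0$ or $\pm 2^{(km+m\gcd(i,k))/2}$, giving the claimed $m\gcd(i,k)$-plateaued spectrum.

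The only delicate point I anticipate is verifying that $\tfrac{km}{d}$ is odd, so that Lemma \ref{binoeq} lands uniformly in its first case, where the solution count $2^{d}-1$ is independent of $a$; this is precisely where the hypothesis $k$ odd is essential, since it simultaneously guarantees the contradiction in the degenerate case and the oddness of $\tfrac{k}{\gcd(i,k)}$. The remaining hypotheses $m$ even, $m\not\equiv 0\pmod 3$, and $c^{(2^m-1)/3}\neq 1$ are needed only to ensure that $P$ is a permutation and should play no role in the spectral computation itself.
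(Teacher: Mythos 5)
Your proof is correct and takes essentially the same approach as the paper: the paper's own proof of Theorem \ref{pl4} performs the same Lemma \ref{symmetry} reduction to the Weil sum of $\tr_{km}\bigl(\tr_m^{km}(\gamma)^{1/2}x^{2^{im}+1}+(\gamma c+1)x\bigr)$ and then declares the rest ``similar to the previous theorem,'' which is exactly the two-case analysis you spell out (degenerate case killed by $k$ odd; generic case via Lemma \ref{binoeq} with $d=\gcd(im,km)=m\gcd(i,k)$ and $km/d=k/\gcd(i,k)$ odd). If anything, your write-up is more explicit than the paper's, and your closing observations about where $k$ odd is used and about the role of the remaining hypotheses (guaranteeing the permutation property needed for Lemma \ref{relation}) are accurate.
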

    \begin{proof}
        We claim that
        \begin{eqnarray*}
            \tr_{km}\left(\gamma \tr^{km}_{m}(x^{2(2^{im}+1)})\right)
            &=& \tr_{km}\left( \tr_{m}^{km}(\gamma) x^{2(2^{im}+1)} \right)\\
            &=& \tr_{km}\left( \tr_{m}^{km}(\gamma)^{\frac{1}{2}} x^{2^{im}+1} \right),
        \end{eqnarray*}
        and then it suffices to compute
        \begin{eqnarray*}
            W_{f_P}(\gamma) &=& -\sum_{x \in \F_{2^{km}}}{\chi_{km}\left(\tr_{m}^{km}(\gamma)^{\frac{1}{2}} x^{2^{im}+1}+(\gamma c+1)x \right)}.
        \end{eqnarray*}
        
         The proof process for the remaining part is similar to the previous theorem and is omitted here.
    \end{proof}
    \begin{Rem}
        Consistent with the notation in Theorem \ref{pl4}. Let $m=2$, $i$ be a positive integer with $\gcd(i,k)=1$, and other conditions remain the same. Then parameterized Boolean function derived from $P(x^2+x)$ with $P(x)=cx+\tr_{2}^{2k}(x^{2^{2i+1}+2})$ is $2$-plateaued (semi-bent) on $\F_{2^{2k}}$.
    \end{Rem}

    \section{Conclusions and further work}\label{conclusion}
    In this paper, the parametric construction approach was employed to construct few-valued spectrum Boolean functions from $2$-to-$1$ mappings of the form $P(x^2+x)$ where $P$ represents sparse permutation polynomials. By carefully selecting permutations with low degrees or Niho exponents, we constructed a new class of four-valued spectrum functions and seven plateaued functions, including four semi-bent subclasses and a near-bent subclass. 
     
    Our analysis demonstrates that the computability afforded by quadratic or Niho exponents enables the effective determination of Walsh spectra, directly revealing the nonlinear properties of constructed Boolean functions through their underlying algebraic structures. It is expected to obtain more valuable results by considering permutations and $2$-to-$1$ mappings with other specific properties, e.g., Dillon exponents.



 \end{document}